\documentclass[11pt,twoside]{article}

\pdfoutput=1

\usepackage{amsfonts}
\usepackage{amsmath}
\usepackage{amsxtra}
\usepackage{amsthm}
\usepackage{amssymb}
\usepackage{amscd}
\usepackage{eucal}
\usepackage{dsfont}
\usepackage{bm}
\usepackage{graphicx}

\usepackage{a4}
\usepackage{times}
\usepackage{booktabs}
\usepackage{cite}
\usepackage{microtype}
\usepackage[title]{appendix}
\usepackage{float}


\def\beq{\begin{equation}}
\def\eeq{\end{equation}}
\def\bea{\begin{eqnarray}}
\def\eea{\end{eqnarray}}

\def\beann{\begin{eqnarray*}}
\def\eeann{\end{eqnarray*}}

\let\a=\alpha \let\be=\beta \let\g=\gamma \let\de=\delta
\let\e=\varepsilon \let\z=\zeta \let\h=\eta 
\let\eps=\epsilon
 \let\k=\kappa \let\la=\lambda \let\m=\mu
\let\n=\nu \let\x=\xi \let\p=\pi \let\r=\rho \let\s=\sigma
\let\om=\omega \let\ps=\psi
   \let\Ps=\Psi

\let\Om=\Omega  
\let\La=\Lambda \let\G=\Gamma \let\D=\Delta

\let\qd=\quad  

\def\epp{\, .}
\def\epc{\, ,}

\def\tst#1{{\textstyle #1}}

\theoremstyle{plain}
\newtheorem{theorem}{Theorem}
\newtheorem*{theorem*}{Theorem}
\newtheorem{lemma}{Lemma}
\newtheorem*{lemma*}{Lemma}

\newtheorem{corollary}{Corollary}
\newtheorem*{corollary*}{Corollary}

\newtheorem*{conjecture*}{Conjecture}

\theoremstyle{definition}

\newtheorem*{remark}{Remark}
\newtheorem*{question*}{Question}

\def\2{\frac{1}{2}} \def\4{\frac{1}{4}}

\def\6{\partial}

\def\+{\dagger}

\def\<{\langle} \def\>{\rangle}

\def\CO{{\cal O}}

\def\i{{\rm i}}

\def\rd{{\rm d}}

\DeclareMathOperator{\re}{e}

\DeclareMathOperator{\sh}{sh}
\DeclareMathOperator{\ch}{ch}

\DeclareMathOperator{\cth}{cth}

\DeclareMathOperator{\tr}{tr}

\DeclareMathOperator{\supp}{supp}

\DeclareMathOperator{\End}{End}
\DeclareMathOperator{\id}{id}

\DeclareMathOperator{\card}{card}

\DeclareMathOperator{\spin}{{\mathbb S}}
\DeclareMathOperator{\spinflip}{{\mathbb J}}

 \def\Im{{\rm Im\,}}

\def\av{\mathbf{a}}
\def\bv{\mathbf{b}}
\def\cv{\mathbf{c}}

\def\fv{\mathbf{f}}

\def\kv{\mathbf{k}}

\def\tv{\mathbf{t}}
\def\uv{\mathbf{u}}
\def\vv{\mathbf{v}}

\def\xv{\mathbf{x}}
\def\yv{\mathbf{y}}
\def\zv{\mathbf{z}}

\def\fa{\mathfrak{a}}

\newcommand{\lb}{\left}
\newcommand{\rb}{\right}

\addtolength{\voffset}{-0.5cm}
\addtolength{\textwidth}{0.5cm}
\addtolength{\textheight}{0.3cm}

\pagestyle{myheadings}

\allowdisplaybreaks

\begin{document}

\thispagestyle{empty}

\begin{center}

{\Large \bf
Fourth-neighbour two-point functions of the XXZ chain and the
Fermionic basis approach}

\vspace{10mm}

{\large
Frank G\"ohmann,$^\dagger$
Raphael Kleinem\"uhl,$^\dagger$
and Alexander Wei{\ss}e$^\ast$}\\[3.5ex]
$^\dagger$Fakult\"at f\"ur Mathematik und Naturwissenschaften,\\
Bergische Universit\"at Wuppertal,
42097 Wuppertal, Germany\\[1.0ex]
$^\ast$Max-Planck-Institut f\"ur Mathematik, Vivatsgasse 7, 53111 Bonn, Germany

\vspace{40mm}

{\large {\bf Abstract}}

\end{center}

\begin{list}{}{\addtolength{\rightmargin}{9mm}
               \addtolength{\topsep}{-5mm}}
\item
We give a descriptive review of the Fermionic basis approach
to the theory of correlation functions of the XXZ quantum spin
chain. The emphasis is on explicit formulae for short-range
correlation functions which will be presented in a way that
allows for their direct implementation on a computer. Within
the Fermionic basis approach a huge class of stationary
reduced density matrices, compatible with the integrable
structure of the model, assumes a factorized form. This
means that all expectation values of local operators and
all two-point functions, in particular, can be represented
as multivariate polynomials in only two functions $\rho$
and $\omega$ and their derivatives with coefficients that
are rational in the deformation parameter $q$ of the model.
These coefficients are of `algebraic origin'. They do not
depend on the choice of the density matrix, which only impacts
the form of $\rho$ and $\omega$. As an example we work out
in detail the case of the grand canonical ensemble at
temperature $T$ and magnetic field $h$ for $q$ in the critical
regime. We compare our exact results for the fourth-neighbour
two-point functions with asymptotic formulae for $h, T = 0$
and for finite $h$ and $T$.
\end{list}

\clearpage
\section{Introduction}
The study of correlation functions of integrable quantum systems
has remained a challenge for many years now. In our quest for
a general theory we are still proceeding case by case. One
of the most thoroughly studied cases is the spin-$\2$ XXZ quantum
chain. Quantum spin chains are defined on tensor product spaces
${\cal H}_{[k, l]}$, where $k < l \in {\mathbb Z}$, $[k,l] =
\{k, k + 1, \dots, l\}$ and ${\cal H}_{[k,l]} = \bigotimes_{j = k}^l
V_j$. In the spin-$1/2$ case the local vector spaces can be
taken as $V_j = {\mathbb C}^2$. The XXZ-Hamiltonian
\begin{equation} \label{hxxz}
     H_L = J \sum_{j = - L + 1}^L \Bigl\{ \s_{j-1}^x \s_j^x + \s_{j-1}^y \s_j^y
                 + \D \bigl( \s_{j-1}^z \s_j^z - 1 \bigr) \Bigr\}
		 - \frac{h}{2} \sum_{j = - L + 1}^L \s_j^z
\end{equation}
acts on ${\cal H}_{[- L + 1, L]}$. The $\s_j^\a$, $\a = x, y, z$,
are Pauli matrices on $V_j$ and the three real parameters are
the anisotropy $\D = (q + q^{-1})/2$, the exchange interaction
$J > 0$, and the strength $h > 0$ of an external magnetic field.

$H_L$ commutes with the transfer matrix of the 6-vertex model
\cite{McWu68}. Its integrability can be traced back \cite{NirRaz14}
to the quantum group $U_q (\widehat{sl_2})$ \cite{Jimbo85,Drinfeld87,%
ToKh92} using the representation theory of the latter. Finite and
infinite dimensional representations of the quantum group and its
Borel subalgebras are also involved in the `Fermionic basis approach'
\cite{BJMST06b,BJMST08a,JMS08,BJMS09a,JMS13} to the calculation of
the correlation functions of the model which we shall briefly review.
Although this approach was devised already some time ago, we still
have many questions about its interpretation and its connection to
more traditional parts of the theory of integrable quantum systems.
We are not even sure how to properly characterize this fascinating
piece of work in a few words. Perhaps the best we can think of
is to say that it introduces a module structure on a space of
quasi-local spin operators on the infinite chain in a way compatible
with a family of generalized reduced density matrices.

In order to fill these words with meaning and to motivate the basic
notions of the Fermionic basis approach let us take a detour
and recall some general facts about the statistical mechanics of
lattice models. For a many-body system like (\ref{hxxz}) we typically
wish to calculate the correlation functions of local operators in
a macro state described by a density matrix $\r_L$. This is the
state the system relaxes to under the influence of its internal
interactions encoded in its Hamiltonian and of an additional weak
coupling to its environment, if it was initially prepared in
a state given by the experimental setup. Accordingly, $\r_L$ satisfies
the relations
\begin{equation}
     \r_L = \r_L^+ \epc \qd \r_L \ge 0 \epc \qd
     \tr_{[- L + 1, L]} \{\r_L\} = 1 \epc
\end{equation}
valid for any density matrix by definition, and the stationarity
condition
\begin{equation}
     [\r_L, H_L] = 0 \epp
\end{equation}

The fundamental assumption of statistical mechanics is that the
coupling to the environment eventually drives every many-body
system to a state that can be described by the canonical ensemble.
The corresponding canonical density matrix
\begin{equation} \label{candens}
     \r_L^{(c)} (T) = \frac{\re^{- H_L/T}}{\tr_{[- L + 1, L]} \re^{- H_L/T}}
\end{equation}
depends on a single parameter, the temperature $T$. If the interaction
with the environment is very weak, which can be the case e.g.\ in modern
cold-atom experiments, the relaxation to the canonical ensemble
may become very slow and certain transient behaviours may be observed.
For rather long times then the system behaves as if it were isolated.
The question of the relaxation of isolated quantum systems has
therefore become relevant. A natural way of asking this question
is to consider a small subsystem and inquire whether the rest of the
system can act as heat bath for the subsystem. One would say that a
system like (\ref{hxxz}) with density matrix $\r_L (t)$ thermalizes if
\begin{multline} \label{defredd}
     \lim_{t \rightarrow \infty} \lim_{L \rightarrow \infty}
          \tr_{[- L + 1, k - 1] \cup [l + 1, L]} \{\r_L (t)\} \\ =
	  \lim_{L \rightarrow \infty}
          \tr_{[- L + 1, k - 1] \cup [l + 1, L]} \{\r_L^{(c)} (T)\} =
	  D_{[k,l]} (T)
\end{multline}
for all $k, l \in {\mathbb Z}$. The operator on the right hand side
of this equation is called the reduced density matrix (of the
canonical ensemble) associated with the `chain segment'
${\cal H}_{[k,l]}$ or with the `interval' $[k,l]$.

The reduced density matrix defined in (\ref{defredd}) is a very useful
notion as it allows us to define in a sensible way a space of
observables of the infinite chain Hamiltonian associated with (\ref{hxxz}).
The Hamiltonian (\ref{hxxz}) and the canonical density matrix
(\ref{candens}) do not have a naive thermodynamic limit. Such
a limit would require to define a limiting space of states, spanned
by all eigenstates of $H_L$ which have finite excitation energies
for $L \rightarrow \infty$. Since such a construction is not at all
obvious and may depend on the details of the interaction, a better
way to proceed is to define the space of observables inductively
by means of the reduced density matrices of all chain segments.
For this purpose consider any $X_{[k,l]} \in \End {\cal H}_{[k,l]}$
and let $X_L = \id_{[- L + 1, k - 1]} \otimes X_{[k,l]} \otimes
\id_{[l + 1, L]}$ for $L$ large enough, such that $[k,l] \subset
[- L + 1, L]$. Then 
\begin{equation} \label{defexpxinf}
     \<X\>_T = \lim_{L \rightarrow \infty}
               \tr_{[- L + 1, L]} \{\r_L^{(c)} (T) X_L\}
         = \tr_{[k,l]} \{D_{[k,l]} (T) X_{[k,l]}\} \epp
\end{equation}
This equation may be interpreted as defining the action of
$X_{[k,l]}$ on an infinite chain by formally setting
\begin{equation} \label{xinf}
     X = \id_{(- \infty, k - 1]} \otimes X_{[k,l]} \otimes \id_{[l + 1, \infty)} \epp
\end{equation}
Every operator on the infinite chain, which can be represented
like this, will be called a local operator. Let $[k,l]$ be
the minimal interval for which $X$ has a representation like~%
(\ref{xinf}). Then $X_{[k,l]}$ is called the non-trivial part
of $X$, $[k, l]$ (or ${\cal H}_{[k,l]}$) is called its support,
$\supp (X)$, and $\ell (X) = \card [k,l]$ its length. $X = \id$
is the unique operator of length zero. Clearly the local operators
on the infinite chain span a vector space~$\cal W$.

If the system is initially represented by an ensemble with density matrix
$\r_L$ that thermalizes in the sense of (\ref{defredd}), the expectation
value of every local operator $X$ evolves in time to its canonical
expectation value determined by a reduced density matrix,
\begin{multline}
     \lim_{t \rightarrow \infty} \<X(t)\> =
        \lim_{t \rightarrow \infty} \lim_{L \rightarrow \infty}
	\tr_{[- L + 1, L]} \{\r_L X_L (t)\} =
        \lim_{t \rightarrow \infty} \lim_{L \rightarrow \infty}
	\tr_{[- L + 1, L]} \{\r_L (t) X_L\} \\
            = \tr_{[k,l]} \{D_{[k,l]} (T) X_{[k,l]}\} \epp
\end{multline}
We are not aware of a proof of this statement, not even for
quantum spin systems on 1d lattices, but this behaviour is quite
universally observed in experiments. One should have in mind
however, that different equilibrium ensembles are equivalent
if they produce the same reduced density matrices for $L
\rightarrow \infty$.

The above mentioned cold-atom experiments suggest that thermalization 
will not happen with the Heisenberg time evolution $X_L \mapsto
\re^{\i H_L t} X_L \re^{- \i H_L t}$ if $H_L$ is the Hamiltonian
of an integrable quantum chain such as (\ref{hxxz}). In this case
a coupling to a bath, $H_L \rightarrow H_L + H_{\rm bath}$, is
required for thermalization. The relaxation of integrable lattice
systems has been a subject of intensive debate over the past decade.
For a review covering most of the above discussion and extending
it in several directions see \cite{EsFa16}. Some of the natural
questions that arise in this context are: Do isolated integrable
systems relax at all? What is the space of all reduced density matrices
a given integrable system can relax to? What are reasonable classes
of initial density matrices of many-body systems that can be
realized in experiments? Experimentalists have provided an answer
to the latter question. They can realize so-called quenches. For
these the initial macro state is assumed to be represented by the
projector onto the ground state sector of the Hamiltonian for
a certain set of interaction parameters (like a certain value
of $\D$ in (\ref{hxxz})) that are then suddenly changed at initial
time $t = 0$. The other questions are still not fully answered, at
least not with sufficient rigour. It is believed that integrable
many-body systems do relax even if they are isolated \cite{ViRi16,EsFa16}.
There is also a certain amount of evidence that the reduced density
matrices that describe the system after relaxation are related to
certain generalized Gibbs ensembles \cite{RDYO07,ViRi16,EsFa16}.

Consider a system with $N$ local conserved charges $\{H^{(n)}_L\}_{n=1}^N$
that mutually commute among each other, one of which, $H^{(1)}_L$ say,
is its Hamiltonian. Then a natural generalization of the canonical
density matrix (\ref{candens}) is
\begin{equation} \label{gencandens}
     \r_L^{(N)} (\be_1, \dots, \be_N) =
        \frac{\re^{- \sum_{n=1}^N \be_n H_L^{(n)}}}
	     {\tr_{[- L + 1, L]} \{\re^{- \sum_{n=1}^N \be_n H_L^{(n)}}\}}
\end{equation}
which satisfies a maximum entropy condition under the constraint
that the ensemble averages of the conserved charges $H_L^{(n)}$
are fixed. The density matrices $\r_L^{(N)}$ generate a sequence
of reduced density matrices
\begin{equation} \label{defgenredd}
     D_{[k,l]}^{(N)} (\be_1, \dots, \be_N) =
        \lim_{L \rightarrow \infty}
        \tr_{[- L + 1, k - 1] \cup [l + 1, L]} \{\r_L^{(N)} (\be_1, \dots, \be_N)\}
\end{equation}
which describe the infinite system in formally the same way as
$D_{[k,l]} (T)$ in the canonical case.

We shall call a quantum spin chain integrable, if its Hamiltonian
$H_L$ commutes with a commuting family of transfer matrices $t_L (\z)$
with spectral parameter $\z$, whose local commutativity condition
is the Yang-Baxter equation. Typically, for $L \rightarrow \infty$,
integrable systems have infinitely many local conserved charges generated
by the function $\ln t_L^{-1} (0) t_L (\z)$. For this reason $N$ in
(\ref{defgenredd}) is often formally sent to infinity in the physics
literature. This may be interpreted in the following way. Suppose
$\bigl( (\be_{n, N})_{n=1}^N\bigr)_{N \in {\mathbb N}}$ is a sequence
of real numbers such that the limits $\lim_{N \rightarrow \infty}
\be_{n, N} = \be_n$ and $\lim_{N \rightarrow \infty}
D_{[k,l]}^{(N)} (\be_{1, N}, \dots, \be_{N, N}) = D_{[k,l]}
\bigl((\be_n)_{n \in {\mathbb N}}\bigr)$ exist for every fixed
interval $[k,l]$. This would then define a sequence of reduced
density matrices in much the same way as in the canonical case.
Still, the existence of such a limit, a clear description of the
space of admissible sequences $(\be_n)_{n \in {\mathbb N}}$ and
the `completeness of the set of local operators' \cite{IDWCEP15}
are questions that will be hard to answer in full generality.

On the other hand, there is a huge class of reduced density
matrices that appeared in studies of the correlation functions
of integrable lattice models and is compatible with their integrable
structure. We assume that the reader is familiar with the
graphical representation of vertex models (otherwise please see
e.g.\ \cite{Goehmann20}). The generalized reduced density matrices
we are referring to can be graphically represented as
\begin{equation} \label{defdensmatapp}
     D_{[k,l]}^{(N)} (\k) \: = \:
        \text{\raisebox{-64pt}{\includegraphics[width=.54\textwidth]{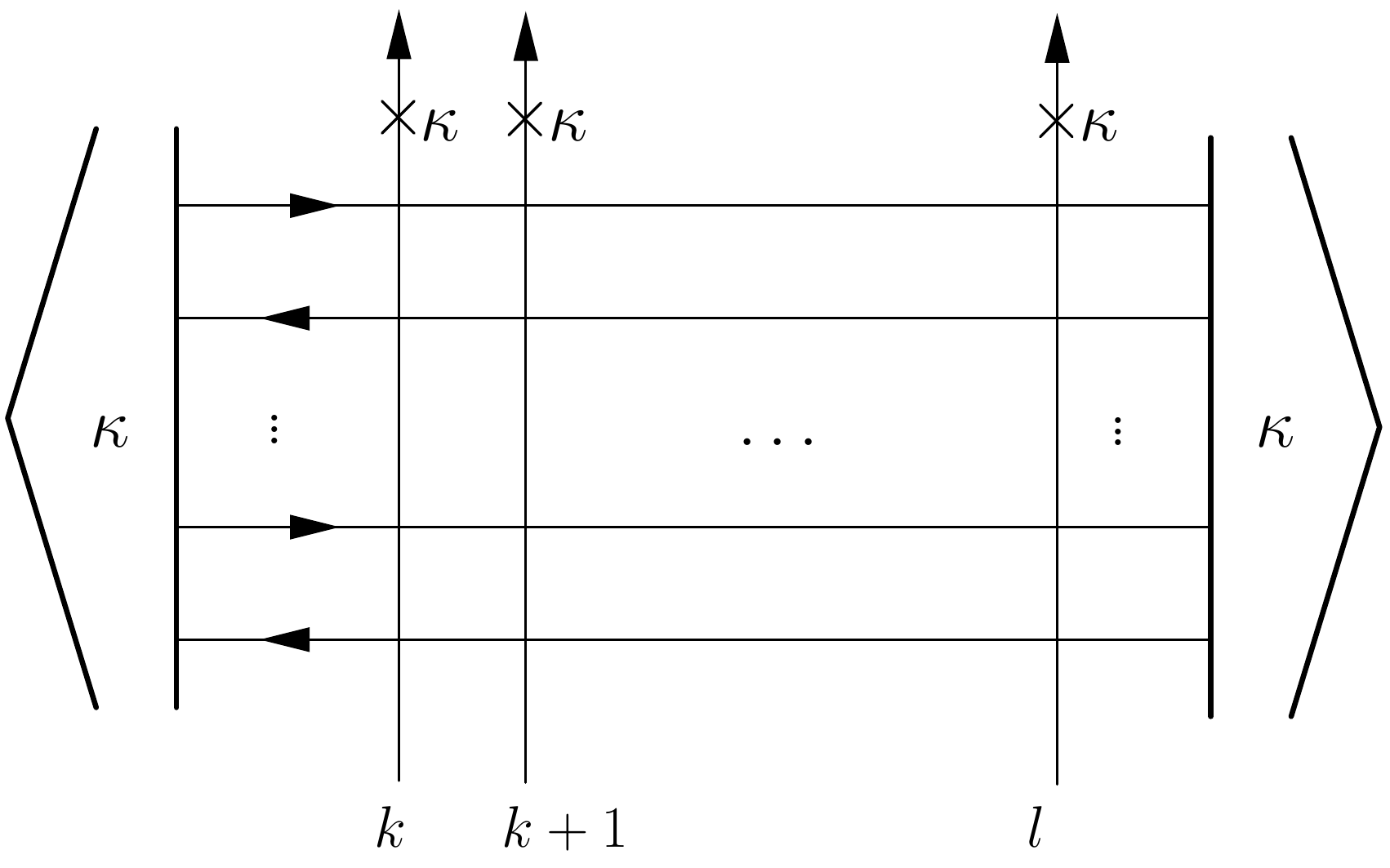}}}
	\times \frac{\La^{k - l - 1} (0|\k)}{\<\k|\k\>} \epp
\end{equation}
Here every line crossing represents a $U(1)$-symmetric $R$-matrix,
the crosses stand for the corresponding local `gauge fields' (or
twists), and $|\k\>$ is an eigenstate of the column-to-column transfer
matrix with eigenvalue $\La(\z|\k)$. The index `$N$' refers to $N$ horizontal
spectral parameters $(\m_{n, N})_{n=1}^N$. The construction of the
generalized reduced density matrix (\ref{defdensmatapp}) offers
considerable freedom as the horizontal spectral parameters as well as their
number are arbitrary. This freedom can be used to realize the reduced
density matrix of the canonical ensemble \cite{GKS04a,GKS05} or, with
little more effort \cite{KlSa02}, reduced density matrices of the form
(\ref{gencandens}) (see \cite{FaEs13,Pozsgay13} for an application to
quenches in the XXZ chain). Another important class of reduced density
matrices which can be represented by (\ref{defdensmatapp}) are those
connected with projectors onto eigenstates $|n\>$ of $t_L (\z)$, i.e.\
reduced density matrices of the form $D_{[k,l]} (L) = \tr_{[- L + 1, k - 1]
\cup [l + 1, L]} \{|n\>\<n|\}$.

At least for the XXZ chain, considered from now on in this work, the
algebraic Bethe Ansatz \cite{STF79} and Slavnov's scalar product formula
\cite{Slavnov89} can be used in order to rewrite (\ref{defdensmatapp})
as an $(l - k + 1)$-fold integral. This technique was pioneered
for the ground state of the infinite chain in \cite{KMT99b}. In
\cite{GKS04a,GKS05} it was shown that similar techniques when
combined with the methodology of non-linear integral equations
\cite{Kluemper93} work in the finite temperature case as well. An
exploration of the case of excited states at finite length was provided
in \cite{Pozsgay17}. In \cite{BoKo01,BGKS06} it was shown that the
multiple integrals factorize which makes it possible, in principle,
to evaluate the short-distance correlation functions of the model.
The deeper reason for the factorization was unraveled in a remarkable
series of works \cite{BJMST04a,BJMST04b,BJMST05b,BJMST06b,BJMST08a}
that culminated in the discovery of the Fermionic basis and in
the proof of what we like to call the JMS theorem \cite{JMS08}.

Note that, in general, equation (\ref{defdensmatapp}) does not
define a density matrix, since the expression on the right hand
side is not necessarily Hermitian or positive. This is why
we call $D_{[k,l]}^{(N)} (\k)$ a generalized reduced density
matrix. On the other hand, it still satisfies the normalization
condition $\tr_{[k,l]} \bigl\{D_{[k,l]}^{(N)} (\k)\bigr\} = 1$ and, more
generally, the reduction relations
\begin{equation}
     \tr_k \bigl\{D_{[k,l]}^{(N)} (\k) \bigr\}= D_{[k + 1,l]}^{(N)} (\k) \epc \qd
     \tr_l \bigl\{D_{[k,l]}^{(N)} (\k) \bigr\}= D_{[k,l - 1]}^{(N)} (\k) \epp
\end{equation}
This is enough to define generalized expectation values of local operators
on the infinite chain. An interesting question would be to describe
the space of all reduced density matrices of the form (\ref{defdensmatapp}),
which would mean to characterize all admissible sets of horizontal
spectral parameters that render $D_{[k,l]}^{(N)} (\k)$ Hermitian and
positive.

We have included this somewhat lengthy reflection on the equilibration
of integrable quantum systems and on reduced density matrices in order
to motivate some of the central notions of the Fermionic basis approach
to the correlation functions of the XXZ chain that will hopefully
appear rather natural now. These are the space of quasi-local operators
and a further generalization of the reduced density matrices. Following
\cite{BJMST08a} we define the spin operator
\begin{equation} \label{szkl}
     S_{[k,l]} = \2 \sum_{j=k}^l \s_j^z \in \End {\cal H}_{[k,l]}
\end{equation}
and its formal extension to the infinite chain 
\begin{equation}
     S(k) = S_{(- \infty, k]} \otimes \id_{[k+1, \infty)} \epp
\end{equation}
This allows us to deform the notion of local operators introduced
in (\ref{xinf}). We fix $\a \in {\mathbb C}^\times$. For any local
operator $X$ with non-trivial part $X_{[k,l]}$ we set
\begin{equation} \label{alphaxinf}
     X^{(\a)} = q^{2 \a S(k-1)} X \epp
\end{equation}
We call $X^{(\a)}$ a quasi-local operator with tail $q^{2 \a S(k-1)}$.
The quasi-local operators span a vector space ${\cal W}_\a$. The
concepts of the non-trivial part and of the length of operators carry
over from ${\cal W}$ to ${\cal W}_\a$. The generalized density matrix
(\ref{defdensmatapp}) can be deformed in a way that is compatible
with the $\a$-deformation of the vector space,
\begin{equation} \label{gendensmat}
     D_{[k,l]}^{(N)} (\k, \a) \: = \:
        \text{\raisebox{-64pt}{\includegraphics[width=.54\textwidth]{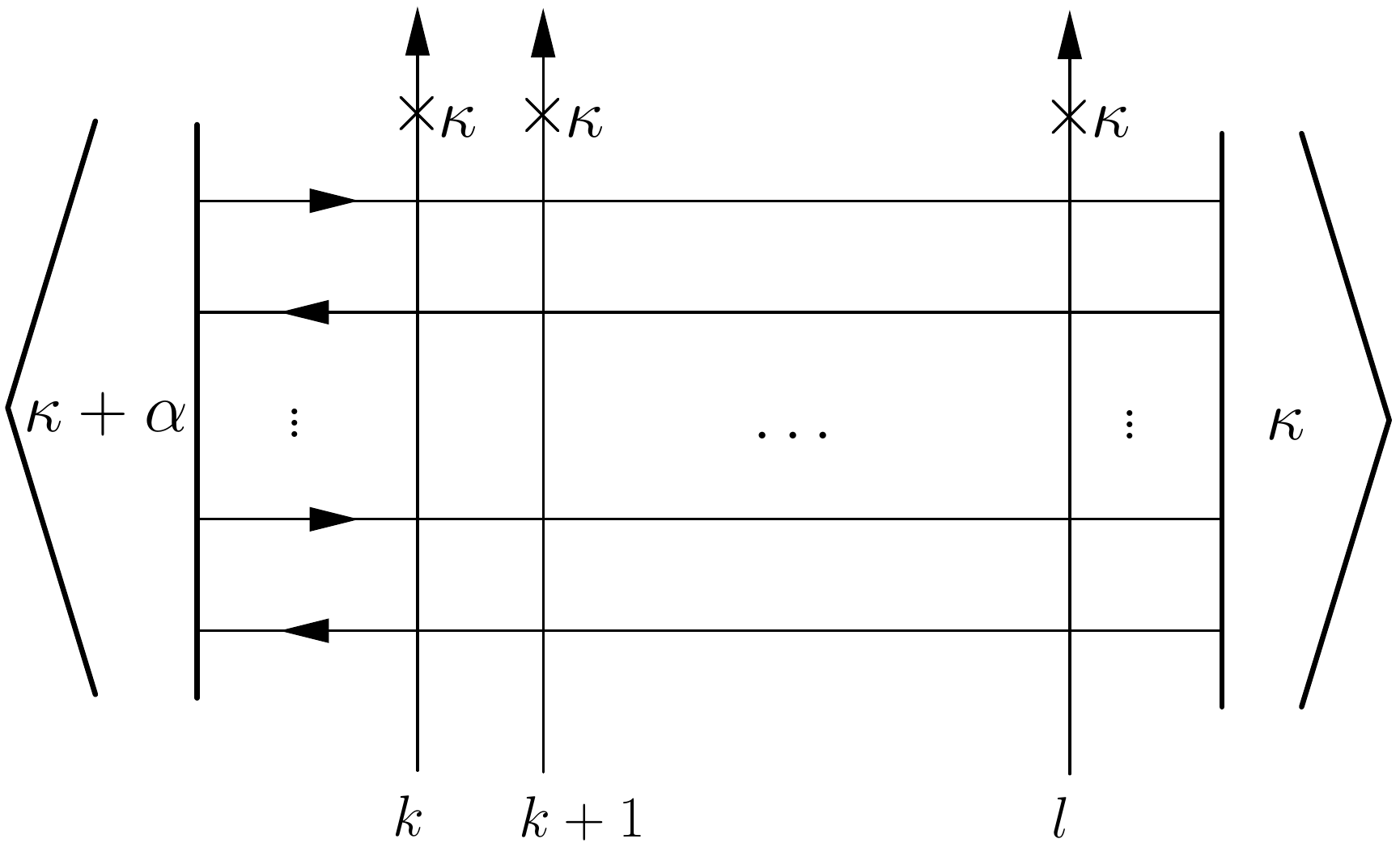}}}
	\times \frac{\La^{k - l - 1} (0|\k)}{\<\k + \a|\k\>} \epp
\end{equation}
Here arbitrary finite-dimensional representations on the horizontal
lines will be admitted \cite{JMS08}. The most important generalization
as compared to (\ref{defdensmatapp}) is, however, that we consider
eigenstates $|\k\>$, $|\k + \a\>$ of twisted transfer matrices with
different twists, $\k$ and $\k + \a$, to the left and to the right.
Interestingly, this object can still be represented (see appendix of
\cite{BoGo09}) as a multiple integral of the same form as for the
reduced density matrix pertaining to the ground state \cite{JMMN92,%
JiMi96,KMT99b} or to the canonical ensemble at finite temperature
\cite{GKS05}.

$ D_{[k,l]}^{(N)} (\k, \a)$ can be used \cite{JMS08} to define
generalized ensemble averages $Z^\k: {\cal W}_\a \rightarrow
{\mathbb C}$,
\begin{multline} \label{defzkappa}
     Z^\k [X^{(\a)}] =
        \lim_{L \rightarrow \infty}
	\frac{\tr_{[- L + 1, L]} \bigl\{D_{[- L + 1, L]}^{(N)} (\k, \a) \,
              q^{2 \a S_{[- L + 1, k - 1]}} \otimes X_{[k,l]}
	      \otimes \id_{[l + 1, L]} \bigr\}}
             {\tr_{[- L + 1, L]} \{D_{[- L + 1, L]}^{(N)} (\k, \a) \,
	      q^{2 \a S_{[- L + 1, k - 1]}} \otimes \id_{[k, L]}\}} \\[1ex]
     = \tr_{[k, l]} \bigl\{D_{[k, l]}^{(N)} (\k, \a) \, X_{[k,l]} \bigr\} \epp
\end{multline}
This clearly generalizes (\ref{defexpxinf}). Our above discussion
shows that $Z^\k$ can be employed to realize the canonical ensemble and
ground state averages, averages with respect to excited states of
finite chains and averages with respect to generalized Gibbs ensembles
including any number of local conserved charges. It is tempting to
speculate that  all ensembles an isolated XXZ chain can relax to have
reduced density matrices that can be represented by means of $Z^\k$ (in
general a limit $N \rightarrow \infty$ will have to be considered).

The spin provides a natural grading of the space ${\cal W}_\a$. For the
adjoint action of the spin operator (\ref{szkl}) we shall write
${\mathbb S}_{[k,l]} = [S_{[k,l]}, \cdot \:]$. By definition
$X_{[k,l]} \in \End {\cal H}_{[k,l]}$ has spin $s \in {\mathbb Z}$ if
${\mathbb S}_{[k,l]} X_{[k,l]} = s X_{[k,l]}$. This carries over to
the space ${\cal W}_\a$. Let $X^{(\a)} \in {\cal W}_\a$ with non-trivial
part $X_{[k,l]}$. Define ${\mathbb S}: {\cal W}_\a \rightarrow
{\cal W}_\a$ by ${\mathbb S} X^{(\a)} = [S(\infty), X^{(\a)}]$.
$X^{(\a)} \in {\cal W}_\a$ has spin $s$ if ${\mathbb S} X^{(\a)}
= s X^{(\a)}$. The spin of $X^{(\a)}$ is equal to the spin of its
non-trivial part. Let ${\cal W}_{\a, s} \subset {\cal W}_\a$ be the
subspace of all operators with spin~$s$. Clearly ${\cal W}_\a
= \bigoplus_{s \in {\mathbb Z}} {\cal W}_{\a, s}$ and
$Z^\k [{\cal W}_{\a, s}] = 0$ for all $s \ne 0$. Thus, only the
subspace ${\cal W}_{\a, 0}$ is relevant for the calculation of
correlation functions.

As we shall see, the operators that generate the Fermionic basis
change the tail and the spin of a quasi-local operator. They can
be constructed as acting on the space
\begin{equation} \label{defwalpha}
     {\cal W}^{(\a)} = \bigoplus_{s \in {\mathbb Z}} {\cal W}_{\a - s, s}
\end{equation}
rather than on ${\cal W}_\a$. Note that ${\cal W}_{\a, 0} = {\cal W}_\a
\cap {\cal W}^{(\a)}$. Iterating the above definition we say that an
operator $\xv \in \End {\cal W}^{(\a)}$ has spin $s$ if $[{\mathbb S}, \xv]
= s \xv$, and similarly in the finite case, where $\xv_{[k,l]} \in
\End \End {\cal H}_{[k,l]}$ has spin $s$, if $[{\mathbb S}_{[k,l]},
\xv_{[k,l]}] = s \xv_{[k,l]}$.

In \cite{BJMST06b,BJMST08a} the authors have constructed a
module structure on ${\cal W}^{(\a)}$ that is generated by the
coefficients of the following formal series,
\begin{subequations}
\label{modes}
\begin{align}
     \tv^* (\z) & = \sum_{p=1}^\infty (\z^2 - 1)^{p-1} \tv^*_p \epc \\
     \xv^* (\z) & = \z^{\a s(\xv^*) + 2} \sum_{p=1}^\infty (\z^2 - 1)^{p-1} \xv^*_p \epc \\
     \xv (\z) & = \z^{\a s(\xv)} \sum_{p=0}^\infty \frac{\xv_p}{(\z^2 - 1)^p} \epc
     \label{annihimodes}
\end{align}
\end{subequations}
where $\xv^* = \bv^*, \cv^*$ and $\xv = \bv, \cv$. The coefficients
$\tv^*_p, \xv^*_p, \xv_p \in \End {\cal W}^{(\a)}$ will be called
the modes and the formal series the `mode expansions' of the Fermionic
operators. The modes have definite spins
\begin{equation}
     s(\cv_p^*) = s(\bv_p) = - 1 \epc \qd
     s(\tv_p^*) = 0 \epc \qd
     s(\bv_p^*) = s(\cv_p) = 1 \epp
\end{equation}
Moreover, they exhibit the following block structure,
\begin{equation}
     \xv_p : {\cal W}_{\a - s, s} \rightarrow
        {\cal W}_{\a - s - s(\xv_p), s + s(\xv_p)} \epp
\end{equation}
This is what we meant when we said that the operators that generate
the Fermionic basis change the tail and the spin of a quasi-local
operator. The effect of the modes on the length of operators is also
well controlled. For any $X^{(\a)} \in {\cal W}^{(\a)}$ we have the
inequalities
\begin{align}
     \ell \bigl(X^{(\a)}\bigr) \le
     \ell \bigl(\xv_p^* \, X^{(\a)}\bigr) \le  \ell \bigl(X^{(\a)}\bigr) + p \epc \qd
     \ell \bigl(\xv_p \, X^{(\a)}\bigr) \le  \ell \bigl(X^{(\a)}\bigr) \epc
\end{align}
where $\xv^* = \tv^*, \bv^*, \cv^*$ and $\xv = \bv, \cv$. Moreover,
\begin{equation} \label{annihipropmodes}
     \xv_p \, X^{(\a)} = 0
\end{equation}
for $\xv = \bv, \cv$ if $\ell \bigl(X^{(\a)}\bigr) < p$, whence the name
annihilation operators for $\bv$ and $\cv$. The operators  $\xv^* = \tv^*,
\bv^*, \cv^*$ will be called creation operators. They owe their name to 
the following
\begin{theorem} \label{thm:basis}
Boos et al.\ \cite{BJMS09a}. The set
\begin{equation} \label{deffb}
     {\cal F} =
        \biggl\{{\bm \tau}^m \tv_{p_1}^* \dots \tv_{p_j}^*
                \bv_{q_1^+}^* \dots \bv_{q_k^+}^* \cv_{q_1^-}^* \dots \cv_{q_k^-}^* \,
		q^{2 \a S(0)} \bigg|
		\begin{array}{l}
		{\scriptstyle m \in {\mathbb Z};\: j, k \in {\mathbb Z}_{\ge 0};} \\[-.5ex]
		{\scriptstyle p_1 \ge \dots \ge p_j \ge 2}, \:
		{\scriptstyle q_1^\pm > \dots > q_k^\pm \ge 1}
		\end{array}
		\biggr\}
\end{equation}
is a basis of ${\cal W}_{\a, 0}$. Here ${\bm \tau} = \tv_1^*/2$ is the
right shift operator.
\end{theorem}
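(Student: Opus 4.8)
The plan is to establish the two defining properties of a basis separately: linear independence of the elements of ${\cal F}$, and the fact that they span ${\cal W}_{\a,0}$. The main inputs are the algebraic relations satisfied by the modes, established in \cite{BJMST06b,BJMST08a} — the creation operators $\bv^*_q,\cv^*_q$ anticommute among themselves and so generate a fermionic (Clifford) algebra, while the $\tv^*_p$ commute with everything — together with the length inequalities and the annihilation property (\ref{annihipropmodes}) quoted above. Since $s(\bv^*)=1$, $s(\cv^*)=-1$ and $s(\tv^*)=0$, any monomial with equally many $\bv^*$ and $\cv^*$ preserves spin, and by the block structure it maps the spin-$0$ vacuum $q^{2\a S(0)}\in{\cal W}_{\a,0}$ back into ${\cal W}_{\a,0}$; the factor ${\bm \tau}^m=(\tv^*_1/2)^m$ accounts for rigid translations of the support, while the remaining creation operators ($\tv^*_p$ with $p\ge 2$, $\bv^*_q$, $\cv^*_q$) build up the internal structure.

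First I would prove linear independence. Translations are separated off by the grading of ${\cal W}_{\a,0}$ by the position of the support, with respect to which ${\bm \tau}$ is the degree-one shift; this reduces the problem to a single translation sector. Within such a sector one exploits the annihilation operators $\bv_q,\cv_q$: by (\ref{annihipropmodes}) they kill the vacuum, of length $0<q$, and by the canonical anticommutation relations $\{\bv_p,\bv^*_q\}$, $\{\cv_p,\cv^*_q\}$ they can be commuted to the right through the creation operators, each contraction producing an explicit scalar. Choosing the annihilation monomial dual to a given creation monomial then extracts its coefficient from any vanishing linear combination, so the $\bv^*,\cv^*$ part is a genuine wedge basis — this is exactly why the ordering there is strict, $q_1^\pm>\dots>q_k^\pm$. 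The commuting $\tv^*_p$ carry no such contraction partner; their algebraic independence, and hence the weak ordering $p_1\ge\dots\ge p_j\ge 2$, is handled by passing to the associated graded of the length filtration, on which they act freely.

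It remains to show that ${\cal F}$ spans ${\cal W}_{\a,0}$, which I expect to be the main obstacle. The natural route is a dimension count. One filters ${\cal W}_{\a,0}$ by length and, working modulo translations, computes the dimension of the graded pieces from the fact that the spin-$0$ operators supported on $n$ consecutive sites span a space of dimension $\binom{2n}{n}$; this yields a generating function for $\dim\mathrm{gr}\,{\cal W}_{\a,0}$. On the other side one counts the monomials of ${\cal F}$ in each degree, which produces a fermionic-type product over the modes. The heart of the matter is the combinatorial identity equating these two generating functions: combined with the linear independence already in hand, equality of the graded dimensions forces the elements of ${\cal F}$ to span. The two delicate points are (i) identifying the leading, i.e.\ maximal length-raising, part of each creation operator and controlling how the mode index relates to the length increment, so that the ordered monomials acquire well-defined and linearly independent principal symbols in the associated graded, and (ii) proving the character identity itself; alternatively, step (ii) can be replaced by an induction on the length together with a reduction lemma that expresses the top-length component of an arbitrary spin-$0$ operator as a creation monomial applied to operators of strictly smaller length.
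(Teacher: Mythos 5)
First, a point of orientation: the paper does not prove Theorem~\ref{thm:basis} at all — it is imported verbatim from Boos, Jimbo, Miwa and Smirnov \cite{BJMS09a}, and the completeness proof there occupies a separate paper. So your proposal cannot be matched against an internal argument; it has to stand on its own, and measured that way it is a plausible skeleton with genuine gaps exactly at the two places you yourself flag as ``delicate''. The fermionic half of your independence argument is fine in principle: the modes $\bv_p,\cv_p$ with $p\ge 1$ kill the vacuum by (\ref{annihipropmodes}), are canonically dual to $\bv_p^*,\cv_p^*$, and anticommute with everything else, so a downward induction on the number $k$ of fermion pairs (you need this induction — a $k$-fold dual applied to a monomial with more than $k$ pairs leaves surviving creation operators, so the extraction is only triangular, not diagonal) reduces everything to independence of the $\tv^*$-monomials. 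But precisely there, and in your spanning count, you assert two things that follow from nothing you quote and that constitute the actual content of the theorem: (i) that the creation operators have well-defined, freely acting principal symbols with respect to the length filtration — the inequalities $\ell(X)\le\ell\bigl(\xv_p^*X\bigr)\le\ell(X)+p$ are one-sided, so the top graded component may vanish, and for $\tv_1^*=2{\bm \tau}$ it always does, since the shift raises length by zero; and (ii) that ``translations are separated off by the grading by the position of the support''. No such grading exists on ${\cal W}_{\a,0}$: ${\bm \tau}\,q^{2\a S(0)}=q^{2\a S(1)}=q^{2\a S(0)}q^{\a\s_1^z}$, so ${\bm \tau}\,q^{2\a S(0)}-q^{2\a S(0)}$ is proportional to the length-one operator $q^{2\a S(0)}\bigl(q^{\a\s_1^z}-1\bigr)$ even though both summands have length zero. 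This is not a technicality — it is exactly how the basis produces $\s_1^z$, which no monomial in $\tv^*_{p\ge2},\bv^*,\cv^*$ of ``index-sum one'' could account for; your graded count at length one would give $0$ against the true dimension $1$. The character identity therefore has to be set up for ${\cal W}_{\a,0}$ as a module over the Laurent ring ${\mathbb C}[{\bm \tau},{\bm \tau}^{-1}]$, with exact (not merely bounded) control of the length increments of every mode, and neither your step (i) nor (ii) supplies this.

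Two further remarks for comparison with \cite{BJMS09a}. The published analysis works with the finite-dimensional windows ${\cal W}_{\a,0,[1,n]}$ (of dimension $\binom{2n}{n}$ in the spin-zero sector), where the monomials with $j+2k\le n$ and all mode indices at most $n$ — crucially \emph{including} $\tv_1^*$ factors, which is how the shift-generated elements enter the count — are shown to generate, and independence and counting are settled window by window; the present paper quotes this outcome in the discussion preceding (\ref{ommodes}). There is also an independence mechanism you did not consider and which is closer to the spirit of the subject: by Corollary~\ref{cor:baseexp}, the values $Z^\k$ takes on distinct elements of ${\cal F}$ are distinct monomials in the Taylor jets of $\r$ and $\om$ at $\z=\x=1$, so once one shows that the family of realizations of $Z^\k$ makes these jets sufficiently free parameters, linear independence follows without any graded or leading-symbol analysis. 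Establishing that freeness in the homogeneous case is again nontrivial — but it indicates that your plan, even if completed, would be a different (and probably harder) route than the one the literature actually took.
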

The interpretation of this theorem is that ${\cal W}_{\a, 0}$ is
a module generated by the action of the creation operators on
a Fock vacuum $q^{2 \a S(0)}$. The creation operators $\tv_p^*$
are central. They commute among themselves and with all other
creation and annihilation operators. The modes $\bv_p^*$, $\bv_q$
and $\cv_p^*$, $\cv_q$ are two sets of creation and annihilation
operators of Fermions. Their mutual anticommutators all vanish
except for
\begin{equation}
     [\bv_p, \bv_q^*]_+ = \de_{p,q} \epc \qd [\cv_p, \cv_q^*]_+ = \de_{p,q} \epp
\end{equation}
For this reason $\cal F$ is called the Fermionic basis.

The commutation and anticommutation rules of the modes easily
follow from similar relations among the Fermionic operators.
\begin{equation} \label{commutism}
     [\tv^* (\z_1), \xv (\z_2)] = 0 \epc \qd \xv = \tv^*, \bv^*, \cv^*, \bv, \cv \epc
\end{equation}
while all anticommutators among $\bv^* (\z)$, $\cv^* (\z)$, $\bv (\z)$
and $\cv (\z)$ are vanishing, except for
\begin{equation} \label{anticommutism}
     [\bv (\z_1), \bv^* (\z_2)]_+ = - \ps (\z_2/\z_1, \a) \epc \qd
     [\cv (\z_1), \cv^* (\z_2)]_+ = \ps (\z_1/\z_2, \a) \epc
\end{equation} 
where
\begin{equation}
     \ps(\z, \a) = \frac{\z^\a (\z^2 + 1)}{2(\z^2 - 1)} \epp
\end{equation}
The proof of the commutation and anticommutation relations
for the mode expansions is probably the most involved part 
of the construction of the Fermionic basis. It can be found
in \cite{BJMST08a} and \cite{JMS13}.

The construction of the creation and annihilation operators in
\cite{BJMST08a} is rather explicit. These operators are constructed
from finite building blocks whose action is inductively extended to the
infinite chain. This is possible due to certain `reduction properties'
of the finite building blocks. We shall review the construction
of these finite building blocks and the reduction properties in the
following section. We shall write them in a form that is suitable
for their construction by means of computer algebra programs.
This will allow us to obtain explicit expressions for the
fourth-neighbour two-point functions.

The most remarkable property of the creation and annihilation
operators is their compatibility with the functional $Z^\k$
described above. It reveals itself in the following fundamental
theorem due to Jimbo, Miwa and Smirnov.
\begin{theorem} \label{theo:jms}
JMS theorem \cite{JMS08}. For any given realization of the functional
$Z^\k$ defined in (\ref{defzkappa}) there are two functions
$\r (\z)$ and $\om (\z, \x; \a)$ such that, for every $X^{(\a)} \in
{\cal W}_\a$,
\begin{subequations}
\begin{align}
     Z^\k \bigl\{\tv^* (\z) X^{(\a)}\bigr\} &
        = 2 \r (\z)  Z^\k \bigl\{X^{(\a)}\bigr\} \epc \\[1ex]
     Z^\k \bigl\{\bv^* (\z) X^{(\a + 1)}\bigr\} &
        = \int_\G \frac{\rd \x^2}{2 \p \i \x^2} \:
	     \om (\z, \x; \a) Z^\k \bigl\{\cv (\x) X^{(\a + 1)}\bigr\} \epc \\[1ex]
     Z^\k \bigl\{\cv^* (\z) X^{(\a - 1)}\bigr\} &
        = - \int_\G \frac{\rd \x^2}{2 \p \i \x^2} \:
	     \om (\x, \z; \a) Z^\k \bigl\{\bv (\x) X^{(\a - 1)}\bigr\}
\end{align}
where $\G$ is a simple closed contour around $\x^2 = 1$.
\end{subequations}
\end{theorem}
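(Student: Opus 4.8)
The plan is to prove the three relations by reducing the infinite-chain statements to the finite building blocks out of which the Fermionic operators are assembled, and then to exploit the multiple-integral representation of the generalized density matrix $D^{(N)}_{[k,l]}(\k,\a)$ together with the functional equations it satisfies (a reduced form of the quantum Knizhnik--Zamolodchikov (qKZ) equation). By Theorem~\ref{thm:basis}, every element of ${\cal W}_{\a,0}$ is obtained by applying creation operators to the Fock vacuum $q^{2\a S(0)}$, and the modes act in a length-controlled way; it therefore suffices to verify each identity by induction on $\ell(X^{(\a)})$, with the vacuum as base case. Throughout one works on ${\cal W}^{(\a)}$ rather than ${\cal W}_\a$, since the spin shifts carried by $\bv^*,\cv^*$ require the grading of (\ref{defwalpha}).

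First I would treat the $\tv^*$ relation, which is the mildest. Because $\tv^*(\z)$ is central — it commutes with all creation and annihilation operators by (\ref{commutism}) — its action under $Z^\k$ cannot entangle with the factorized structure of the functional and must reduce to multiplication by a scalar. I would \emph{define} $2\r(\z) := Z^\k\{\tv^*(\z)\,q^{2\a S(0)}\}$ and then show, using the commutativity of $\tv^*(\z)$ with the remaining creation operators together with the reduction relations of $D^{(N)}_{[k,l]}(\k,\a)$, that the same scalar factors out for arbitrary $X^{(\a)}$. Since $\tv^*(\z)$ is associated with the transfer matrix (the generating function of the local integrals of motion), its expectation is exactly the one-point density $\r$, and centrality propagates this through the module.

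The substantial work is in the $\bv^*$ and $\cv^*$ relations, and here I expect the main obstacle to lie. The strategy is to commute the creation operator to the right until it meets an annihilation operator. Using the anticommutators (\ref{anticommutism}), every elementary exchange of $\bv^*(\z_2)$ past $\bv(\z_1)$ produces the c-number $-\ps(\z_2/\z_1,\a)$, whose only singularity sits at $\z^2=1$; this is precisely the pole encircled by the contour $\G$. When $\bv^*(\z)$ is inserted into the trace defining $Z^\k$, its `free leg' must be contracted against the density matrix, and the claim is that each such contraction is governed by one universal two-point function $\om(\z,\x;\a)$, independent of $X^{(\a)}$. I would extract $\om$ from the action on a single-Fermion state, namely from $Z^\k\{\bv^*(\z)\cv^*(\x)\,q^{2(\a+1)S(0)}\}$, and then argue by the induction hypothesis that all higher contractions reassemble into the single contour integral against $\cv(\x)$ (respectively $\bv(\x)$ for $\cv^*$). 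The sign and the order reversal $\z\leftrightarrow\x$ in the $\cv^*$ relation track the opposite spin of $\cv^*$ and the asymmetry of the two anticommutators in (\ref{anticommutism}).

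The hard part is establishing the universality and the explicit form of $\om$: one must show that the contractions generated by the density matrix do not depend on which basis element $X^{(\a)}$ the operator acts on, and that they truncate to a \emph{single} integral rather than a sum over many contractions. This is where the reduced qKZ equation for $D^{(N)}_{[k,l]}(\k,\a)$ enters decisively, forcing the dressing of the extra horizontal line to factor through one scalar kernel. I would derive a linear integral equation for $\om$ whose kernel is fixed by the $R$-matrix data and whose inhomogeneity is built from $\r$, and check that its solution is compatible with the reduction relations and the normalization $\tr\{D\}=1$. Verifying that this $\om$ closes the induction — that is, that no residual $X^{(\a)}$-dependence survives — is the crux of the argument.
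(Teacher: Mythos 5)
The paper itself does not prove Theorem~\ref{theo:jms}: it is imported verbatim from Jimbo, Miwa and Smirnov \cite{JMS08}, with the explicit remark that even the definition of $\om$ is too involved to reproduce. So your proposal must be measured against the proof in \cite{JMS08}, and there it falls short in two essential places.

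First, your argument for the $\tv^*$ relation is circular. Centrality of $\tv^*(\z)$, i.e.\ (\ref{commutism}), lets you move $\tv^*(\z)$ past other creation operators, but it does not make $Z^\k\{\tv^*(\z)\,\cdot\,\}$ proportional to $Z^\k\{\cdot\}$: the operator $\tv^*(\z)\,q^{2\a S(0)}$ is not a scalar multiple of $q^{2\a S(0)}$, so "propagating the vacuum value through the module" tacitly assumes that $Z^\k$ evaluates monomials of creation operators in factorized form --- which is Corollary~\ref{cor:zkappagenfns}, a \emph{consequence} of the theorem, not an available input. The true origin of the first relation is the structure of the functional itself: inserting the adjointly acting transfer matrix under the trace against $D^{(N)}_{[k,l]}(\k,\a)$ and using that $\<\k+\a|$ and $|\k\>$ are eigenvectors of the twisted column-to-column transfer matrices produces the eigenvalue ratio (\ref{defrho}); no module-theoretic argument can see this, since $\r$ depends on the realization of $Z^\k$ while the algebra does not.

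Second, for the $\bv^*$ and $\cv^*$ relations the contraction mechanism you describe does not exist as stated: for a generic $X^{(\a+1)}$ there is no annihilation operator for $\bv^*(\z)$ to anticommute into, and (\ref{anticommutism}) constrains only the operator algebra --- it says nothing about how the functional $Z^\k$ contracts a lone $\bv^*$. That a single universal kernel $\om(\z,\x;\a)$, independent of $X^{(\a+1)}$, governs this contraction is precisely the content of the theorem, and your proposal defers exactly this point to the assertion that a reduced qKZ equation "forces" factorization through one scalar kernel. In the generality of $Z^\k$ (finite temperature, generalized Gibbs ensembles, homogeneous chain) no rqKZ equation of the kind used in the earlier inhomogeneous ground-state work is available; \cite{JMS08} instead introduce the Matsubara direction and compute the action of the Fermionic operators on the space of the Matsubara transfer-matrix eigenvectors, where $\om$ emerges explicitly (cf.\ the thermal realization (\ref{newg})--(\ref{ompsi})). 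Two further remarks: your induction rests on completeness of the Fermionic basis, Theorem~\ref{thm:basis}, which is legitimate but was proved in \cite{BJMS09a} after and independently of \cite{JMS08}, so it is not the route the original proof takes; and the two-point function from which you would extract $\om$ should be $Z^\k\{\bv^*(\z)\cv^*(\x)\,q^{2\a S(0)}\}$ --- total spin $0$ and tail $\a$ --- not one with vacuum tail $q^{2(\a+1)S(0)}$, which would define $\om$ at a shifted value of $\a$.
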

The functions $\r$ and $\om$ are described in \cite{JMS08}. While
$\r$ is simply the ratio of the eigenvalues
\begin{equation} \label{defrho}
      \r (\z) = \frac{\La(\z|\k + \a)}{\La(\z|\k)}
\end{equation}
pertaining to the left and right eigenvectors $\<\k + \a|$ and
$|\k\>$, the definition of $\om(\z, \x; \a)$ in the general case
is rather involved. We do not reproduce it here, but refer to
\cite{JMS08} for the details. A characterization of $\om$ in
terms of solutions of nonlinear integral equations for the two
important special cases of the canonical density matrix and
of the reduced density matrix connected with the ground state
of a finite system of length $2L$ was obtained in \cite{BoGo09}.
Such kind of characterization turned out to be useful in
applications of the Fermionic basis approach to the description
of correlation functions of quantum field theories \cite{JMS11,BJMS10}.
We shall recall the finite temperature case below when we consider
the finite temperature short-range two-point functions.

The importance of the JMS theorem for the calculation of correlation
functions reveals itself in the following two corollaries to
Theorem~\ref{theo:jms}.
\begin{corollary} \label{cor:zkappagenfns}
Theorem~\ref{theo:jms} together with the commutation relations
of the creation and annihilation operators implies that
\begin{multline}
     Z^\k \bigl\{\tv^* (\z_1^0) \cdots \tv^* (\z_k^0)
                 \bv^* (\z_1^+) \cdots \bv^* (\z_l^+) \cv^* (\z_l^-)
		 \cdots \cv^* (\z_1^-) \, q^{2 \a S(0)} \bigr\} \\[1ex]
        = \biggl[\prod_{i=1}^k 2 \r (\z_i^0)\biggr]
	  \det_{m, n = 1, \dots, l} \bigl\{\om (\z_m^+, \z_n^-; \a)\bigr\} \epp
\end{multline}
\end{corollary}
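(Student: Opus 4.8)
The plan is to combine the three relations of the JMS theorem (Theorem~\ref{theo:jms}) with the (anti)commutation relations (\ref{commutism})--(\ref{anticommutism}) so as to generate a recursion that reproduces the Laplace expansion of the determinant. First I would dispose of the central modes: since $\tv^*(\z)$ commutes with every creation operator by (\ref{commutism}), the factors $\tv^*(\z_1^0),\dots,\tv^*(\z_k^0)$ can be stripped off one at a time by the first JMS relation $Z^\k\{\tv^*(\z)X^{(\a)}\}=2\r(\z)Z^\k\{X^{(\a)}\}$, the vanishing spin of $\tv^*$ guaranteeing that the remaining operator stays in the correct graded component at each step. This produces the prefactor $\prod_{i=1}^k 2\r(\z_i^0)$ and reduces the assertion to the purely fermionic identity
\[
   Z^\k\bigl\{\bv^*(\z_1^+)\cdots\bv^*(\z_l^+)\cv^*(\z_l^-)\cdots\cv^*(\z_1^-)\,q^{2\a S(0)}\bigr\}
   = \det_{m,n=1,\dots,l}\bigl\{\om(\z_m^+,\z_n^-;\a)\bigr\}\epc
\]
which I would establish by induction on $l$.

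For the base case $l=0$ the determinant is empty and equals $1$, matching the normalization $Z^\k\{q^{2\a S(0)}\}=1$ that follows from (\ref{defzkappa}). In the inductive step I would apply the second JMS relation to the leftmost factor, replacing $\bv^*(\z_1^+)$ by $\int_\G\frac{\rd\x^2}{2\p\i\x^2}\,\om(\z_1^+,\x;\a)\,\cv(\x)$ acting on the remaining string, and then commute the annihilation operator $\cv(\x)$ to the right. By (\ref{anticommutism}) its anticommutator with every $\bv^*(\z_j^+)$ vanishes, so passing the $l-1$ remaining $\bv^*$'s contributes only the sign $(-1)^{l-1}$, whereas its anticommutator with $\cv^*(\z_j^-)$ equals $\ps(\x/\z_j^-,\a)$; passing $\cv^*(\z_l^-),\dots,\cv^*(\z_{j+1}^-)$ before contracting with $\cv^*(\z_j^-)$ yields a further sign $(-1)^{l-j}$. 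The term in which $\cv(\x)$ is commuted all the way to the vacuum drops out: by (\ref{annihipropmodes}) the modes $\cv_p$ with $p\ge1$ annihilate the length-zero vacuum, and the surviving $p=0$ mode from (\ref{annihimodes}) leaves a contribution that is regular in $\x$ inside $\G$ and hence integrates to zero. This expresses the correlator as a sum over $j=1,\dots,l$ of contraction terms, each leaving a shorter fermionic string already ordered as required by the induction hypothesis.

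The remaining analytic input is the reproducing property of $\ps$ against $\om$: because $\ps(\x/\z,\a)=\frac{(\x/\z)^\a((\x/\z)^2+1)}{2((\x/\z)^2-1)}$ has a simple pole at $\x^2=\z^2$ whose residue against the measure $\frac{\rd\x^2}{2\p\i\x^2}$ equals unity, and because $\om(\z_1^+,\cdot\,;\a)$ is regular inside the contour $\G$ encircling $\x^2=1$, one obtains $\int_\G\frac{\rd\x^2}{2\p\i\x^2}\,\om(\z_1^+,\x;\a)\,\ps(\x/\z_j^-,\a)=\om(\z_1^+,\z_j^-;\a)$. Collecting the signs gives the factor $(-1)^{l-1}(-1)^{l-j}=(-1)^{1+j}$ for the $j$-th term, so the correlator equals $\sum_{j=1}^l(-1)^{1+j}\,\om(\z_1^+,\z_j^-;\a)\,M_{1j}$, where by the induction hypothesis $M_{1j}$ is the minor of $\{\om(\z_m^+,\z_n^-;\a)\}$ obtained by deleting row $1$ and column $j$. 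This is exactly the cofactor expansion of the $l\times l$ determinant along its first row, which closes the induction.

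The main obstacle I anticipate is bookkeeping rather than conceptual. One must keep precise track of the tails, i.e.\ of the $\a$-shifts, so that at every stage of the recursion the JMS relation is applied to an argument of the correct type $X^{(\a\pm1)}$, given that $\bv^*$ and $\cv^*$ alter both spin and tail; it is the net spin $0$ of the full string that makes $Z^\k$ nonvanishing and renders the shifts consistent. The second delicate point is the analytic justification of the reproducing property, namely that $\om(\z,\x;\a)$ is genuinely regular in $\x^2$ inside $\G$ so that only the pole of $\ps$ contributes and the $p=0$ vacuum leak really integrates to zero; this rests on the explicit construction of $\om$ in \cite{JMS08}. Finally, one must verify carefully that the anticommutator signs assemble into the alternating signs $(-1)^{1+j}$ of the Laplace expansion, which is the elementary computation indicated above.
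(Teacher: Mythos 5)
Your proposal is correct and follows essentially the same route as the paper: the paper states Corollary~\ref{cor:zkappagenfns} without a displayed proof, but its Appendix~\ref{app:jms_to_expform} carries out precisely your argument for the functional $\tr^\a\bigl\{\re^{\Om}\,\cdot\,\bigr\}$, where Lemma~\ref{lem:laplacerecursion} implements the same contraction $[C(\z_j^+),\cv^*(\z_k^-)]_+ = \om(\z_j^+,\z_k^-;\a)$ via the reproducing property of $\ps$ against $\om$ and assembles the signs $(-1)^{j+k}$ into the Laplace expansion, while Lemma~\ref{lem:expformonbasis} strips the central $\tv^*$ factors exactly as you do. Your sign bookkeeping, the tail shifts $\a\pm1$, and the vanishing of the vacuum-leak term (the $p=0$ mode giving an integrand regular inside $\G$) all check out against the paper's conventions.
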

Using this corollary and the mode expansions (\ref{modes}) we can
evaluate $Z^\k$ on the Fermionic basis (\ref{deffb}).
\begin{corollary} \label{cor:baseexp}
\begin{multline} \label{expfbasis}
     Z^\k \bigl\{\tv^*_{p_1} \cdots \tv^*_{p_k}
                 \bv^*_{q_1^+} \cdots \bv^*_{q_l^+} \cv^*_{q_l^-}
		 \cdots \cv^*_{q_1^-} \, q^{2 \a S(0)} \bigr\}
        = \Biggl[\prod_{i=1}^k
	         \frac{\6_{(\z_i^0)^2}^{p_i - 1}
		       2 \r (\z_i^0)\bigr|_{(\z_i^0)^2 = 1}}{(p_i - 1)!}\Biggr] \\[1ex]
	  \times \det_{m, n = 1, \dots, l} \Biggl\{
	         \frac{\6_{(\z_m^+)^2}^{q_m^+ - 1}}
		      {(q_m^+ - 1)!}
	         \frac{\6_{(\z_n^-)^2}^{q_n^- - 1}}
		      {(q_n^- - 1)!} \biggl(\frac{\z_n^-}{\z_m^+}\biggr)^\a
		      \frac{\om (\z_m^+, \z_n^-; \a)}{(\z_m^+)^2 (\z_n^-)^2}
		      \biggr|_{(\z_m^+)^2 = (\z_n^-)^2 = 1} \Biggr\} \epp
\end{multline}
\end{corollary}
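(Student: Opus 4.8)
The plan is to derive~(\ref{expfbasis}) from Corollary~\ref{cor:zkappagenfns} by extracting the individual modes from the generating-function identity, using the linearity of $Z^\k$ together with the multilinearity of the determinant.

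First I would rewrite the mode expansions~(\ref{modes}) explicitly. Since $s(\bv^*) = 1$ and $s(\cv^*) = -1$, they read $\tv^* (\z) = \sum_{p \ge 1} (\z^2 - 1)^{p-1} \tv^*_p$, $\bv^* (\z) = \z^{\a + 2} \sum_{p \ge 1} (\z^2 - 1)^{p-1} \bv^*_p$ and $\cv^* (\z) = \z^{- \a + 2} \sum_{p \ge 1} (\z^2 - 1)^{p-1} \cv^*_p$. Each mode is thus a Taylor coefficient at $\z^2 = 1$: one strips off the prefactor $\z^{\a s + 2}$ and then reads off the coefficient of $(\z^2 - 1)^{p-1}$. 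With the conventions $D_p^0 f(\z) = \frac{1}{(p-1)!} \6_{\z^2}^{p-1} f(\z) \big|_{\z^2 = 1}$, $D_q^+ f(\z) = \frac{1}{(q-1)!} \6_{\z^2}^{q-1} \bigl[ \z^{-\a-2} f(\z) \bigr] \big|_{\z^2 = 1}$ and $D_q^- f(\z) = \frac{1}{(q-1)!} \6_{\z^2}^{q-1} \bigl[ \z^{\a-2} f(\z) \bigr] \big|_{\z^2 = 1}$, one has $\tv^*_p = D_p^0 \tv^* (\z)$, $\bv^*_q = D_q^+ \bv^* (\z)$ and $\cv^*_q = D_q^- \cv^* (\z)$.

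Next I would apply these extraction operators, in the variables $\z_i^0, \z_m^+, \z_n^-$, to both sides of the identity of Corollary~\ref{cor:zkappagenfns}. As $Z^\k$ is linear and the coefficients $(\z^2 - 1)^{p-1}$ are scalars, applying the operators to the left-hand side turns each generating function $\tv^* (\z_i^0)$, $\bv^* (\z_m^+)$, $\cv^* (\z_n^-)$ into the corresponding mode, reproducing the left-hand side of~(\ref{expfbasis}); by Corollary~\ref{cor:zkappagenfns} the same operators are then applied to $\bigl[\prod_i 2 \r (\z_i^0)\bigr] \det_{m,n} \{\om (\z_m^+, \z_n^-; \a)\}$. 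Here $D_{p_i}^0$ acting in $\z_i^0$ on $2 \r (\z_i^0)$ reproduces the $\tv^*$-prefactor of~(\ref{expfbasis}) directly. For the Fermionic block the key point is that $D_{q_m^+}^+$ involves only $\z_m^+$ and $D_{q_n^-}^-$ only $\z_n^-$: in each term $\prod_m \om (\z_m^+, \z_{\sigma(m)}^-; \a)$ of the Leibniz expansion of the determinant the variable $\z_m^+$ appears solely in the $m$-th factor and $\z_n^-$ solely in the factor indexed by $\sigma^{-1}(n)$, so both operators act on the single factor $\om (\z_m^+, \z_{\sigma(m)}^-; \a)$. By multilinearity of the determinant the combined operator therefore acts entry-wise and yields $\det_{m,n} \bigl\{ D_{q_m^+}^+ D_{q_n^-}^- \om (\z_m^+, \z_n^-; \a) \bigr\}$. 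Combining the two prefactors gives $(\z_m^+)^{-\a-2} (\z_n^-)^{\a-2} = (\z_n^- / \z_m^+)^\a (\z_m^+)^{-2} (\z_n^-)^{-2}$, which is exactly the factor inside the determinant of~(\ref{expfbasis}).

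The one step I would verify most carefully is this interchange of the extraction operators with the determinant, that is, the multilinear argument in the presence of the prefactor multiplications preceding the differentiations and the evaluation at $\z^2 = 1$. This is a purely formal manipulation of power series in the commuting variables $\z_i^0, \z_m^+, \z_n^-$, so no analytic subtlety arises: it suffices to note that the $\z_m^+$- and $\z_n^-$-operators act on disjoint variables and hence commute, and that each permutation term transforms as stated. Granting this bookkeeping,~(\ref{expfbasis}) follows immediately.
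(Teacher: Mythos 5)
Your proposal is correct and is essentially the paper's own (implicit) argument: Corollary~\ref{cor:baseexp} is obtained from Corollary~\ref{cor:zkappagenfns} by inserting the mode expansions (\ref{modes}), stripping the prefactors $\z^{\a s(\xv^*)+2}$, and reading off the multivariate Taylor coefficients at $\z^2 = 1$, with the entry-wise action on the determinant justified exactly as you note, since the extraction operators act in disjoint variables and hence hit a single factor in each term of the Leibniz expansion. Your bookkeeping of the prefactors, $(\z_m^+)^{-\a-2}(\z_n^-)^{\a-2} = (\z_n^-/\z_m^+)^\a (\z_m^+)^{-2}(\z_n^-)^{-2}$, reproduces precisely the entries of the determinant in (\ref{expfbasis}).
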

Since ${\cal F}$ is a basis of ${\cal W}_{\a, 0}$ it follows that
the generalized `$Z^\k$ expectation value' of every quasi-local
operator is a polynomial in $\r (\z)$, $\om(\z, \x; \a)$ and the
derivatives of these functions with respect to their arguments $\z$,
$\x$ at $\z = \x = 1$. This is true, in particular, for the finite
temperature and finite length ground state expectation values which
require to perform the limit $\a \rightarrow 0$ in the end. We call
this property of the physical expectation values `factorization'.
The Fermionic basis approach and the above corollary can be
understood as providing the algebraic reason for the factorization
of multiple integrals observed in \cite{BoKo01,BGKS06}.

The Fermionic basis $\cal F$ has a number of specific properties
worked out in \cite{BJMS09a}. First of all, the action of the creation
operators extends the support of a given operator $X$ only to the
right, not to the left. This is why negative powers of $\bm \tau$ are
needed in (\ref{deffb}). Consider the subspace ${\cal W}_{\a, 0, [1,n]}
\subset {\cal W}_{\a, 0}$ of quasi-local operators $X^{(\a)}$ with
$\supp (X^{(\a)}) \subset [1,n]$. In \cite{BJMS09a} the authors
provide an explicit construction of a Fermionic basis of this $4^n$-%
dimensional subspace of ${\cal W}_{\a, 0}$. They show that it
is generated by the action of polynomials in the modes in which
every term has at most $n$ factors and contains no modes higher than
the $n$th. A basis of ${\cal W}_{\a, 0, [1,1]}$, for instance, is
generated by the action of $\bv_1^*$, $\cv_1^*$, $\tv_1^*$ and $\id$
on the Fock vacuum $q^{2 \a S(0)}$. In general, the action of finite
products of modes can be represented by finite matrices. This makes
it possible to calculate short-range correlation functions. The
non-trivial part of the image of the Fock vacuum under a product
of modes of creation operators can be represented by finite
matrices which, in turn, can be written as products of operators
each acting on a single `lattice site' $V_j$. These `ultra-local'
operators can be represented using any basis of $\End V_j$, for
instance the Pauli matrices together with the $2 \times 2$ unit
matrix.

Unfortunately, no closed formula for the `inverse problem' of
expanding products of ultra-local operators in the Fermionic
basis is known. This makes the direct use of Corollary~\ref{cor:baseexp}
for the calculation of short-range correlation functions rather
inefficient. By direct calculation we were only able to proceed
up to operators of lengths three \cite{Kleinemuehl20}. A more efficient
algorithm for the calculation of the coefficients of the basis
transformation from the Fermionic basis to the standard basis of
ultra-local operators was devised in \cite{FrSm18}. The basic idea
of this work is to exploit the freedom in the definition of the
functional~$Z^\k$. The authors consider certain particularly simple
realizations of $Z^\k$ for which they use a clever algorithm in
order to directly calculate the $Z^\k$-expectation values of the
operators under consideration. For the same realizations they can
independently calculate the function $\om$ which, according to
(\ref{expfbasis}), determines the expectation values of the elements
of the Fermionic basis. Proceeding like this for many different 
realizations of $Z^\k$ they obtain a linear (and typically
overdetermined) system of equations for the expansion coefficients
which can be solved on a computer. In \cite{FrSm18, MiSM19} this
is worked out for the XXX chain in zero external field. In
\cite{Smirnov18pp} the case of quantum group invariant operators
on the XXZ chain is considered.

In this work we are going to explore another possibility to
calculate short-range correlation functions by means of the
Fermionic basis approach. We will be using the so-called
exponential form of the density matrix introduced in \cite{BJMST08a}
for the special case of the ground state correlation functions
of the infinite chain. As is implicit in \cite{JMS08} the
JMS theorem implies the validity of such a formula in a much
more general situation. This had been conjectured at an early stage
of the development of the method in \cite{BGKS07} and was partially
explored in \cite{BDGKSW08,TGK10a}. The difference between these
older works and the present work is that the construction of the
finite building blocks of the annihilation operators in the old
work was based on inhomogeneous monodromy matrices, while
we will be using homogeneous monodromy matrices here. This
has the advantage that we have to process smaller expressions
allowing us a more efficient use of our computers.

Define the $\k$-trace $\tr^\k: {\cal W}_\a \rightarrow {\mathbb C}$
by
\begin{equation} \label{defkappatrace}
     \tr^\k \bigl\{X^{(\a)}\bigr\} =
        \frac{\tr_{[k,l]} \bigl\{q^{- \k S_{[k,l]}} X_{[k,l]}\bigr\}}
             {\tr_{[k,l]} \bigl\{q^{- \k S_{[k,l]}}\bigr\}} \epc
\end{equation}
where $X^{(\a)} \in {\cal W}_\a$ with non-trivial part
$X_{[k,l]}$, and a function
\begin{equation} \label{defomzero}
     \om_0 (\z; \a) =
        - \biggl(\frac{1 - q^\a}{1 + q^\a}\biggr)^2 \D_\z \ps(\z, \a) \epp
\end{equation}
Here we have introduced the notation $\D_\z f (\z) = f(q \z)
- f(q^{-1} \z)$. Further, let
\begin{equation} \label{defomop}
     \Om = \int_\G \frac{\rd \z_1^2}{2 \p \i \z_1^2}
           \int_\G \frac{\rd \z_2^2}{2 \p \i \z_2^2}
	   \bigl(\om_0 (\z_1/\z_2; \a) - \om(\z_1, \z_2; \a)\bigr)
	   \bv (\z_1) \cv (\z_2) \epc
\end{equation}
where $\G$ is a simple closed contour around $1$. Then it is
possible to infer from the JMS theorem that, for any $X^{(\a)}
\in {\cal W}_{\a, 0}$,
\begin{equation} \label{relztralpha}
     Z^{- \a/2} \bigl\{X^{(\a)}\bigr\} =
          \tr^\a \bigl\{\re^\Om X^{(\a)}\bigr\} \epp
\end{equation}
It follows that for any realisation of the functional $Z^0$
and any local operator $X \in {\cal W}$
\begin{equation} \label{expkappanull}
     Z^0 \bigl\{X\bigr\} = \lim_{\a \rightarrow 0}
          \tr^\a \bigl\{\re^\Om X^{(\a)}\bigr\} \epp
\end{equation}
This is the formula that we shall use in order to calculate
short range correlation functions below.

We shall argue below that (\ref{expkappanull}) remains
valid for $\kappa \ne 0$ if we restrict the class of operators
to those that are invariant under spin reversal. We define
the spin-reversal operator on $\End {\cal H}_{[k,l]}$ by
${\mathbb J}_{[k,l]} X_{[k,l]} = J_{[k,l]}  X_{[k,l]}
J_{[k,l]}$, $J = \prod_{j = k}^l \s_j^x$. The corresponding
spin-reversal operator on ${\cal W}$ is denoted
${\mathbb J}$. $X \in {\cal W}$ is spin-reversal invariant
if ${\mathbb J} X = X$. We claim that for such operators
\begin{equation} \label{expkappa}
     Z^\k \bigl\{X\bigr\} = \lim_{\a \rightarrow 0}
          \tr^\a \bigl\{\re^\Om X^{(\a)}\bigr\} \epp
\end{equation}
Here the $\k$-dependence on the right hand side is hidden
in $\Om$ which depends on $\k$ through the function $\om$.

The paper is organized as follows. In Section~\ref{sec:modeconstruction}
we explain the construction of the modes from finite building
blocks. Starting point are two operators $\kv_{[k,l]}$ and
$\tv^*_{[k,l]}$ acting on finite chains and their partial
fraction decompositions. From these we obtain finite chain
versions of the creation and annihilation operators which we
present explicitly in terms of the Laurent coefficients of
$\kv_{[k,l]}$ and $\tv^*_{[k,l]}$ and in terms of coefficients
characterizing the behaviour of these operators in the limit
when the spectral parameter goes to infinity. We also discuss
the reduction properties of the operators which allows us
to construct operators for the infinite chain from finite
building blocks. In Section~\ref{sec:cf} we apply the formalism
to work out the finite-temperature two-point correlation functions
of the XXZ chain on up to five lattice sites. We discuss these
functions and compare with known asymptotic results.
Section~\ref{sec:concl} is devoted to a short summary and to
our conclusions. In Appendix~\ref{app:jms_to_expform} we obtain
the exponential form starting from the JMS theorem.
Appendix~\ref{app:multtoadd} connects two different forms of
the physical part $\r$, $\om$ with multiplicative and with
additive spectral parameters. Appendix~\ref{app:n5_explicit}
lists the expressions for the fourth-neighbour two-point
functions in terms of $\r$ and $\om$.

\section{Construction of the modes from finite building blocks}
\label{sec:modeconstruction}
\subsection{\boldmath $L$-matrices and monodromy matrices}
The creation and annihilation operators introduced in the previous
section are constructed from weighted traces of the elements of
certain monodromy matrices related to $U_q (\widehat{\mathfrak{sl}_2})$.
These monodromy matrices are products of two types of $L$-matrices with
two-dimensional or infinite-dimensional auxiliary space.

The $L$-matrices with two-dimensional auxiliary space are directly
related to the $R$-matrix of the six-vertex model,
\begin{align}
     & R(\zeta) = \begin{pmatrix}
		     1&0&0&0\\
		     0&\be(\z)&\g(\z)&0\\
		     0&\g(\z)&\be(\z)&0\\
		     0&0&0&1
		  \end{pmatrix} \epc
\end{align}
where
\begin{equation}
     \be(\z) = \frac{(1 - \z^2)q}{1 - q^2 \z^2} \epc \qd
     \g(\z) = \frac{(1 - q^2)\z}{1 - q^2 \z^2} \epp
\end{equation}
Fixing an auxiliary space $V_a \cong {\mathbb C}^2$ we define
$L_{a,j} (\z) = \varrho (\z) R_{a,j} (\z) \in \End\bigl(V_a \otimes
{\cal H}_{[k,l]}\bigr)$, where $\varrho (\z)$ is a scalar factor
to be specified below. This is the standard $L$-matrix of the
six-vertex model. The corresponding monodromy matrix is
\begin{equation}
     T_{a,[k,l]} (\z) = L_{a,l} (\z) \dots L_{a,k} (\z) \epp
\end{equation}
We are going to construct operators acting on $\End {\cal H}_{[k,l]}$
or on $\End\bigl(V_a \otimes {\cal H}_{[k,l]}\bigr)$. Our first
example is the $L$-operator ${\mathbb L}_{a,j} (\z)$ defined by
\begin{equation}
     {\mathbb L}_{a,j} (\z)\, \id_a \otimes X_{[k,l]}
        = L_{a,j} (\z) \id_a \otimes X_{[k,l]} L_{a,j}^{-1} (\z)
\end{equation}
for $j \in \{k, \dots, l\}$. These $L$-operators generate the
first type of monodromy matrices needed for the construction
of the Fermionic operators,
\begin{equation} \label{monofund}
     {\mathbb T}_{a, [k,l]} (\z, \a) \, \id_a \otimes X_{[k,l]} =
        {\mathbb L}_{a,l} (\z) \dots {\mathbb L}_{a,k} (\z) \,
	q^{\a \s_a^z} \id_a \otimes X_{[k,l]} \epp
\end{equation}

A second type of monodromy matrix acts adjointly on an infinite
dimensional module of the $q$-oscillator algebra $\rm Osc$ which is
defined in terms of generators $\av$, $\av^*$, $q^{\pm D}$ that
satisfy the relations
\begin{equation}
     q^D \av q^{-D} = q^{-1} \av \epc \qd
     q^D \av^* q^{-D} = q \, \av^* \epc \qd
     \av \av^* = 1 - q^{2D + 2} \epc \qd
     \av^* \av = 1 - q^{2D} \epp
\end{equation}
We fix a copy ${\rm Osc}_A$ of ${\rm Osc}$ and define
\begin{equation} \label{Losc}
     L_{A,j} (\z) = \s (\z)
                    \begin{pmatrix}
		    1 - \z^2 q^{2D_A + 2} & - \z \av_A \\
		    - \z \av^*_A & 1
                    \end{pmatrix}_j
                    \begin{pmatrix}
		    q^{- D_A} & 0 \\ 0 & q^{D_A}
                    \end{pmatrix}_j
\end{equation}
which acts on ${\rm Osc}_A \otimes {\cal H}_{[k,l]}$. The scalar
factor $\s(\z)$ is a convenient normalization and is required to
solve the functional equation
\begin{equation}
     \s (\z) \s (q^{-1} \z) = \frac{1}{1 - \z^2} \epp
\end{equation}
It also fixes the scalar factor in the definition of $L_{a,j} (\z)$,
\begin{equation}
     \varrho (\z) = \frac{q^{-1/2} \s(q^{-1} \z)}{\s(\z)} \epp
\end{equation}
For the $L$-operator (\ref{Losc}) we define its adjoint action on
$\End \bigl({\rm Osc}_A \otimes {\cal H}_{[k,l]}\bigr)$
\begin{equation}
     {\mathbb L}_{A,j} (\z)\, \id_A \otimes X_{[k,l]}
        = L_{A,j} (\z) \id_A \otimes X_{[k,l]} L_{A,j}^{-1} (\z) \epc
\end{equation}
$j \in \{k, \dots, l\}$, and the corresponding monodromy matrix
\begin{equation} \label{monoosc}
     {\mathbb T}_{A, [k,l]} (\z, \a) \, \id_A \otimes X_{[k,l]} =
        {\mathbb L}_{A,l} (\z) \dots {\mathbb L}_{A,k} (\z) \,
	q^{2 \a D_A} \id_A \otimes X_{[k,l]} \epp
\end{equation}
The monodromy matrices (\ref{monofund}) and (\ref{monoosc}) are the
basic ingredients in the definition of the Fermionic operators.

\subsection{\boldmath The basic operators on finite chain segments}
The Fermionic operators and their mode expansions (\ref{modes})
are constructed from a parental operator $\kv_{[k,l]} (\z, \a)
\in \End \End {\cal H}_{[k,l]}$ and from an adjointly acting transfer
matrix $\tv^*_{[k,l]} (\z, \a) \in \End \End {\cal H}_{[k,l]}$
which were both introduced in \cite{BJMST08a}. Before recalling
their definitions we have to fix some notation. Throughout this
section we shall fix an operator $X_{[k,l]} \in \End {\cal H}_{[k,l]}$
of spin $s$ and length $n \le k - l + 1$. For such an operator we
necessarily have that $|s| \le n$. If $\xv_{[k,l]} \in \End \End
{\cal H}_{[k,l]}$ is acting on $X_{[k,l]}$, we omit the subscript
and write $\xv X_{[k,l]} = \xv_{[k,l]} X_{[k,l]}$ for simplicity.

With this convention the parental operator $\kv_{[k,l]} (\z, \a)$
is defined by
\begin{equation}
\label{eq:k}
     \kv (\z, \a) \, X_{[k,l]}
        = \tr_{A, a} \Bigl\{ \s_a^+ {\mathbb T}_a (\z, \a)
	             {\mathbb T}_A (\z, \a) \z^{\a - {\mathbb S}}
		     \bigl(\id_{A, a} \otimes
		     q^{- 2S^z_{[k,l]}} X_{[k,l]}\bigr) \Bigr\} \epp
\end{equation}
The trace over the oscillator algebra is to be understood in
such a way that it agrees with the trace over the module $W^+
= \oplus_{k \ge 0} {\mathbb C} |k\>$ for $|q| < 1$ (with
$q^D |k\> = q^k |k\>, \av |k\> = (1 - q^{2k}) |k - 1\>, \av^* |k\> =
|k + 1\>$). For details see appendix A of \cite{BJMST08a}.

The operator $\tv_{[k,l]}^* (\z, \a)$ is defined by
\begin{equation}
     \tv^* (\z, \a) \, X_{[k,l]}
        = \tr_a \Bigl\{{\mathbb T}_a (\z, \a)
		       \bigl(\id_a \otimes X_{[k,l]}\bigr) \Bigr\} \epp
\end{equation}

\subsection{\boldmath Partial fraction decomposition
of $\kv_{[k,l]} (\z, \a)$ and the annihilation operators}
In the definition of the monodromy operators (\ref{monofund}),
(\ref{monoosc}) we could have included an inhomogeneity parameter
on each site if we would have replaced $\z$ by $\z/\x_j$ in
every factor on the right hand side of these equations (cf.\
\cite{BJMST08a}). This would have affected the analytic
properties of the operators $\kv_{[k,l]} (\z, \a)$ and
$\tv_{[k,l]}^* (\z, \a)$ which have only simple poles in
the inhomogeneous case, if the inhomogeneities are mutually
distinct, but exhibit poles of higher order in the homogeneous
case. In our previous work \cite{BDGKSW08} we considered the
inhomogeneous case. Here we restrict ourselves to the
homogeneous case. This means that we have to deal with slightly
more complicated expressions which depend, however, on less
parameters. The latter fact renders them more efficient in
computer algebraic calculations.

We will express the Fermionic annihilation
operator $\cv_{[k,l]} (\z, \a)$ and some other operators which are
important in the construction of the creation operators in terms of the
Laurent coefficients of the operator $\kv_{\rm skal\ [k,l]} (\z, \a)$
defined by $\kv_{\rm skal} (\z, \a) X_{[k,l]} = \z^{- \a - s - 1}
\kv (\z, \a) X_{[k,l]}$. As can be seen from section 2.5 of
\cite{BJMST08a},  $\kv_{\rm skal} (\z, \a) X_{[k,l]}$ is a rational
function in $\z^2$. As a function of $\z^2$ it has three $n$-fold
poles at $q^{2 \eps}$, $\eps = 0, \pm 1$, and an $s$-fold pole at
$0$ if $s > 0$. From these properties and the asymptotic behaviour
we obtain the partial fraction decomposition
\begin{equation} \label{kskalexp}
     \kv_{\rm skal} (\z, \a) X_{[k,l]}
        = \biggl[ \sum_{j=1}^n \sum_{\eps = - 1}^1 
	  \frac{\r_j^{(\eps)} (\a)}{(\z^2 - q^{2 \eps})^j}
        + \sum_{j=1}^s \frac{\k_j (\a)}{\z^{2j}} \biggr] X_{[k,l]}  \epp
\end{equation}

The definitions of the operators $\bar \cv_{[k,l]} (\z, \a)$,
$\cv_{[k,l]} (\z, \a)$ and $\fv_{[k,l]} (\z, \a)$ in \cite{BJMST08a}
remain valid in the homogeneous case. They are introduced in such
a way that the decomposition
\begin{equation}
      \kv (\z, \a) X_{[k,l]}
         = \bigl(\bar \cv(\z, \a) + \cv(q \z, \a) + \cv(q^{-1} \z, \a)
	   + \fv(q \z, \a) - \fv(q^{-1} \z, \a)\bigr)  X_{[k,l]}
\end{equation}
holds, which becomes unique if we fix
\begin{subequations}
\begin{align}
     \bar \cv (\z, \a) X_{[k,l]} &
        = \int_\G \frac{\rd \x^2}{2 \p \i} \ps(\z/\x, \a + s + 1)
	          \kv (\x, \a) X_{[k,l]} \epc \\[1ex] \label{defcint}
     \cv (\z, \a) X_{[k,l]} &
        = \int_\G \frac{\rd \x^2}{4 \p \i} \ps(\z/\x, \a + s + 1)
	          \bigl(\kv (q \x, \a) + \kv (q^{-1} \x, \a)\bigr) X_{[k,l]} \epc \\[1ex]
     \fv (\z, \a) X_{[k,l]} &
        = \bigl(\fv^{\rm sing} (\z, \a) + \fv^{\rm reg} (\z, \a)\bigr) X_{[k,l]} \epc \\[1ex]
     \fv^{\rm sing} (\z, \a) X_{[k,l]} &
        = \int_\G \frac{\rd \x^2}{4 \p \i} \ps(\z/\x, \a + s + 1)
	          \bigl(- \kv (q \x, \a) + \kv (q^{-1} \x, \a)\bigr) X_{[k,l]} \epp
\end{align}
\end{subequations}
Here $\G$ is a small circle around $\x^2 = 1$. It is not difficult
to evaluate the above integrals and to obtain the operators
$\bar \cv_{[k,l]} (\z, \a)$, $\cv_{[k,l]} (\z, \a)$ and
$\fv^{\rm sing}_{[k,l]} (\z, \a)$ explicitly in terms of the
Laurent coefficients $\r_j^{(\eps)} (\a)$ and $\k_j (\a)$ occurring
in the partial fraction decomposition of $\kv_{\rm skal, [k,l]}
(\z, \a)$. We have to deal with integrands of the form
\begin{multline}
     \x^{-2} \ps(\z/\x, \a + s + 1) \kv (\h, \a) X_{[k,l]} \\
        = \z^{\a + s + 1} \, \frac{\z^2 + \x^2}{2 \x^2 (\z^2 - \x^2)}
	  \biggl( \frac{\h}{\x} \biggr)^{\a + s + 1} \kv_{\rm skal} (\h, \a) X_{[k,l]}
	  \epp
\end{multline}
with $\h = \x q^\eps$, $\eps = - 1, 0, 1$. Using the decomposition
\begin{equation} \label{psidetox}
     \frac{\z^2 + \x^2}{2 \x^2 (\z^2 - \x^2)} = 
        \2 \sum_{k=0}^\infty (-1)^k (\x^2 - 1)^k + \frac{1}{\z^2 - 1}
	   \sum_{k=0}^\infty \biggl( \frac{\x^2 - 1}{\z^2 - 1} \biggr)^k \epc
\end{equation}
valid for $|\x^2 - 1| < |\z^2 - 1| < 1$, the integrals are easily
calculated. The remaining part $\fv^{\rm reg}_{[k,l]} (\z, \a)$
is obtained by inverting the difference operator $\D_\z$ on
the space of Laurent polynomials in $\z^2$.
\begin{lemma}
The operators $\bar \cv_{[k,l]} (\z, \a)$, $\cv_{[k,l]} (\z, \a)$ and
$\fv_{[k,l]} (\z, \a)$ have the partial fraction decompositions
\begin{subequations}
\begin{align}
     \bar \cv(\z,\a) X_{[k,l]} & =
        \z^{\a + s + 1} \sum_{j=0}^n \frac{\bar c_j (\a)}{(\z^2 - 1)^j} X_{[k,l]}
	\epc \\[1ex] \label{claurant}
     \cv(\z,\a) X_{[k,l]}  & =
        \z^{\a + s + 1} \sum_{j=0}^n \frac{c_j (\a)}{(\z^2 - 1)^j} X_{[k,l]}
	\epc \\[1ex] \label{fop}
     \fv(\z,\a) X_{[k,l]} & =
        \z^{\a + s + 1} \biggl[ \sum_{j=0}^n \frac{f_j (\a)}{(\z^2 - 1)^j}
        + \sum_{j=0}^s \frac{\k_j (\a) \z^{-2j}}
	                    {q^{\a + s + 1 - 2j} - q^{2j - \a - s - 1}}
			       \biggr] X_{[k,l]}  \epc
\end{align}
\end{subequations}
where the coefficients are determined in terms of the coefficients of
the partial fraction decomposition of $\kv_{{\rm skal}, [k,l]} (\z, \a)$,
\begin{subequations}
\label{tcmodes}
\begin{align}
     \bar c_j (\a) & = \r_j^{(0)} (\a) \epc \qd j = 1, \dots, n \epc \\
     \bar c_0 (\a) & = \2 \sum_{j=1}^n (-1)^{j-1} \bar c_j (\a) \epc \\
     c_j (\a) & = \2
        \bigl( q^{2j - \a - s - 1} \r_j^{(-)} (\a) +
               q^{\a + s + 1 - 2j} \r_j^{(+)} (\a) \bigr)
        \epc \qd j = 1, \dots, n \epc \label{cmodes} \\[1ex]
     c_0 (\a) & = \2 \sum_{j=1}^n (-1)^{j-1} c_j (\a) \epc \label{cnull} \\
     f_j (\a) & = \2
        \bigl( q^{2j - \a - s - 1} \r_j^{(-)} (\a) -
               q^{\a + s + 1 - 2j} \r_j^{(+)} (\a) \bigr)
        \epc \qd j = 1, \dots, n \epc \\[1ex]
     f_0 (\a) & = \2 \sum_{j=1}^n (-1)^{j-1} f_j (\a) \epc \\
     \k_0 (\a) & = \2 \sum_{j=1}^n (-1)^j
        \sum_{\eps = - 1}^1 q^{- 2 \eps j} \r_j^{(\eps)} (\a) \epp
\end{align}
\end{subequations}
\end{lemma}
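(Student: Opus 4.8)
The plan is to substitute the partial fraction decomposition~(\ref{kskalexp}) of $\kv_{\rm skal}(\z,\a)$ into the contour-integral definitions of $\bar\cv$, $\cv$ and $\fv^{\rm sing}$, to expand the Cauchy-type kernel by means of~(\ref{psidetox}), and to evaluate the resulting integrals by residues. Since $\G$ is a small circle around $\x^2=1$, the only singularity of the integrand that it encloses sits at $\x^2=1$; which of the three $n$-fold poles of $\kv_{\rm skal}$ produces it is dictated by the shift $\h=\x q^\eps$ present in each definition. For $\bar\cv$ one has $\h=\x$, so the enclosed pole is the $\eps=0$ pole with coefficients $\r_j^{(0)}(\a)$, whereas for $\cv$ and $\fv^{\rm sing}$ the arguments $q^{\pm1}\x$ move the $\eps=\pm1$ poles to $\x^2=1$, so that $\r_j^{(-)}(\a)$ and $\r_j^{(+)}(\a)$ enter.

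First I would treat $\bar\cv$. With $\h=\x$ the prefactor $(\h/\x)^{\a+s+1}$ equals $1$, and the kernel in the displayed integrand splits into a regular part $\2\sum_{k\ge0}(-1)^k(\x^2-1)^k$ and a singular part $(\z^2-1)^{-1}\sum_{k\ge0}((\x^2-1)/(\z^2-1))^k$. Applied to the $j$-th pole term $\r_j^{(0)}(\a)/(\x^2-1)^j$ and evaluated by the residue at $\x^2=1$, the singular part contributes only through $k=j-1$ and yields $\r_j^{(0)}(\a)/(\z^2-1)^j$, while the regular part yields $\2(-1)^{j-1}\r_j^{(0)}(\a)$. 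Summation over $j$ gives $\bar c_j(\a)=\r_j^{(0)}(\a)$ together with $\bar c_0(\a)=\2\sum_{j=1}^n(-1)^{j-1}\bar c_j(\a)$, which is~(\ref{claurant}) and the first two lines of~(\ref{tcmodes}).

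The computation for $\cv$ and $\fv^{\rm sing}$ has the same structure, but now two features combine: the prefactor $(\h/\x)^{\a+s+1}=q^{\pm(\a+s+1)}$ is nontrivial, and after the shift the pole $\r_j^{(\pm)}(\a)/((\x q^{\pm1})^2-q^{\pm2})^j=q^{\mp2j}\r_j^{(\pm)}(\a)/(\x^2-1)^j$ carries an extra factor $q^{\mp2j}$. Thus the $q^{-1}\x$ contribution is weighted by $q^{2j-\a-s-1}$ and the $q\x$ contribution by $q^{\a+s+1-2j}$, and the overall factor $\2$ of the measure $\rd\x^2/(4\p\i)$ reproduces precisely the prefactors in~(\ref{cmodes}). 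Adding the two contributions with the plus signs of the $\cv$ definition~(\ref{defcint}) gives $c_j(\a)$, while the minus sign on the $q\x$ term in $\fv^{\rm sing}$ gives $f_j(\a)$. As for $\bar\cv$, the regular part of the kernel produces the constant terms $c_0(\a)$ and $f_0(\a)$ in the form~(\ref{cnull}) and its $f$-analogue.

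It remains to fix the $\z^{-2}$ part of $\fv$ and the constant $\k_0(\a)$, which come from $\fv^{\rm reg}$ and the inversion of $\D_\z$ on the Laurent-polynomial remainder $\sum_j\k_j(\a)\z^{\a+s+1-2j}$ of $\kv$. Since $\D_\z\z^{\a+s+1-2j}=(q^{\a+s+1-2j}-q^{2j-\a-s-1})\z^{\a+s+1-2j}$, inversion simply divides by this eigenvalue and reproduces the $\k$-sum of~(\ref{fop}) for $1\le j\le s$. The genuinely delicate point, and the main obstacle, is the case $j=0$: here $\D_\z$ annihilates constants, so the $j=0$ contribution cannot be obtained by inversion and must be carried by the free constant $\k_0(\a)$. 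I would fix it by imposing the master decomposition $\kv=\bar\cv(\z)+\cv(q\z)+\cv(q^{-1}\z)+\fv(q\z)-\fv(q^{-1}\z)$ in the limit $\z^2\to\infty$, where the left-hand side vanishes; this gives $\k_0(\a)=-\bar c_0(\a)-(q^{\a+s+1}+q^{-\a-s-1})c_0(\a)-(q^{\a+s+1}-q^{-\a-s-1})f_0(\a)$, which, upon inserting the expressions for $\bar c_0,c_0,f_0$ and using the combinations $c_j\pm f_j$, collapses to $\k_0(\a)=\2\sum_{j=1}^n(-1)^j\sum_{\eps=-1}^1 q^{-2\eps j}\r_j^{(\eps)}(\a)$. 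Reconciling the kernel of $\D_\z$ with the constant contributions of all three poles in this way is the only step requiring care; the residue evaluations themselves are routine once~(\ref{psidetox}) is in place.
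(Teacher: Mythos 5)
Your proposal is correct and follows essentially the same route the paper sketches: insert the partial fraction decomposition (\ref{kskalexp}) into the contour-integral definitions, expand the Cauchy kernel via (\ref{psidetox}), pick up the residue at $\x^2 = 1$ (with the shifts $q^{\pm 1}\x$ transporting the $\eps = \pm 1$ poles there and supplying the weights $q^{\a+s+1-2j}$, $q^{2j-\a-s-1}$), and obtain $\fv^{\rm reg}$ by inverting $\D_\z$ on the Laurent-polynomial remainder. One aside in your last paragraph should be corrected, though it does not affect the result: because of the overall factor $\z^{\a+s+1}$, the $j=0$ term is \emph{not} in the kernel of $\D_\z$, since $\D_\z \, \z^{\a+s+1} = \bigl(q^{\a+s+1} - q^{-\a-s-1}\bigr)\z^{\a+s+1} \ne 0$ for generic $\a$ — which is precisely why the $j=0$ denominator in (\ref{fop}) is well defined — so $\k_0$ is also obtained by inversion, once one notes that $\D_\z \fv^{\rm reg}$ must supply the constant term cancelling $\bar c_0 + (q^{\a+s+1}+q^{-\a-s-1})\,c_0 + (q^{\a+s+1}-q^{-\a-s-1})\,f_0$ left over after the pole parts cancel. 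Your determination of $\k_0$ via the $\z^2 \rightarrow \infty$ limit of the master decomposition is exactly equivalent to this matching of constant terms, and the algebra you indicate (using $c_j + f_j = q^{2j-\a-s-1}\r_j^{(-)}$ and $c_j - f_j = q^{\a+s+1-2j}\r_j^{(+)}$) does collapse to the stated formula for $\k_0(\a)$.
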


The operator $\bv_{[k,l]} (\z,\a)$ that is needed in the construction of
the density matrix is defined as
\begin{equation}
     \bv_{[k,l]} (\z,\a) =
        - q^{-1}(q^{\a - \spin_{[k,l]} - 1} - q^{- \a + \spin_{[k,l]} + 1})
	  \spinflip_{[k,l]} \cv_{[k,l]} (\z,- \a) \spinflip_{[k,l]} \epp
\end{equation}
Setting
\begin{equation} \label{bmodes}
     b_j (\a) = - q^{-1} (q^{\a - \spin_{[k,l]} - 1} - q^{- \a + \spin_{[k,l]} + 1})
                  \spinflip_{[k,l]} c_j (-\a) \spinflip_{[k,l]}
\end{equation}
we obtain it in the form
\begin{equation} \label{blaurant}
     \bv(\z,\a) X_{[k,l]}  =
        \z^{- \a - s + 1} \sum_{j=0}^n \frac{b_j (\a)}{(\z^2 - 1)^j} X_{[k,l]} \epp
\end{equation}

All one has to do in order to obtain explicit representations for
these operators is to extract the coefficients $\r_j^{(\eps)} (\a)$
and $\k_j (\a)$ from the expansion (\ref{kskalexp}). They can be
obtained as
\begin{equation}
\label{eq:laurent_rho}
     \r_{n-k}^{(\eps)} (\a) =
        \lim_{\z^2 \rightarrow q^{2 \eps}}
        \frac{1}{k!} \6_{\z^2}^k (\z^2 - q^{2 \eps})^n
	   \kv_{\rm skal\ [k,l]} (\z, \a)
\end{equation}
for $k = 0, \dots, n - 1$, and
\begin{equation}
\label{eq:laurent_kappa}
     \k_{s-k}^{(\eps)} (\a) =
        \lim_{\z^2 \rightarrow 0}
        \frac{1}{k!} \6_{\z^2}^k \z^{2s} \kv_{\rm skal\ [k,l]} (\z, \a) \epc
\end{equation}
where $k = 0, \dots, s - 1$.
\subsection{\boldmath Partial fraction decomposition of the operator
$\tv^* (\z, \a)$ in the homogeneous case}
In section 2.5 of \cite{BJMST08a} it is shown that $\tv^*_{[k,l]} (\z, \a)$
is a rational function in $\z^2$. Directly from its definition
we further find that for $\z^2 \rightarrow \infty$
\begin{equation}
     \tv^* (\z, \a) X_{[k,l]} \sim (q^{\a + s} + q^{- \a - s}) X_{[k,l]}
\end{equation}
and that this operator has precisely two $n$-fold poles at
$\z^2 = q^{\pm 2}$. Thus, it has the partial fraction decomposition
\begin{equation} \label{tstarparfrac}
     \tv^* (\z, \a) X_{[k,l]} = \biggl[ \sum_{j=1}^n \sum_{\eps = \pm}
	\frac{\tau_j^{(\eps)} (\a)}{(\z^2 - q^{\eps 2})^j}
        + q^{\a + s} + q^{- \a - s} \biggr] X_{[k,l]} \epp
\end{equation}

\subsection{Construction of the finite generators of creation operators}
The construction of creation operators is intimately connected with
the Taylor expansion of $\tv^*_{[k,l]} (\z, \a)$ around $\z^2 = 1$.
\begin{lemma}
Taylor expanding (\ref{tstarparfrac}) around $\z^2 = 1$ we obtain
\begin{equation} \label{tstartaylor}
     \tv^* (\z, \a) X_{[k,l]}
        = \sum_{p=1}^\infty (\z^2 - 1)^{p-1} t^*_p (\a) \, X_{[k,l]} \epc
\end{equation}
where
\begin{subequations}
\label{tstarmodes}
\begin{align}
     & t_1^* (\a) X_{[k,l]} = \biggl[
        \sum_{j=1}^n \sum_{\eps = \pm}
	\frac{\tau_j^{(\eps)} (\a)}{(1 - q^{\eps 2})^j} + q^{\a + s} + q^{- \a - s}
	\biggr] X_{[k,l]} \epc \\
     & t_p^* (\a) X_{[k,l]} =
        \sum_{j=1}^n \sum_{\eps = \pm} \binom{-j}{p-1}
	\frac{\tau_j^{(\eps)} (\a) X_{[k,l]}}{(1 - q^{\eps 2})^{p + j - 1}} \epc
	\qd p = 2, 3, \dots
\end{align}
\end{subequations}
\end{lemma}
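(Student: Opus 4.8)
The plan is to expand each summand of the partial fraction decomposition (\ref{tstarparfrac}) in powers of $\z^2 - 1$ and to collect the resulting coefficients. First I would introduce the shifted variable $u = \z^2 - 1$ and factor each pole denominator as
\[
     \z^2 - q^{\eps 2} = (1 - q^{\eps 2}) + u
        = (1 - q^{\eps 2}) \Bigl( 1 + \tfrac{u}{1 - q^{\eps 2}} \Bigr) \epc
\]
for $\eps = \pm$. For generic $q$ the factors $1 - q^{\eps 2}$ are nonzero, so this factorization is legitimate, and the nearest singularities of $\tv^* (\z, \a) X_{[k,l]}$, viewed as a function of $\z^2$, sit precisely at $\z^2 = q^{\pm 2}$.

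Next I would apply the generalized binomial series $(1 + x)^{-j} = \sum_{m \ge 0} \binom{-j}{m} x^m$ with $x = u/(1 - q^{\eps 2})$ to get
\[
     \frac{1}{(\z^2 - q^{\eps 2})^j}
        = \sum_{m=0}^\infty \binom{-j}{m}
          \frac{(\z^2 - 1)^m}{(1 - q^{\eps 2})^{j + m}} \epp
\]
Each such expansion converges for $|\z^2 - 1| < |1 - q^{\eps 2}|$; taking the smaller of the two radii $|1 - q^{\pm 2}|$ yields a common disc around $\z^2 = 1$ on which the finite sum over $j$ and $\eps$ may be interchanged with the Taylor sum, in accordance with the general fact that the Taylor expansion of a rational function converges up to its nearest pole. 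Reindexing by $p = m + 1$, the coefficient of $(\z^2 - 1)^{p - 1}$ contributed by the double sum in (\ref{tstarparfrac}) is $\sum_{j=1}^n \sum_{\eps = \pm} \binom{-j}{p-1} \tau_j^{(\eps)} (\a) / (1 - q^{\eps 2})^{j + p - 1}$.

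Finally I would account for the constant term $q^{\a + s} + q^{-\a - s}$ in (\ref{tstarparfrac}), which contributes only to the coefficient at $p = 1$. Since $\binom{-j}{0} = 1$, the $p = 1$ coefficient coming from the double sum collapses to $\sum_{j, \eps} \tau_j^{(\eps)} (\a)/(1 - q^{\eps 2})^j$; adding the constant then reproduces the stated formula for $t_1^* (\a)$, while for $p \ge 2$ the formula for $t_p^* (\a)$ is exactly the double-sum coefficient computed above. The argument is entirely elementary; the only points deserving care are the bookkeeping that isolates the contribution of the constant term to the first Taylor coefficient, and the observation that the geometric expansions share a common radius of convergence, so that the interchange of the finite sum over $j, \eps$ with the infinite Taylor sum is justified.
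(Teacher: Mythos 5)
Your proposal is correct and coincides with the paper's own (implicitly given) argument: the lemma is obtained precisely by binomially expanding each pole term $(\z^2 - q^{\eps 2})^{-j} = (1-q^{\eps 2})^{-j}\bigl(1 + (\z^2-1)/(1-q^{\eps 2})\bigr)^{-j}$ of the partial fraction decomposition (\ref{tstarparfrac}) around $\z^2 = 1$ and noting that the constant $q^{\a+s} + q^{-\a-s}$ enters only the $p=1$ coefficient. Your added remarks on the common radius of convergence $\min_\eps |1-q^{\eps 2}|$ and the justification for interchanging the finite sum over $j,\eps$ with the Taylor sum are sound and merely make explicit what the paper leaves tacit.
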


We now proceed with the construction of the Fermionic creation operators.
In equation (2.29) of \cite{BJMST08a} their finite generators
$\bv^*_{[k,l]} (\z, \a)$ are defined by
\begin{equation} \label{defbstar}
     \bv^* (\z, \a) X_{[k,l]}
        = \bigl( \fv (q \z, \a) + \fv (q^{-1} \z, \a)
	         - \tv^* (\z, \a) \fv (\z, \a) \bigr) X_{[k,l]} \epp
\end{equation}
Here $\fv_{[k,l]} (\z, \a)$ is given by (\ref{fop}) and
$\tv^*_{[k,l]} (\z, \a)$ by (\ref{tstartaylor}).

It is shown in \cite{BJMST08a} (see Lemma 3.8) that the operator
$\bv^*_{[k,l]} (\z, \a)$ is regular at $\z^2 = 1$. Inserting
(\ref{fop}) and (\ref{tstartaylor}) into (\ref{defbstar}) regularity
implies the identities
\begin{equation}
     \sum_{j = - p + 1}^{n} t^*_{p+j} (\a) f_j (\a) = 0
        \qd \text{for $p = - n + 1, \dots, 0$} \epp
\end{equation}
Using this identity in the Taylor expansion of the right hand side
of (\ref{defbstar}) around $\z^2 = 1$ we obtain the following
\begin{lemma}
\begin{equation} \label{bstartaylor}
     \bv^* (\z, \a) X_{[k,l]}
        = \z^{\a + s + 1} \sum_{p=1}^\infty (\z^2 - 1)^{p-1} b^*_p (\a) \, X_{[k,l]} \epc
\end{equation}
where
\begin{multline} \label{bstarmodes}
     b_p^* (\a) X_{[k,l]}
        = \biggl[ \sum_{j=0}^n \binom{-j}{p-1}
	  \frac{q^{\a + s + p - j} - (-1)^{p+j} q^{j - p - \a - s}}
	       {(q - q^{-1})^{p + j -1}} \, f_j (\a) \\
          + \sum_{j=0}^s \binom{-j}{p-1}
	  \frac{q^{\a + s + 1 - 2j} + q^{2j - \a - s - 1}}
	       {q^{\a + s + 1 - 2j} - q^{2j - \a - s - 1}} \, \k_j (\a) \\
          - \sum_{j=0}^n t_{p+j}^* (\a) f_j (\a)
	  - \sum_{j=0}^s \sum_{k=0}^{p-1} \binom{-j}{k}
	    \frac{t_{p-k}^* (\a) \k_j (\a)}
	         {q^{\a + s + 1 - 2j} - q^{2j - \a - s - 1}} \biggr] X_{[k,l]}
		 \epp
\end{multline}
\end{lemma}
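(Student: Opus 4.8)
The plan is to insert the explicit forms (\ref{fop}) for $\fv$ and (\ref{tstartaylor}) for $\tv^*$ into the definition (\ref{defbstar}), factor out the common power $\z^{\a+s+1}$ so as to match (\ref{bstartaylor}), and expand each of the three resulting contributions as a series in the single variable $w=\z^2-1$, reading off the coefficient of $w^{p-1}$. A useful preliminary observation is that the two shifted operators $\fv(q\z,\a)$ and $\fv(q^{-1}\z,\a)$ are already regular at $\z^2=1$: the $f_j$-poles of $\fv$ sit at $(q^{\pm1}\z)^2=1$, i.e.\ at $\z^2=q^{\mp2}$, while the $\k_j$-terms are singular only at $\z=0$. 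Hence each of these two terms expands as an honest Taylor series, and the prefactors $(q^{\pm1}\z)^{\a+s+1}=q^{\pm(\a+s+1)}\z^{\a+s+1}$ combine consistently with the $\z^{\a+s+1}$ pulled out of $\fv(\z,\a)$ in the product term.

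For the $f_j$-part of the shifts I would write $(q^{\pm1}\z)^2-1=(q^{\pm2}-1)+q^{\pm2}w$ and use $((q^{\pm2}-1)+q^{\pm2}w)^{-j}=(q^{\pm2}-1)^{-j}\sum_m\binom{-j}{m}(q^{\pm2}/(q^{\pm2}-1))^m w^m$; the coefficient of $w^{p-1}$, after dividing out $\z^{\a+s+1}$, is then $\binom{-j}{p-1}q^{\pm(\a+s+1)}q^{\pm2(p-1)}(q^{\pm2}-1)^{-(p+j-1)}$. The factorisations $q^2-1=q(q-q^{-1})$ and $q^{-2}-1=-q^{-1}(q-q^{-1})$ turn both denominators into $(q-q^{-1})^{p+j-1}$ and introduce a sign $(-1)^{p+j-1}$ from the lower branch; summing the two branches gives exactly the first sum of (\ref{bstarmodes}), with numerator $q^{\a+s+p-j}-(-1)^{p+j}q^{j-p-\a-s}$. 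The $\k_j$-part is treated identically, except that here the binomial factor comes from $\z^{-2j}=(1+w)^{-j}=\sum_m\binom{-j}{m}w^m$ and the shift supplies only $q^{\pm(\a+s+1-2j)}$; the two branches then add up to $q^{\a+s+1-2j}+q^{2j-\a-s-1}$ over the denominator $q^{\a+s+1-2j}-q^{2j-\a-s-1}$, reproducing the second sum of (\ref{bstarmodes}).

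The product term $-\tv^*(\z,\a)\fv(\z,\a)$ I would evaluate as a Cauchy product of the Taylor series $\sum_{p'\geq1}w^{p'-1}t^*_{p'}(\a)$ with each part of $\fv(\z,\a)$, preserving the operator order $t^*_{p'}(\a)f_j(\a)$ and $t^*_{p'}(\a)\k_j(\a)$ since these coefficients need not commute. For the $f_j$-part, matching powers forces $p'=p+j$, giving $-\sum_{j=0}^n t^*_{p+j}(\a)f_j(\a)$; for the $\k_j$-part the extra expansion $\z^{-2j}=\sum_k\binom{-j}{k}w^k$ forces $p'=p-k$, giving $-\sum_j\sum_{k=0}^{p-1}\binom{-j}{k}t^*_{p-k}(\a)\k_j(\a)/(q^{\a+s+1-2j}-q^{2j-\a-s-1})$. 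These are precisely the last two sums of (\ref{bstarmodes}). The quoted identities $\sum_{j=-p+1}^n t^*_{p+j}(\a)f_j(\a)=0$ enter exactly here: the coefficient of $w^{-m}$ ($m\geq1$) in this product equals $-\sum_{j\geq m}t^*_{j+1-m}(\a)f_j(\a)$, which vanishes by the identity with $p=1-m$, so all negative powers of $w$ drop out and the expansion genuinely begins at $p=1$, as asserted in (\ref{bstartaylor}).

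The main obstacle is entirely one of bookkeeping rather than of ideas. The delicate point is combining the $q\z$ and $q^{-1}\z$ branches in the $f_j$-part: this is where the asymmetric numerator $q^{\a+s+p-j}-(-1)^{p+j}q^{j-p-\a-s}$ and the clean denominator $(q-q^{-1})^{p+j-1}$ must be shown to emerge from the raw powers of $q^{\pm2}-1$, which requires careful tracking of the signs and of the shift in the $q$-exponents. One must also keep the two singularities of $\fv$ — the order-$n$ pole at $\z^2=1$ and the order-$s$ pole at $\z=0$ — on separate footings, expanding them in $((q^{\pm2}-1)+q^{\pm2}w)^{-j}$ and in $(1+w)^{-j}$ respectively, and maintain the operator ordering throughout the Cauchy product.
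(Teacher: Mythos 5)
Your proposal is correct and follows essentially the same route as the paper: insert the partial fraction form (\ref{fop}) and the Taylor expansion (\ref{tstartaylor}) into the definition (\ref{defbstar}), expand everything in $w=\z^2-1$ around $\z^2=1$, and invoke the identities $\sum_{j=-p+1}^{n} t^*_{p+j}(\a) f_j(\a)=0$ (which the paper obtains from the regularity of $\bv^*$ at $\z^2=1$, Lemma 3.8 of \cite{BJMST08a}) to cancel the negative powers of $w$ in the product term. Your branch-by-branch bookkeeping — the factorizations $q^2-1=q(q-q^{-1})$, $q^{-2}-1=-q^{-1}(q-q^{-1})$ producing the sign $(-1)^{p+j-1}$ and the numerator $q^{\a+s+p-j}-(-1)^{p+j}q^{j-p-\a-s}$, the separate treatment of the $\z^{-2j}$ pole via $(1+w)^{-j}$, and the ordered Cauchy products — checks out and reproduces (\ref{bstarmodes}) exactly.
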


The operator $\cv^*_{[k,l]} (\z,\a)$ is defined as
\begin{equation}
     \cv^*_{[k,l]} (\z,\a) =
        q^{-1}(q^{\a - \spin_{[k,l]} - 1} - q^{- \a + \spin_{[k,l]} + 1})
	\spinflip_{[k,l]} \bv^*_{[k,l]} (\z,- \a) \spinflip_{[k,l]} \epp
\end{equation}
Setting
\begin{equation} \label{cstarmodes}
     c_p^* (\a) = q^{-1} (q^{\a - \spin_{[k,l]} - 1} - q^{- \a + \spin_{[k,l]} + 1})
                  \spinflip_{[k,l]} b_p^* (-\a) \spinflip_{[k,l]}
\end{equation}
we obtain the Taylor expansion
\begin{equation} \label{cstartaylor}
     \cv^* (\z,\a) X_{[k,l]}  =
        \z^{- \a - s + 1} \sum_{p=1}^\infty (\z^2 - 1)^{p-1} c_p^* (\a) X_{[k,l]} \epp
\end{equation}

In the following subsection we will use certain reduction properties
of the operators acting on finite segments of the infinite spin chain
in order to construct operators that act on the whole infinite chain.
\subsection{Reduction relations and extension of the action to infinite chains}
The family of operators acting on $\End {\cal H}_{[j,m]}$
that was introduced above has certain simple so-called `reduction
properties' describing how their action on operators of the form
$q^{\a (\s_j^z + \dots + \s_{k-1}^z)} \id_{[j,k-1]} \otimes
X_{[k,l]} \otimes \id_{[l+1,m]}$, where $[k,l] \subset [j,m]$
is a proper subset, reduces to the action of `shorter operators'.
The operators $\kv_{[k,l]}$ and $\tv^*_{[k,l]}$ inherit the
following left reduction relations from the `gauge invariance'
of the universal $R$-matrix
\begin{align}
     \tv^*_{[k-1,l]} (\z,\a) \bigl(q^{\a \s^z} \otimes X_{[k,l]}\bigr) & =
        q^{\a \s^z} \otimes \bigl(\tv^*_{[k,l]} (\z,\a) X_{[k,l]}\bigr) \epc \\[1ex]
     \kv_{[k-1,l]} (\z,\a) \bigl(q^{(\a + 1) \s^z} \otimes X_{[k,l]}\bigr) & =
        q^{\a \s^z} \otimes \bigl(\kv_{[k,l]} (\z,\a) X_{[k,l]}\bigr) \epp
\end{align}
These properties can be easily understood by means of the graphical
representation of the operators. As an immediate consequence we obtain
the left reduction relations
\begin{equation} \label{leftredubyone}
     \xv_{[k-1,l]} (\z,\a) \bigl(q^{(\a + s(\xv)) \s^z} \otimes X_{[k,l]}\bigr) =
        q^{\a \s^z} \otimes \bigl(\xv_{[k,l]} (\z,\a) X_{[k,l]} \bigr) \epc
\end{equation}
where $\xv = \bv, \bv^*, \cv, \cv^*, \tv^*$, and where $s(\xv)$
denotes the spin of $\xv$ which, as we recall, is $-1$ for $\bv$,
$\cv^*$, $+1$ for $\cv$, $\bv^*$ and $0$ for $\tv^*$.

The left reduction relations can be used to extend the action of
the operators $\xv_{[k,l]} (\z,\a)$ on products of local operators
with tail inductively to the semi-infinite interval $(- \infty, l]$.
Iterating (\ref{leftredubyone}) we obtain
\begin{equation} \label{allleft}
     \xv_{(-\infty,l]} (\z,\a)
        \bigl(q^{2(\a + s(\xv)) S_{(-\infty,k-1]}} \otimes X_{[k,l]}\bigr) =
        q^{2\a S_{(-\infty,k-1]}} \otimes \bigl(\xv_{[k,l]} (\z,\a) X_{[k,l]} \bigr) \epp
\end{equation}
An extension of these operators to the right is less obvious and
is essentially different for the creation and the annihilation operators.

The annihilation operators have the property that they do not enlarge,
in particular not to the right, the support of an operator $X_{[k,l]}$
they are acting on,
\begin{equation} \label{annihiliredu}
     \xv_{[k,l+m]} (\z,\a) \bigl(X_{[k,l]} \otimes \id_{[l+1,l+m]}\bigr)
        = \bigl(\xv_{[k,l]} (\z,\a) X_{[k,l]}\bigr) \otimes \id_{[l+1,l+m]}
\end{equation}
for $\xv = \bv, \cv$ and $m \in {\mathbb N}$. The proof is non-trivial
and can be found in section 3.5 of \cite{BJMST08a}. Since
(\ref{annihiliredu}) is valid for every $m \in {\mathbb N}$ we may
use this equation to extend the action of the operators $\xv = \bv, \cv$
to a semi-infinite interval, setting
\begin{equation} \label{anrredu}
     \xv_{[k,\infty)} (\z,\a) \bigl(X_{[k,l]} \otimes \id_{[l+1,\infty)}\bigr)
        = \bigl(\xv_{[k,l]} (\z,\a) X_{[k,l]}\bigr) \otimes \id_{[l+1,\infty)} \epp
\end{equation}

By way of contrast the creation operators extend
the support of any operator $X_{[k,l]}$ indefinitely to the right.
Yet, interestingly, their modes do not. Every mode $x_p$, $x = b^*, c^*,
t^*$, in the expansions (\ref{tstartaylor}), (\ref{bstartaylor}) and
(\ref{cstartaylor}) extends the length of an operator $X_{[k,l]}$ at
most by $p$ to the right. For $\tv^*_{[k,l]}$ a relatively simple and
elegant proof of this fact is presented in section 3.4 of \cite{BJMST08a}.
The authors derive an expansion of the form
\begin{multline}
     \tv^*_{[k,l+m]} (\z,\a) \bigl(X_{[k,l]} \otimes \id_{[l+1,l+m]}\bigr) \\ =
        \sum_{p=1}^{m} \bigl(\yv_{[k,l+p]}^{(p)} (\z,\a)
	                     (X_{[k,l]} \otimes \id_{[l+1,l+p]})\bigr)
			     \otimes \id_{[l+p+1,l+m]} \\[-1ex]
	+ \zv_{[k,l+m]}^{(m)} (\z,\a)
	  \bigl(X_{[k,l]} \otimes \id_{[l+1,l+m]} \bigr) \epp
\end{multline}
From the explicit form of the operators $\yv_{[k,l+p]}^{(p)}$ and
$\zv_{[k,l+m]}^{(m)}$ we can read off the following features. The
$\yv_{[k,l+p]}^{(p)} (\z,\a) \bigl(X_{[k,l]} \otimes \id_{[l+1,l+p]}
\bigr)$ have support $[k,l+p]$, and $\zv_{[k,l+m]}^{(m)} (\z,\a)
\bigl(X_{[k,l]} \otimes \id_{[l+1,l+m]}\bigr)$ has support $[k,l+m]$.
These operators depend rationally on $\z^2$. Close to $\z^2 = 1$ they
behave as $\yv_{[k,l+p]}^{(p)} (\z,\a) = \CO \bigl( (\z^2 - 1)^{p-1} \bigr)$
and $\zv_{[k,l+m]}^{(m)} (\z,\a) = \CO \bigl( (\z^2 - 1)^m \bigr)$.
This implies that $\yv_{[k,l+r]}^{(r)} (\z,\a) \bigl(X_{[k,l]} \otimes
\id_{[l+1,l+m]}\bigr)$ for $r > p$ does not contribute to the first
$p$ terms of the Taylor expansion (\ref{tstartaylor}) of
$\tv^*_{[k,l+r]} (\z,\a) \bigl(X_{[k,l]} \otimes \id_{[l+1,l+m]}\bigr)
$ in $\z^2$ around $1$. In other words the $p$th coefficient
$t_{p, [k,l+m]}^* (\a)$ of the Taylor expansion extends the support
at most by $p$ to the right,
\begin{multline} \label{tstarmodesredu}
     t_{p, [k,l+m]}^* (\a) \bigl(X_{[k,l]} \otimes \id_{[l+1,l+m]} \bigr) \\ 
        = \bigl(t_{p, [k,l+p]}^* (\a) (X_{[k,l]} \otimes \id_{[l+1,l+p]})\bigr)
	  \otimes \id_{[l+p+1,l+m]}
\end{multline}
for $p = 1, \dots, m$. We may also read this as being valid for
fixed $p$ and every $m \ge p$. Since the first factor on the right
hand side of this equation is independent of $m$, we may use
(\ref{tstarmodesredu}) to extend the action of the modes infinitely
to the right, setting
\begin{equation} \label{treduright}
     t_{p, [k,\infty)}^* (\a) \bigl(X_{[k,l]} \otimes \id_{[l+1,\infty)}\bigr)
        = \bigl(t_{p, [k,l+p]}^* (\a) (X_{[k,l]} \otimes \id_{[l+1,l+p]})\bigr)
	  \otimes \id_{[l+p+1,\infty)} \epp
\end{equation}

As for the Fermionic creation operators it follows from (\ref{cstarmodes})
that the modes of $\cv^*$ have the same reduction relations as those
of $\bv^*$. The latter can be obtained from a similar argument as
above by combining Lemma 3.1 and Lemma 3.7 of \cite{BJMST08a}. These
lemmata imply that
\begin{multline}
     \bv^*_{[k,l+m]} (\z,\a) \bigl(X_{[k,l]} \otimes \id_{[l+1,l+m]}\bigr) \\ =
        \z^{\a + s +1} \Bigl[ \sum_{p=1}^{m} \bigl(\uv_{[k,l+p]}^{(p)} (\z,\a)
	   (X_{[k,l]} \otimes \id_{[l+1,l+p]}\bigr) \otimes \id_{[l+p+1,l+m]} \\[-1ex]
	+ \vv_{[k,l+m]}^{(m)} (\z,\a) \bigl(X_{[k,l]} \otimes \id_{[l+1,l+m]}\bigr)
	  \Bigr] \epc
\end{multline}
where $\uv_{[k,l+p]}^{(p)} (\z,\a)$, $p = 1, \dots, m$, and
$\vv_{[k,l+m]}^{(m)} (\z,\a)$ have the same properties as the
operators $\yv_{[k,l+p]}^{(p)} (\z,\a)$ and $\zv_{[k,l+m]}^{(m)} (\z,\a)$
introduced above. A comparison with (\ref{bstartaylor}) then implies
that
\begin{multline} \label{bstarmodesredu}
     b_{p, [k,l+m]}^* (\a) \bigl(X_{[k,l]} \otimes \id_{[l+1,l+m]}\bigr) \\
        = \bigl(b_{p, [k,l+p]}^* (\a) (X_{[k,l]} \otimes \id_{[l+1,l+p]})\bigr)
	  \otimes \id_{[l+p+1,l+m]}
\end{multline}
for all $m \ge p$, and therefore also
\begin{multline} \label{cstarmodesredu}
     c_{p, [k,l+m]}^* (\a) \bigl(X_{[k,l]} \otimes \id_{[l+1,l+m]}\bigr) \\
        = \bigl(c_{p, [k,l+p]}^* (\a) (X_{[k,l]} \otimes \id_{[l+1,l+p]})\bigr)
	  \otimes \id_{[l+p+1,l+m]}
\end{multline}
for all $m \ge p$. Eqs.\ (\ref{bstarmodesredu}) and (\ref{cstarmodesredu})
allow us to extend the action of the modes on local operators
infinitely to the right (or, turning it the other way round, to
define the action of every mode on a local operator in terms of
finite matrices),
\begin{multline} \label{crreduright}
     x_{p, [k,\infty)}^* (\a) \bigl(X_{[k,l]} \otimes \id_{[l+1,\infty)}\bigr) \\
        = \bigl(x_{p, [k,l+p]}^* (\a) (X_{[k,l]} \otimes \id_{[l+1,l+p]})\bigr)
	  \otimes \id_{[l+p+1,\infty)} \epc
\end{multline}
where $x = b^*, c^*$.

Inserting the mode expansions (\ref{claurant}) and (\ref{blaurant}) for
the annihilation operators into the reduction relation (\ref{annihiliredu})
we obtain
\begin{equation}
     \sum_{p=0}^{n+m}
        \frac{x_{p,[k,l+m]} (\a) \bigl(X_{[k,l]} \otimes \id_{[l+1,l+m]}\bigr)}
	     {(\z^2 - 1)^p}
        = \sum_{p=0}^n
	     \frac{\bigl(x_{p,[k,l]} (\a) X_{[k,l]}\bigr) \otimes \id_{[l+1,m+1]}}
	          {(\z^2 - 1)^p} \epc
\end{equation}
where $x = b, c$, and $n = l - k + 1$ is the length of $X_{[k,l]}$.
Comparing coefficients we conclude that
\begin{multline}
     x_{p,[k,l+m]} (\a) \bigl(X_{[k,l]} \otimes \id_{[l+1,l+m]}\bigr) \\ =
        \begin{cases}
	   \bigl(x_{p,[k,l]} (\a) X_{[k,l]}\bigr) \otimes \id_{[l+1,l+m]} &
	   p = 1, \dots, n \\
	   0 & p = n + 1, \dots, m \epp
        \end{cases}
\end{multline}
Thus, any operator $X$ of length $\ell (X)$ is annihilated by
$x_p$ if $p > \ell (X)$. Since $m$ is arbitrary here we may
extend the action of the modes infinitely to the right,
\begin{multline} \label{annihireduright}
     x_{p,[k,\infty)} (\a) \bigl(X_{[k,l]} \otimes \id_{[l+1,\infty)}\bigr) \\ =
        \begin{cases}
	   \bigl(x_{p,[k,l]} (\a) X_{[k,l]}\bigr) \otimes \id_{[l+1,\infty)} &
	   p = 1, \dots, n \\
	   0 & p > n \epp
        \end{cases}
\end{multline}

Equations (\ref{treduright}), (\ref{crreduright}) and
(\ref{annihireduright}) comprise the reduction properties of
the modes for a reduction to the right. Before considering
the left reduction properties we would like to modify our formulation
in a way that allows us to define uniform mode expansions for
the action on operators that have no definite spin. For this
purpose note that the variable $\a$ is still at our disposal.
We may shift it in such a way that the spin dependence in
the mode expansions of the creation and annihilation operators
is moved from the spectral parameter to the Taylor and Laurent
coefficients. For the annihilation operators we obtain from
(\ref{claurant}), (\ref{blaurant})
\begin{equation} \label{sshiftmodeexp}
     \xv \bigl(\z, \a - s - s(\xv)\bigr) X_{[k,l]} =
        \z^{\a s(\xv)} \sum_{p=0}^n
	\frac{x_p \bigl(\a - s - s(\xv)\bigr) X_{[k,l]}}{(\z^2 - 1)^p}
\end{equation}
for $\xv = \bv, \cv$, $x = b, c$. Recall that $s(\bv) = -1$,
$s(\cv) = 1$. Similarly, the mode expansions for the creation
operators
can be written as
\begin{equation}
     \xv \bigl(\z, \a - s - s(\xv)\bigr) X_{[k,l]} =
        \z^{\a s(\xv)} \sum_{p=1}^\infty
	(\z^2 - 1)^{p-1} x_p \bigl(\a - s - s(\xv)\bigr) X_{[k,l]}
\end{equation}
for $\xv = \bv^*, \cv^*, \tv^*$. Here $s(\bv^*) = 1$, $s(\cv^*) = - 1$,
and $s(\tv^*) = 0$.

Inserting (\ref{tstarmodesredu}), (\ref{bstarmodesredu}) or
(\ref{cstarmodesredu}) we obtain
\begin{multline}
     \xv_{[k,l+m]} \bigl(\z, \a - s - s(\xv)\bigr)
        \bigl(X_{[k,l]} \otimes \id_{[l+1,l+m]}\bigr) \\ =
        \z^{\a s(\xv)} \sum_{p=1}^m
	(\z^2 - 1)^{p-1} \bigl(x_{p,[k,l+p]} \bigl(\a - s - s(\xv)\bigr)
	   (X_{[k,l]} \otimes \id_{[l+1,l+p]})\bigr) \otimes \id_{[l+p+1,l+m]} \\[-1ex]
        + \z^{\a s(\xv)} \times \CO \bigl( (\z^2 - 1)^m \bigr)
\end{multline}
for all $m \in {\mathbb N}$. From this equation we understand how
the creation operators can be extended infinitely to the right
by the inductive limit $m \rightarrow \infty$,
\begin{multline} \label{crearight}
     \xv_{[k,\infty)} \bigl(\z, \a - s - s(\xv)\bigr)
        \bigl(X_{[k,l]} \otimes \id_{[l+1,\infty)}\bigr) \\ =
        \z^{\a s(\xv)} \sum_{p=1}^\infty
	(\z^2 - 1)^{p-1} \bigl(x_{p,[k,l+p]} \bigl(\a - s - s(\xv)\bigr)
	   \\[-2ex] \times
	   (X_{[k,l]} \otimes \id_{[l+1,l+p]})\bigr) \otimes \id_{[l+p+1,\infty)}
\end{multline}
for $\xv = \bv^*, \cv^*, \tv^*$. Note that each term under the sum is of
finite length.
\begin{remark}
Comparing the last two equations we may also state that
\begin{multline}
     \xv_{[k,\infty)} \bigl(\z, \a - s - s(\xv)\bigr)
        \bigl(X_{[k,l]} \otimes \id_{[l+1,\infty)}\bigr) \\[1ex] =
        \bigl(\xv_{[k,l+m]} \bigl(\z, \a - s - s(\xv)\bigr)
	      (X_{[k,l]} \otimes \id_{[l+1,l+m]})\bigr) \otimes \id_{[m+1,\infty)} \\
        \mod \z^{\a s(\xv)} \times \CO \bigl( (\z^2 - 1)^m \bigr) \epc
\end{multline}
which is the way the inductive limit for the extension to the
right is introduced in \cite{BJMST08a}.
\end{remark}
For the extension of the mode expansion of the annihilation operators
to $+ \infty$ we use (\ref{anrredu}), (\ref{annihireduright}) and
(\ref{sshiftmodeexp}) to obtain
\begin{multline} \label{annihiright}
     \xv_{[k,\infty)} \bigl(\z, \a - s - s(\xv)\bigr)
        \bigl(X_{[k,l]} \otimes \id_{[l+1,\infty)}\bigr) \\ =
        \z^{\a s(\xv)} \sum_{p=1}^n
	\frac{\bigl(x_{p,[k,l]} \bigl(\a - s - s(\xv)\bigr) X_{[k,l]}\bigr)
	      \otimes \id_{[l+1,\infty)}}{(\z^2 - 1)^p}
\end{multline}
for $\xv = \bv, \cv$.

Equations (\ref{crearight}) and (\ref{annihiright}) describe
how the action of the creation and annihilation operators
can be extended infinitely to the right. We may use these equations
in (\ref{allleft}) in order to obtain operators acting on the entire
infinite chain. More precisely, these operators define maps
$\xv: {\cal W}_{\a - s, s} \rightarrow {\cal W}_{\a - s - s(\xv),
s + s(\xv)}$. Let $X^{(\a -s)} \in {\cal W}_{\a - s, s}$ with
non-trivial part $X_{[k,l]} \in {\cal H}_{[k,l]}$. Then
\begin{multline} \label{creafull}
     \xv \bigl(\z, \a - s - s(\xv)\bigr) X^{(\a - s)} \\ =
        \z^{\a s(\xv)} \sum_{p=1}^\infty (\z^2 - 1)^{p-1}
        q^{2(\a - s - s(\xv)) S_{(-\infty,k-1]}} \otimes
	\bigl(x_{p,[k,l+p]} \bigl(\a - s - s(\xv)\bigr)
	   \\[-2ex] \times
	   (X_{[k,l]} \otimes \id_{[l+1,l+p]})\bigr) \otimes \id_{[l+p+1,\infty)}
\end{multline}
for the creation operators $\xv = \tv^*, \bv^*, \cv^*$ and
\begin{multline} \label{annihifull}
     \xv \bigl(\z, \a - s - s(\xv)\bigr) X^{(\a - s)} \\ =
        \z^{\a s(\xv)} \sum_{p=1}^n
        q^{2(\a - s - s(\xv)) S_{(-\infty,k-1]}} \otimes
	\frac{\bigl(x_{p,[k,l]} \bigl(\a - s - s(\xv)\bigr) X_{[k,l]}\bigr)
	      \otimes \id_{[l+1,\infty)}}{(\z^2 - 1)^p}
\end{multline}
for the annihilation operators $\xv = \bv, \cv$. These
formulae show how the action of the finite operators $\xv_{[k,l]}$
introduced in the previous subsections can be naturally extended
to the infinite chain.

We define the modes $\xv_p: {\cal W}_{\a - s, s} \rightarrow
{\cal W}_{\a - s - s(\xv), s + s(\xv)}$ by their action
on $X^{(\a -s)} \in {\cal W}_{\a - s, s}$ with finite part
$X_{[k,l]}$. For the creation operators we set
\begin{multline} \label{creapremode}
     \xv_p \bigl(\a - s - s(\xv)\bigr) X^{(\a - s)} =
        q^{2(\a - s - s(\xv)) S_{(-\infty,k-1]}} \\ \otimes
	\bigl(x_{p,[k,l+p]} \bigl(\a - s - s(\xv)\bigr)
	   (X_{[k,l]} \otimes \id_{[l+1,l+p]})\bigr) \otimes \id_{[l+p+1,\infty)}
\end{multline}
where $\xv = \tv^*, \widetilde \bv^*, \widetilde \cv^*$ and
where the finite parts are $x_p = t_p^*, b_p^*, c_p^*$, whose
explicit form are given in (\ref{tstarmodes}), (\ref{bstarmodes}),
(\ref{cstarmodes}). For the annihilation operators we define
\begin{multline} \label{annihimode}
     \xv_p \bigl(\a - s - s(\xv)\bigr) X^{(\a - s)} =
        q^{2(\a - s - s(\xv)) S_{(-\infty,k-1]}} \\ \otimes
	\begin{cases}
	   \bigl(x_{p,[k,l]} \bigl(\a - s - s(\xv)\bigr) X_{[k,l]}\bigr)
	   \otimes \id_{[l+1,\infty)} & p = 1, \dots, n \\[1ex]
           0 & p > n \epc
	\end{cases}
\end{multline}
where $\xv = \bv, \cv$, and the finite parts $x_p = c_p, b_p$
have been defined in (\ref{cmodes}), (\ref{bmodes}). Altogether
we have obtained fully explicit expressions for the Fermionic
operators as maps ${\cal W}_{\a - s, s} \rightarrow
{\cal W}_{\a - s - s(\xv), s + s(\xv)}$. They are defined by
the mode expansions
\begin{subequations}
\label{modeexpblocks}
\begin{align}
     \tv^* (\z, \a - s) &
        = \sum_{p=1}^\infty (\z^2 - 1)^{p-1}
	  \tv^*_p (\a - s) \epc \\
     \xv^* \bigl(\z, \a - s - s(\xv^*)\bigr) &
        = \z^{\a s(\xv^*)} \sum_{p=1}^\infty (\z^2 - 1)^{p-1} \,
	  \widetilde \xv_p^* \bigl(\a - s - s(\xv)\bigr) \epc \\
     \xv \bigl(\z, \a - s - s(\xv)\bigr) &
        = \z^{\a s(\xv)} \sum_{p=1}^\infty (\z^2 - 1)^{-p} \,
	  \xv_p \bigl(\a - s - s(\xv)\bigr) \epc
\end{align}
\end{subequations}
where $\xv = \bv, \cv$.

We may consider these operators as `${\cal W}_{\a - s, s}
\rightarrow {\cal W}_{\a - s - s(\xv), s + s(\xv)}$ blocks'
of operators $\xv (\z) \in \End {\cal W}^{(\a)}$ (cf.\
(\ref{defwalpha})). If $X \in {\cal W}^{(\a)}$ with
$\ell (X) = n$, then
\begin{equation}
     X = \sum_{s = - n}^n X^{(\a - s)} \epc
\end{equation}
where $X^{(\a - s)} \in {\cal W}_{\a -s, s}$, and
\begin{equation}
     \xv (\z) X = \sum_{s = - n}^n \xv(\z, \a - s - s(\xv)) X^{(\a - s)} \epp
\end{equation}
In a similar way we can define the action of the modes in
(\ref{modeexpblocks}) on ${\cal W}^{(\a)}$,
\begin{equation}
     \xv_p X = \sum_{s = - n}^n \xv_p (\a - s - s(\xv)) X^{(\a - s)} \epp
\end{equation}
The generating functions $\xv (\z) = \tv^* (\z), \bv^* (\z),
\cv^*(\z), \bv (\z), \cv(\z)$ hence have the mode expansions
\begin{subequations}
\label{premodes}
\begin{align}
     \tv^* (\z) & = \sum_{p=1}^\infty (\z^2 - 1)^{p-1}\, \tv^*_p \epc \\
     \xv^* (\z) & = \z^{\a s(\xv^*)} \sum_{p=1}^\infty (\z^2 - 1)^{p-1}\,
                    \widetilde \xv^*_p \epc \label{creaprefinal} \\
     \xv (\z) & = \z^{\a s(\xv)} \sum_{p=0}^\infty \frac{\xv_p}{(\z^2 - 1)^p} \epc
     \label{annihifinal}
\end{align}
\end{subequations}
where $\xv = \bv, \cv$. Recall that the action of the modes
on the right hand side on ${\cal W}_{\a - s, s}$ is defined
by (\ref{creapremode}), (\ref{annihimode}).

We use the geometric series
\begin{equation}
     \z^2 \sum_{k=0}^\infty (-1)^k (\z^2 - 1)^k = 1
\end{equation}
in order to resum the modes in the mode expansion
(\ref{creaprefinal}) for $\bv^* (\z)$ and $\cv^* (\z)$. Then
\begin{align} \label{creafinal}
     \xv^* (\z) & = \z^{\a s(\xv^*) + 2}
                    \sum_{p=1}^\infty \sum_{k=0}^\infty (\z^2 - 1)^{p + k - 1}
		       (-1)^k\, \widetilde \xv^*_p \notag \\[1ex]
                & = \z^{\a s(\xv^*) + 2}
                    \sum_{p=1}^\infty (\z^2 - 1)^{p - 1}\, \xv^*_p \epc
\end{align}
where
\begin{equation}
     \xv^*_p = \sum_{k=1}^p (-1)^{p-k}\, \widetilde \xv_k^*
\end{equation}
for $p \in {\mathbb N}$ and $\xv^* = \bv^*, \cv^*$. As opposed to
the modes $\widetilde \xv^*_p$ the modes $\xv^*_p$ satisfy canonical
anticommutation relations with the modes of the corresponding
annihilation operators (see below).

To sum up, we have presented a fully explicit description of the
operators in the Fermionic basis which is suitable for
its implementation in a computer algebra program.

\subsection{Remarks on the commutation relations}
The commutation relations of the operators acting on finite
chains are discussed in section 4 of \cite{BJMST08a} together
with the commutation relations of the operators acting on
the space of quasi-local operators on the infinite chain.
The proofs are at the same time very technical and rather
sketchy. They are certainly the most challenging part of the
theory. Here we extract the information about the operators defined
on the finite chain which is needed if we wish to verify their
commutation relations in special cases on a computer.

Section 4 of \cite{BJMST08a} provides the commutation relations
in a somewhat implicit form involving what the authors call
$q$-exact forms. The commutativity of the Fermionic annihilation
operators with $\tv^*$, for instance, is expressed in equation
(4.3) of \cite{BJMST08a} as
\begin{multline} \label{ktcom}
    \kv (\x, \a) \bigl((\tv^* (\z, \a + 1) X_{[k,m]})
                        \otimes \id_{[m+1,l]}\bigr) -
       \tv^* (\z, \a) \bigl((\kv (\x, \a) X_{[k,m]})
       \otimes \id_{[m+1,l]}\bigr) \\[1ex]
       \simeq_\x 0 \mod (\z^2 - 1)^{l-m}
\end{multline}
for $k \le m < l$. Here `$\simeq_\x$' means equality up to a
$q$-exact form in $\x$, a notion that is explained in section~2.6
of \cite{BJMST08a}. It means that the right hand side of
(\ref{ktcom}) can be written as $\D_\x \, \x^{\a + s} m(\x^2, \a)$,
where $m$ is rational in $\x^2$ with a pole at most at $\x^2 = 1$ and
where $s$ is the spin of $X_{[k,m]}$. It follows that
\begin{align}
    & \bigl(\kv (q \x, \a) + \kv (q^{-1} \x, \a)\bigr)
    \bigl((\tv^* (\z, \a + 1) X_{[k,m]}) \otimes \id_{[m+1,l]}\bigr)
       \notag \\ & \mspace{162.mu}
       - \tv^* (\z, \a)
    \bigl(\bigl((\kv (q \x, \a) + \kv (q^{- 1} \x, \a)) X_{[k,m]} \bigr)
       \otimes \id_{[m+1,l]}\bigr)
       \notag \\[1ex] & \qd =
    \D_\x (q \x)^{\a + s} m \bigl((q \x)^2, \a\bigr)
       + \D_\x (q^{-1} \x)^{\a +s} m \bigl((q^{-1} \x)^2, \a\bigr)
       \mod (\z^2 - 1)^{l-m} \notag \\[1ex] & \qd
    = (q^2 \x)^{\a + s} m(q^4 \x^2, \a) - (q^{-2} \x)^{\a +s} m(q^{-4} \x^2, \a) 
       \mod (\z^2 - 1)^{l-m} \epp
\end{align}
If we multiply by $\ps (\z_1/\x, \a + s +1)/(4 \p \i \x^2)$, then both
sides are rational in $\x^2$ and the right hand side is obviously
regular at $\x^2 = 1$. Integrating over $\x^2$ on a small circle
around $\x^2 = 1$ and using the definition of $\cv$, Eq.~ (\ref{defcint}),
we obtain
\begin{multline} \label{ctcom}
    \cv (\x, \a) \bigl((\tv^* (\z, \a + 1) X_{[k,m]})
                        \otimes \id_{[m+1,l]}\bigr) -
       \tv^* (\z, \a)
       \bigl((\cv (\x, \a) X_{[k,m]}) \otimes \id_{[m+1,l]}\bigr) \\[1ex]
       = 0 \mod (\z^2 - 1)^{l-m} \epp
\end{multline}
In a similar way we can obtain all other commutation relations
of the finite generating functions from the relations in section~4
of \cite{BJMST08a}. If we shift $\a \rightarrow \a - s - 1$ and
perform the inductive limit described in the previous subsection
on equation (\ref{ctcom}), we obtain the commutation relation
$[\cv(\x), \tv^*(\z)] = 0$. The ideology for the derivation of
the other commutation relations (\ref{commutism}),
(\ref{anticommutism}) is similar. In any case, the hard part of
the proof is to derive relations like (\ref{ktcom}).

We will not touch this subject any deeper. Yet, we would like
to comment on how the commutation relations for the modes follow
from those for the generating functions. They are obtained
by inserting the mode expansions into the commutation relations
for the generating functions. Whenever two generating functions
commute or anticommute the same is trivially true also for
the corresponding modes. The only cases which need extra attention
are the anticommutators $[\bv(\z_1),\bv^\ast(\z_2)]_+$ and $[\cv(\z_1),
\cv^\ast(\z_2)]_+$ which can be uniformly written as
\begin{equation}
     [\xv(\z_1),\xv^* (\z_2)]_+ = \ps \bigl(\z_1/\z_2, \a s(\xv)\bigr)
\end{equation}
for $\xv = \bv, \cv$. Inserting the mode expansions (\ref{annihifinal}),
(\ref{creafinal}) on the left hand side we obtain an equation which
is equivalent to
\begin{equation}
     \sum_{k=0}^\infty \sum_{p=1}^\infty [\xv_k, \xv_p^*]_+
        \frac{(\z_2^2 - 1)^{p-1}}{(\z_1^2 - 1)^k} = 
	\frac{\z_1^2 + \z_2^2}{2\z_2^2(\z_1^2 - \z_2^2)} \epp
\end{equation}
Using (\ref{psidetox}) and comparing coefficients we obtain the
anticommutation relations
\begin{subequations}
\begin{align}
     [\xv_0, \xv_p^*]_+ & = \2 (-1)^{p-1} \epc && p \in {\mathbb N} \epc \\
     [\xv_k, \xv_p^*]_+ & = \de_{k, p} \epc && k, p \in {\mathbb N} \epp
\end{align}
\end{subequations}
Here the first set of relations is consistent with the definitions
of the the modes $\xv_0$ (see (\ref{cnull}), (\ref{bmodes})) while 
the second set of relations is just the set of canonical
anticommutation relations for Fermions. At this point it is
also clear that the coefficients $\widetilde \xv_p^*
= \xv_p^* + \xv_{p-1}^*$ do not satisfy the canonical
anticommutation relations.

\subsection{Remarks on the implementation on the computer}
\newcommand{\Lbb}{\mathbb{L}}
As explained in the previous subsections we are able to represent
operators acting on the infinite chain in terms of finite matrices.
Hence, it is possible to construct such operators explicitly on
the computer. To do so a system is needed in which symbolic expressions,
including non-commutative symbols, can be manipulated. For this
purpose we have used FORM \cite{Vermaseren00}. While there exist packages
for e.g.\ Mathematica for this purpose, our experience is, that
full blown computer algebra systems like Mathematica are not efficient
enough in terms of memory management. In comparison, FORM is a rather
primitive language with less features, which allows for a much simpler
internal representation of expressions, resulting in a more efficient
memory management.

As pointed out before, no closed formula is known for expanding
products of ultra-local operators (e.g. $\sigma_1^z\sigma_n^z$)
in the Fermionic basis. For this reason as well as for a few others
\cite{Kleinemuehl20}, the direct computation of correlation functions
by means of the JMS theorem is rather inefficient. Instead we used
the so-called exponential form which means that we only need to
construct the modes of the annihilation operators $\bv,\cv$.

In order to obtain these, in first place, the parental operator $\kv$
needs to be constructed. As discussed before, the construction can
be done for the case of a finite chain.
For our program we changed formula (\ref{eq:k}) in order to express
$\kv_{\mathrm{skal},[k,l]}$ in terms of the fused $L$-matrices
$L_{\{a,A\},j}$ introduced in \cite{BJMST08a}:
\begin{equation}
  L_{\{a,A\},j}
  =
  F_{a,A}^{-1} L_{a,j} L_{A,j} F_{a,A}
  \ ,\qquad
  F_{a,A} = 1 - \av_A \sigma_a^+ \epc
\end{equation}
which can be written explicitly as
\begin{equation}
  L_{\{a,A\},j}(\z)
  =
  \begin{pmatrix}
    1   &  0 \\
    \frac{\gamma(\zeta)}{\beta(\zeta)}\sigma_j^+  &  1
  \end{pmatrix}_a
  \begin{pmatrix}
    L_{A,j}(q\zeta)q^{-\sigma_j^z/2}    &  0 \\
    0                                   &  L_{A,j}(q^{-1}\zeta)q^{\sigma_j^z/2}
  \end{pmatrix}_a\ .
\end{equation}
It is possible to express $\kv_{\mathrm{skal},[k,l]}$ in terms of the adjoint action of this $L$-matrix,
\begin{equation}
  \Lbb_{\{a,A\},j} \id_a \otimes \id_A \otimes X_{[k,l]}
  =
  L_{\{a,A\},j}
  \id_a \otimes \id_A \otimes X_{[k,l]}
  L_{\{a,A\},j}^{-1} \epc
\end{equation}
since $[F_{a,A}, \sigma_a^+] = [F_{a,A}, q^{\a(\sigma_a^z + 2D_A)}] = 0$.

Our program then uses
\begin{multline}
  \kv_{\mathrm{skal},[k,l]}(\z,\a) X_{[k,l]} \\ =
  \tr_{a,A}\lb\{ \sigma_a^+
  \Lbb_{\{a,A\},l}(\z) \cdots \Lbb_{\{a,A\},k}(\z)
  \lb( \z^{-2s-1} y^{\sigma_a^z + 2D_A} q^{-2S_{[k,l]}} X_{[k,l]} \rb)
  \rb\} \epc
\end{multline}
where again the operator $X_{[k,l]}$ is of spin $s$. We also set
$y = q^\a$ in our program because it is only a single symbol.
Using the fused operators $L_{\{a,A\},j}$ instead of the simple
$L$-operators effectively means an early simplification of the
building blocks of $\kv$.

It is then convenient to compute the action of
$\kv_{\mathrm{skal},[k,l]}(\z,\a)$ on the canonical basis of
$\End {\cal H}_{[k,l]}$ constructed with the single-entry matrices
$e_{j\a}^{\phantom{j}\beta}$. This will also allow for an easy
parallelization later. For each element of the basis the
`innermost' part
\begin{equation}
  \z^{-2s-1} y^{\sigma_a^z + 2D_A} q^{-2S_{[k,l]}} X_{[k,l]} 
\end{equation}
is constructed first. The spin-$\frac{1}{2}$ auxiliary space is
explicitly used, whereas $q^{2\a D_A}$ is represented by a single
non-commuting symbol. Then, in a loop, each step applies a single
fused $L$ matrix $\Lbb_{\{a,A\},j}(\z)$ after which all symbols
are commuted and sorted. When the loop is finished, the elements
$e_{a+}^{\phantom{a}+}$, $e_{a-}^{\phantom{a}-}$ and
$e_{a+}^{\phantom{a}-}$ can be discarded, because of the operator
$\sigma_a^+$ and the trace $\tr_a$. The remaining trace $\tr_A$
can then be taken by discarding all terms which are not `balanced'
in $\av_A^\ast$ and $\av_A$ and replacing $y^{2D_A} q^{m D_A}
\rightarrow \frac{1}{1 - y^2 q^m}$.

Since the intermediate expressions can become very large (a few
gigabytes for $n=5$), it is crucial to carefully choose which
simplification is done at which point. Too many simplifications
can slow down the calculations, but, on the other hand, too few
can increase the memory usage dramatically. There is no definite
rule in this regard, and the appropriate places in the program,
where simplifications are most efficient, need to be determined
for each calculation individually. It is also helpful to use simple
symbols wherever possible, e.g.\ representing the functions
$\beta(\z), \gamma(\z)$ by single symbols rather than rational
expressions. Such symbols should then be expanded at the latest
possible stage.

After constructing the operator $\kv_{\mathrm{skal},[k,l]}$, the
Laurent-coefficients are obtained. This is done slightly differently
from (\ref{eq:laurent_rho}) and (\ref{eq:laurent_kappa}). The reason
is that using (\ref{eq:laurent_rho}) and (\ref{eq:laurent_kappa})
directly would trigger the same intermediate calculations to be performed 
multiple times. Instead, $\kv_{\mathrm{skal},[k,l]}(\z,\a)$ is
loaded and a loop counts down $j = n,\dots,1$. For each $j$ we run 
over $\eps=-1,0,1$ and set
\begin{subequations}
\begin{align}
	\rho_{j,[k,l]}^{(\eps)}(\a) & =
 	\lb.(\z^2 - q^{2\eps})^j\ \kv^{(j,\eps)}_{\mathrm{skal},[k,l]}(\z,\a)
 	\rb|_{\z^2 = q^{2\eps}} \epc \\[1ex]
	\kappa_{j,[k,l]}(\a) & =
 	\lb.\z^{2j}\ \kv^{(j,2)}_{\mathrm{skal},[k,l]}(\z,\a)
 	\rb|_{\z^2=0} \epc
\end{align}
\end{subequations}
where
\begin{equation}
	\kv^{(j,\epsilon)}_{\mathrm{skal},[k,l]}(\z,\a) =
	\kv_{\mathrm{skal},[k,l]}(\z,\a)
	-
	\sum_{m=j}^n \sum_{\delta = -1}^{\epsilon-1}
	\frac{\rho_{m,[k,l]}^{(\delta)}(\a)}{(\z^2-q^{2\delta})^m}
	-
	\sum_{m=j+1}^n \frac{\kappa_{m,[k,l]}(\a)}{\z^{2m}} \epp
\end{equation}

The $\kv^{(j,\epsilon)}_{\mathrm{skal},[k,l]}(\z,\a)$ can then be
`accumulated' during the loop, instead of building them anew in 
every step, and therefore become smaller in every iteration of the
loop. During traversal of the loop $\kappa_0(\a)$ is accumulated
as well, which contains all $\rho_j^{(\a)}$. Since the $\rho_j^{(0)}$
are only needed for $\kappa_0(\a)$, they are discarded as soon as
they have entered $\kappa_0(\a)$. Each time one of the Laurent
coefficients is completed, a sorting is done in order to keep
$\kv^{(j,\epsilon)}_{\mathrm{skal},[k,l]}(\z,\a)$ as small as possible.
The modes of the annihilation operators are then easily constructed
according to (\ref{tcmodes}) and (\ref{bmodes}).

In order to verify the correctness of the obtained operators we
utilize the \mbox{(anti-)}com\-mutation relations. The modes of
the transfer matrix $\tv^\ast$ have to commute amongst themselves
as well as with all modes of the Fermionic operators. The modes
of $\bv, \bv^\ast$ and $\cv, \cv^\ast$ form two families of
Fermions and obey the canonical relations as noted before.
For the sake of brevity we do not go into detail regarding the
construction of the creation operators, since they are not
directly needed when using the exponential form. We did, however,
construct all of them explicitly, so we were able to verify all
\mbox{(anti-)}com\-mutation relations that are sensibly defined
on an interval of a given length~$n$. Additionally, for the
annihilation operators, the annihilation relations (\ref{annihipropmodes})
and reduction relations (\ref{leftredubyone}), (\ref{annihiliredu}),
(\ref{bstarmodesredu}), (\ref{cstarmodesredu}) were verified.
In our experience most of these relations depend on the correctness
of the involved operators in a very sensitive manner. Typically,
even small errors in the program led to objects which obey none
of the tested relations. So, if the operators constructed obey
all above relations, it is a strong indication of their correctness.
\section{Correlation functions}
\label{sec:cf}
In this section we use the exponential form (\ref{expkappa})
in order to study short-range correlation functions of 
spin-reversal invariant operators at finite temperature.
In Appendix~\ref{app:jms_to_expform} we show how
(\ref{expkappanull}) follows from the JMS theorem and argue
that it continues to hold for $\k \ne 0$ if we restrict the
action of $Z^\k$ to spin-reversal invariant operators.

\subsection{The exponential form}
Inserting the mode expansions (\ref{annihimodes}) for the
annihilation operators into (\ref{defomop}) and using Cauchy's
theorem we obtain a mode expansion of the operator $\Om$,
\begin{equation}
     \Om = \sum_{k,p=1}^\infty \frac{\om_{k-1, p-1} (\a)\, \bv_k \cv_p}
                                    {(k-1)! (p-1)!} \epc
\end{equation}
where
\begin{equation} \label{omderivs}
     \om_{k, p}
        = \6_{\z_1^2}^k \6_{\z_2^2}^p \; \z_1^{-2} \z_2^{-2}
	  \bigl(\z_1/\z_2\bigr)^{- \a}
	  \bigl(\om_0 (\z_1/\z_2; \a) - \om(\z_1, \z_2; \a)\bigr)
	  \Bigr|_{\z_1^2 = \z_2^2 = 1} \epp
\end{equation}

Recall that we denoted by ${\cal W}_{\a, 0, [1,n]} \subset
{\cal W}_{\a, 0}$ the space of quasi-local operators of spin~$0$
with tail $\a$ and with support $\supp X^{(\a)} \subset [1,n]$.
Due to the annihilation property (\ref{annihipropmodes})
of the modes, the restriction of $\Om$ to this subspace is
a finite sum,
\begin{equation} \label{ommodes}
     \Om\bigr|_{{\cal W}_{\a, 0, [1,n]}} =
        \sum_{k,p=1}^n \frac{\om_{k-1, p-1} (\a)\, \bv_k \cv_p}
	                    {(k-1)! (p-1)!} \epp
\end{equation}
In \cite{BJMS09a} the authors constructed a basis of
${\cal W}_{\a, 0, [1,n]}$ as a submodule of the Fermionic basis.
This submodule is generated by the action of monomials of the form
\begin{equation}
     \tv^*_{p_1} \dots \tv^*_{p_j} \bv^*_{q_1^+} \dots \bv^*_{q_k^+}
        \cv^*_{q_k^-} \dots \cv^*_{q_1^-}
\end{equation}
onto the vacuum $q^{2 \a S(0)}$, where
\begin{equation}
     j + 2k \le n \epc \qd p_l, q_m^+, q_n^- \le n \epp
\end{equation}
In particular, the number $k$ of creation operators $\bv^*_{q_m^+}$,
$\cv^*_{q_n^-}$ of Fermions is restricted by
\begin{equation}
     k \le \Bigl[\frac{n}{2}\Bigr] \epc
\end{equation}
where the bracket denotes the integer part. It follows that
\begin{equation}
     \Bigl(\Om\bigr|_{{\cal W}_{\a, 0, [1,n]}}\Bigr)^{[\frac{n}{2}] + 1} = 0 \epc
\end{equation}
and therefore
\begin{equation} \label{expnil}
     \re^\Om\bigr|_{{\cal W}_{\a, 0, [1,n]}}
        = \sum_{k=0}^{[\frac{n}{2}]} \frac{(\Om|_{{\cal W}_{\a, 0, [1,n]}})^k}{k!} \epp
\end{equation}
Using (\ref{cmodes}), (\ref{bmodes}), (\ref{annihimode}) and
(\ref{ommodes}), this can be realized by means of a finite sum over
finite products of finite matrices and is the formula used below
for the calculation of short-range correlation functions.

In order to explicitly calculate correlation functions on the
computer we use (\ref{expkappa}), (\ref{ommodes}) and (\ref{expnil}).
At this point we have shown that all sums involved are finite
and all operators can be represented in terms of finite matrices.
We do not evaluate the exponential given in (\ref{expnil}) directly
for two reasons. On the one hand we argue that our method holds
for operators that are invariant under spin reversal, meaning
that it is only valid on a subspace of ${\cal W}_{\a, 0, [1,n]}$.
On the other hand the involved expressions tend to grow rather big,
making it important to save as much memory as possible. For these
reasons we only calculate the action of the exponential on a given
operator $X^{(\a)} \in {\cal W}_{\a, 0, [1,n]}$.

We apply $\Om|_{{\cal W}_{\a, 0, [1,n]}}$ repeatedly to $X^{(\a)}$,
filtering out every term that becomes zero due to the nilpotence
of the modes before inserting explicit matrices and simplifying
as much as possible before the next step. During this process the
sum (\ref{expnil}) can be accumulated. The treatment of the
functions $\omega$ and $\hat\omega$ is explained in the next
section and in Appendix~\ref{app:multtoadd}.

This process then results in expressions rational in $q$ that
contain $\hat\omega$ and its derivatives as symbols, as shown in
\cite{BDGKSW08} for $n=1\dots4$ and in Appendix~\ref{app:n5_explicit}
for $n=5$.

\subsection{Finite temperature short-range correlation functions}
As was explained in the introduction, the Fermionic basis
approach applies to very general situations. It holds for any
realization of the functional $Z^\k$ which, in turns, can be
used to realize a huge class of reduced density matrices
including the cases of the canonical ensemble and of generalized
Gibbs ensembles. The characteristics of any functional $Z^\k$
enter the formalism only through two functions $\r$ and
$\om$. These functions were called the `physical part'
of the problem in \cite{BoGo09} as they entirely fix `the
experimental conditions' under which the correlation
functions under consideration are determined. For the
actual computation of short-range correlation functions
what is still needed is an efficient description of the
physical part.

Such a description, valid for the case of finite-temperature
correlations in the infinite chain, was obtained in \cite{BoGo09}.
The core part of this description is a non-linear integral
equation for an auxiliary function $\fa$ that had occurred
before in the derivation of an efficient thermodynamics of
the XXZ chain \cite{Kluemper93} and in the derivation of
multiple-integral representations for static finite-temperature
correlation functions of the same model \cite{GKS04a}.
Let us define the `bare energy function'
\begin{equation}
     \re(\la) = \cth(\la) - \cth(\la - \i \g)
\end{equation}
and the kernel function
\begin{equation}
     K_\a (\la) = q^{- \a} \cth (\la + \i \g) - q^\a \cth (\la - \i \g) \epc
\end{equation}
where $- \i \g = \ln q$. Using these functions the nonlinear integral
equation for $\fa$ takes the form
\begin{equation} \label{nlietemhom}
     \ln (\fa (\la, \k)) = 2 \i \g \k + \frac{2\i J\sin(\g) \re (\la)}{T}
        - \int_{\cal C} \frac{\rd \m}{2 \p \i}
		        K_0 (\la - \m) \ln (1 + \fa (\m, \k )) \epp
\end{equation}
Here the `Trotter limit', corresponding to infinitely many
horizontal lines in the definition of $Z^\k$, is already taken
(for more details see \cite{BoGo09}). The magnetic field $h > 0$
appearing in the Hamiltonian can be taken into account by setting
\begin{equation}
     \k = \frac{\i h}{2 \g T} \epp
\end{equation}
The precise definition of the integration contour depends on the
parameter regime. For $\g \in (0,\p/2)$ (implying that $0 < \D < 1$)
and $h > 0$, for instance, we may take the contour sketched in
Figure~\ref{fig:incontcrit}.
\begin{figure}
\centering
\includegraphics[width=.60\textwidth,angle=0,clip=true]{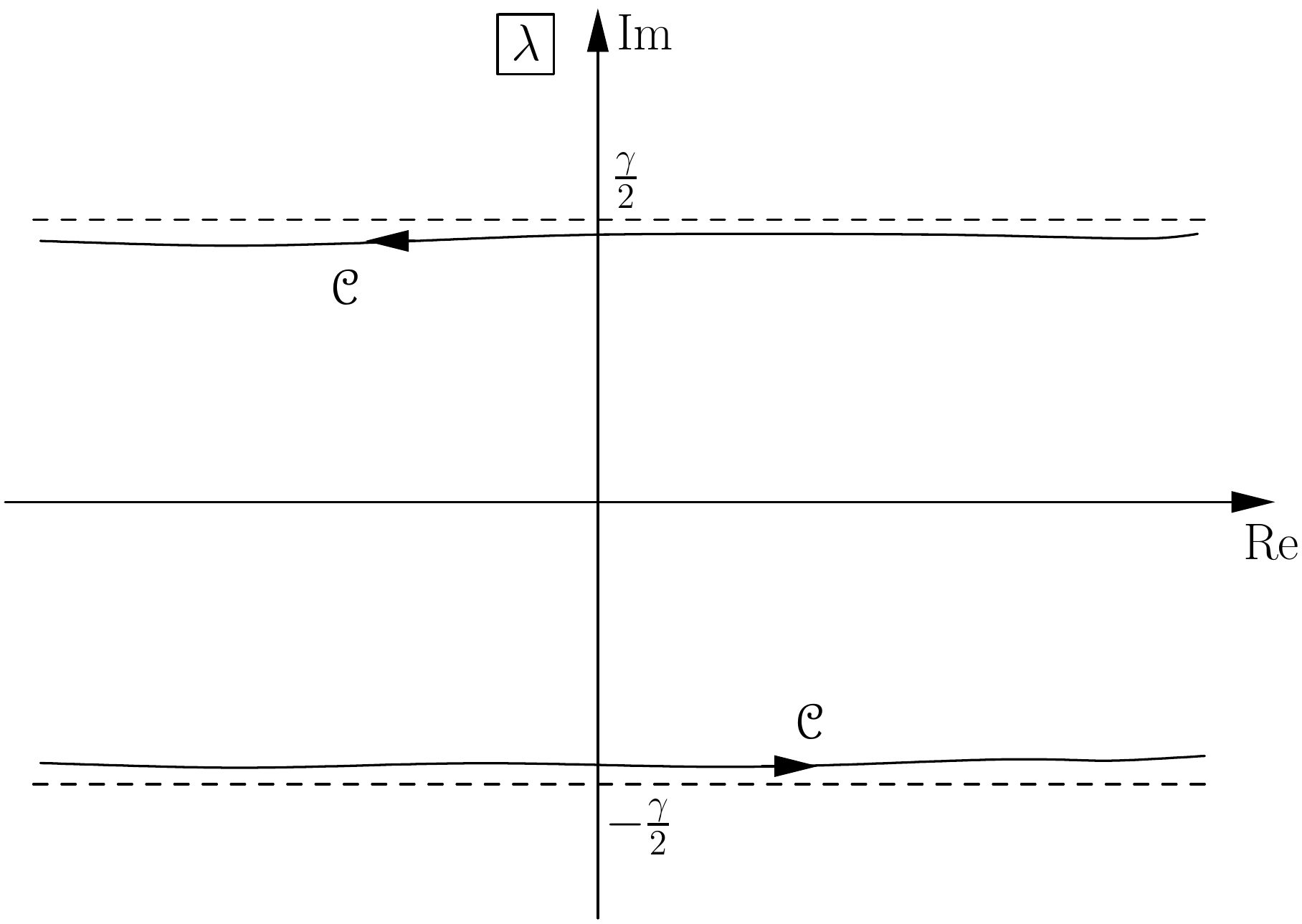}
\caption{\label{fig:incontcrit} For $0 < \D < 1$, $h > 0$ the canonical
contour ${\cal C}$ surrounds the real axis in a counterclockwise
manner inside the strip $- \frac{\g}{2} < \Im \la < \frac{\g}{2}$.} 
\end{figure}

The function $\r$, defined as an eigenvalue ratio in (\ref{defrho}),
has an integral representation involving the auxiliary function
and the bare energy,
\begin{equation} \label{rhoint}
     \r(\z) = \hat \r (\la) = q^\a \exp \biggl\{
                  \int_{\cal C} \frac{\rd \m}{2 \p \i} \: \re (\m - \la)
		  \ln \biggl( \frac{1 + \fa (\m, \k + \a)}
		                   {1 + \fa (\m, \k )} \biggr) \biggr\} \epc
\end{equation}
where $\la = \ln \z$. For the definition of $\om$ in the
finite temperature case we first of all need to introduce a function
$G$ which is the unique solution of the linear integral equation
\begin{equation} \label{newg}
     G(\la, \n) = q^{-\a} \cth(\la - \n + \i \g) - \hat \r (\n) \cth (\la - \n) 
                  + \int_{\cal C} \frac{\rd \m}{2\p \i}
		       \frac{K_\a (\la - \m) G(\m, \n)}
		            {\hat \r (\m) (1 + \fa(\m,\k))} \epp
\end{equation}
For $\la_j = \ln \z_j$, $j = 1, 2$, we set
\begin{equation} \label{Psi}
     \Ps (\z_1,\z_2; \a) = 
	\int_{\cal C} \frac{\rd \m}{2\p \i}
	   \frac{\bigl(q^\a \cth(\m - \la_1 + \i \g)
	               - \hat \r(\la_1) \cth(\m - \la_1) \bigr) G(\m, \la_2)}
		{\hat \r (\m)(1 + \fa(\m,\k))} \epp
\end{equation}
Then
\begin{multline} \label{ompsi}
     \om(\z_1, \z_2; \a) \\ = 2 (\z_1/\z_2)^\a \Ps(\z_1, \z_2; \a) - \D \ps(\z_1/\z_2, \a)
                       + 2 \bigl( \r(\z_1) - \r(\z_2) \bigr) \ps(\z_1/\z_2, \a) \epp
\end{multline}
This function has to be used in (\ref{omderivs}) for the actual
computation of the short-range correlation functions.

As we have to take the limit $\a \rightarrow 0$ in (\ref{expkappa})
and as the Fermi operators have first order poles in $\a$
by construction, it suffices to consider the function $\om$ up
to first order in~$\a$. Let 
\begin{equation}
     \hat \om (\la_1, \la_2; \a)
	= \bigl(\z_1/\z_2\bigr)^{- \a}
	  \bigl(\om_0 (\z_1/\z_2; \a) - \om(\z_1, \z_2; \a)\bigr) \epc
\end{equation}
where $\la_j = \ln \z_j$, $j = 1, 2$. The functions that occur in
the description of the physical correlation functions in the limit
$\a \rightarrow 0$ can be chosen as
\begin{subequations}
\label{omsnumerics}
\begin{align}
     \hat \om (\la_1, \la_2) & = \hat \om (\la_1, \la_2; 0) \epc \\
     \hat \om' (\la_1, \la_2)
        & = \2 \6_\a \bigl(\hat \om (\la_1, \la_2; \a)
	                   - \hat \om (\la_2, \la_1; \a)\bigr)\bigr|_{\a = 0} \epp
                             \label{antiomeprime}
\end{align}
\end{subequations}
Only the antisymmetric combination (\ref{antiomeprime}) remains
in the $\a \rightarrow 0$ limit of equation (\ref{ommodes}).
The choice (\ref{omsnumerics}) is convenient for the numerical
evaluation of the short-range correlation functions. As has been
shown in Appendix~B of \cite{BoGo09} these functions coincide with
the functions $\om (\la_1, \la_2)$ and $\om' (\la_1, \la_2)$
described in Section~4 of \cite{BDGKSW08}.

We have calculated the Fermionic basis representation of the
two-point functions $\<\s_1^z \s_n^z\>$ and $\<\s_1^x \s_n^x\>$
for $n = 2, 3, 4, 5$ from (\ref{expkappa}), (\ref{expnil}).
At the last stage we have used Appendix~\ref{app:multtoadd}
to replace the $\om_{k,p}$ by derivatives of $\hat \om$ and
$\hat \om'$. The results for $n = 2, 3, 4$ are the same as
previously obtained in the inhomogeneous case \cite{BGKS07,%
BDGKSW08}. The formulae for $n = 5$ are new. They are shown
in Appendix~\ref{app:n5_explicit}. We have then used the
representations of $\hat \om (\la_1,\la_2)$ and
$\hat \om' (\la_1, \la_2)$ derived in \cite{BDGKSW08} in order
to evaluate the two-point functions numerically.

\begin{figure}
\centering
\includegraphics[width=.98\textwidth,clip=true]{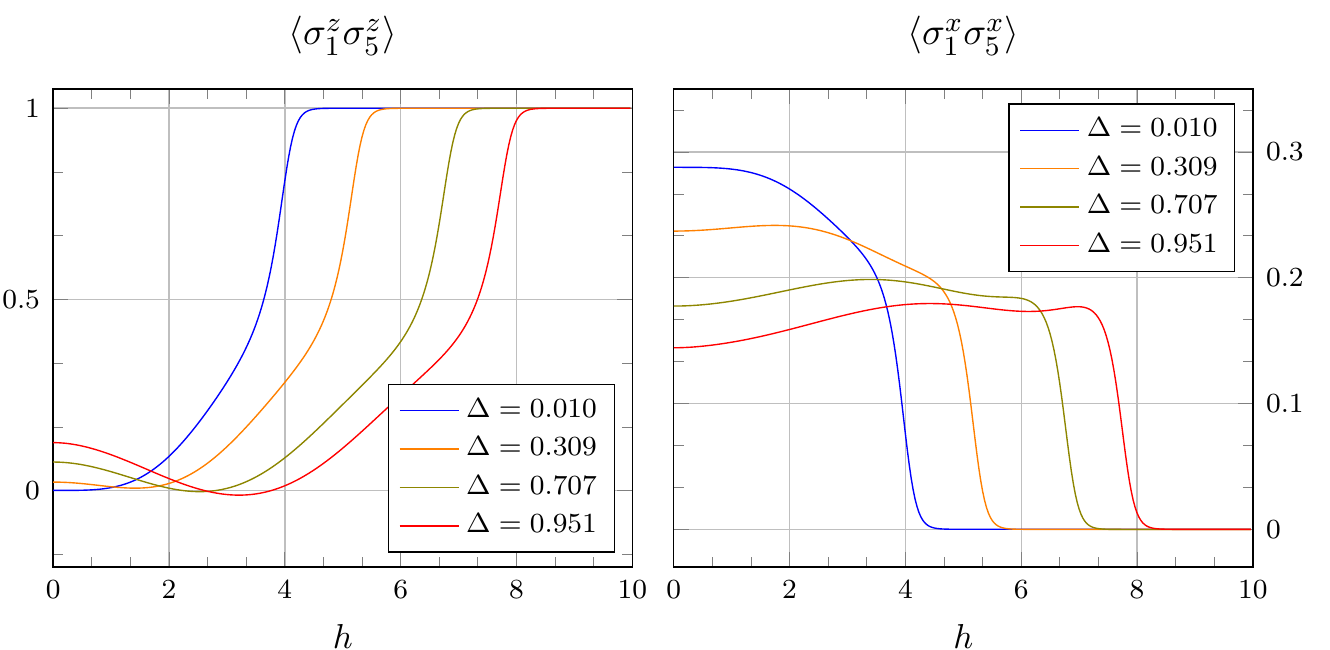}
\caption{\label{fig:n5_hvar_Deltapar_Tfix}
Fourth-neighbour correlation functions as functions of the magnetic
field for various values of $\D$ in the massless regime and $T/J = 0.1$.}
\end{figure}
The graphs show a rich, non-monotonous behaviour of the
correlation functions, reflecting the interplay of temperature,
external magnetic field and the relative strength of the Ising
and exchange interactions. Fig.~\ref{fig:n5_hvar_Deltapar_Tfix}
shows how the fourth-neighbour two-point functions depend on
the magnetic field at a relatively low temperature of $T/J = 0.1$
and for various values of the anisotropy parameter $\D$. The
longitudinal correlation functions saturate for $h$ above
the upper critical field $h_u = 4J(\D + 1)$, where the transverse
correlation functions show a complementary behaviour and vanish.
At $h = 0$ the longitudinal correlation functions show larger
positive correlations for larger $\D$, corresponding to an
increased Ising interaction, which is the intuitively expected
behavior. With increasing magnetic field the correlation functions
first diminish before they start growing again in opposite
order, such that the correlations are largest for the smallest
Ising interaction. This is perhaps somewhat counterintuitive, but
is at least in accordance with the fact that smaller $\D$ corresponds
to a smaller saturation field.

\begin{figure}
\centering
\includegraphics[width=.98\textwidth,clip=true]{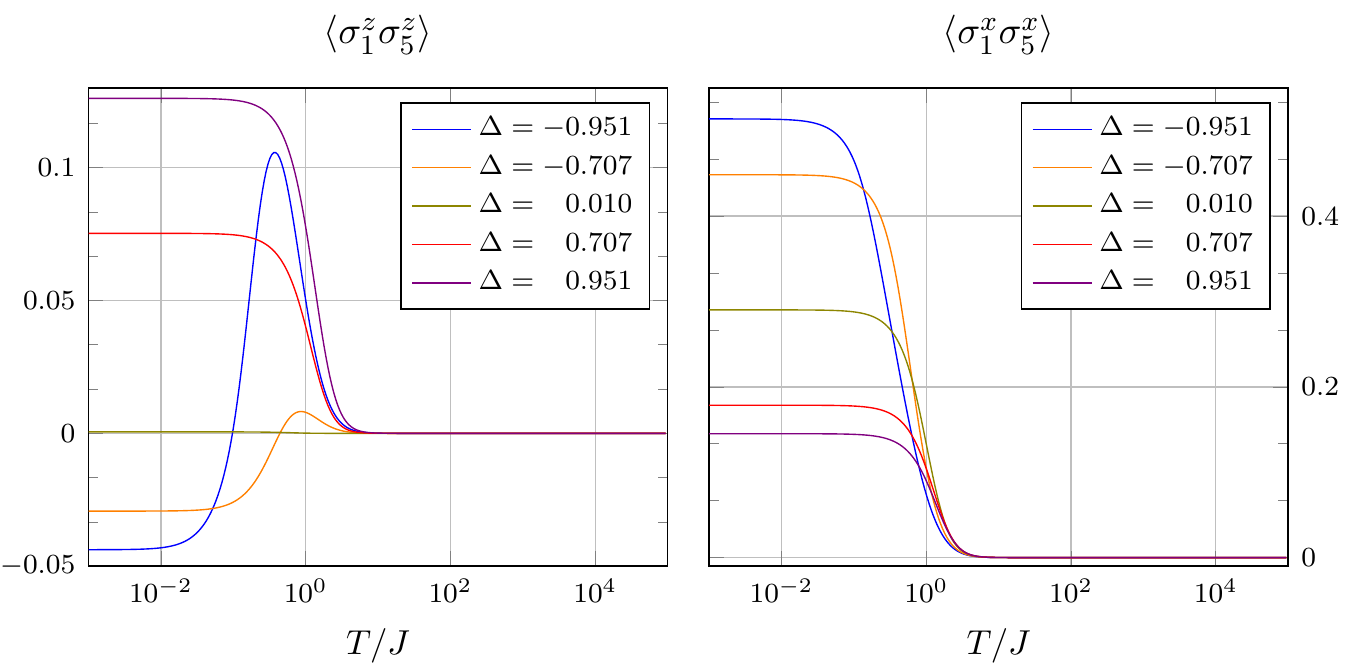}
\caption{\label{fig:n5_Tvar_Deltapar_hfix}
Fourth-neighbour correlation functions as functions of $T/J$
for various values of $\D$ in the massless regime and $h = 0$.}
\end{figure}
Fig.~\ref{fig:n5_Tvar_Deltapar_hfix} shows the temperature dependence
of the two-point functions at $h= 0$ and for various values of
$\D$ between $-1$ and $1$. For negative values of $\D$ we observe
a characteristic sign change of the longitudinal correlation
functions as the temperature increases. This sign change was
first discovered in a numerical study \cite{FaMc99}, where it
was interpreted as a `quantum to classical crossover'.
Fig.~\ref{fig:n5_Tvar_hpar_Deltafix} shows again the
temperature dependence of the fourth-neighbour two-point
function, this time at a fixed value $\D = 0.707$ of the
anisotropy for various magnetic fields. We see that the
correlations may change monotonously or non-monotonously
depending on the value of the external field.
\begin{figure}
\centering
\includegraphics[width=.98\textwidth,clip=true]{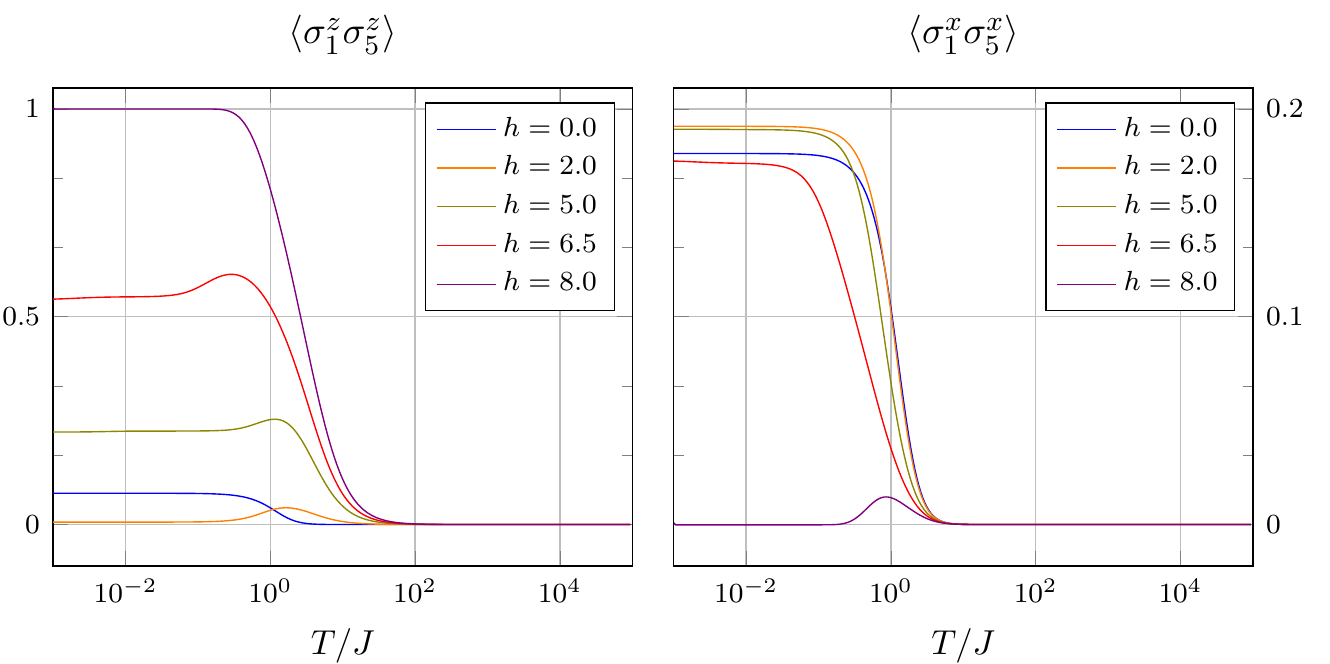}
\caption{\label{fig:n5_Tvar_hpar_Deltafix}
Fourth-neighbour correlation functions as functions of $T/J$
for various magnetic fields and $\D = 0.707$.}
\end{figure}

For the sake of completeness and also for comparison we have
included plots of the two-point functions of shorter range, 
$n = 2, 3, 4$, in Figs.~\ref{fig:n24_Tvar_Deltapar_hfix}-%
\ref{fig:n24_hvar_Deltapar_Tfix}. We have shown these plots
before in \cite{BDGKSW08} with the same choice of parameters.
But when we recomputed them for the present work we noticed that
the data for anisotropy parameters close to $\D = 1$ did not
have the accuracy claimed in that earlier work. This concerns
mostly the plots for $n = 4$ and $\D = 0.995$ for which the
numerical error was of an order of magnitude that could be
recognized with the naked eye. The reason for the numerical
error is that the individual terms in the Fermionic basis
representation of the correlation functions of the two-point
functions become very large in modulus as $\D$ goes to 1.
The representation becomes a huge sum of large terms that
alternate in sign and sum up to a small number. This is
numerically delicate and requires to have a good accuracy for
the individual terms.

\begin{figure}
\centering
\includegraphics[width=.98\textwidth,clip=true]{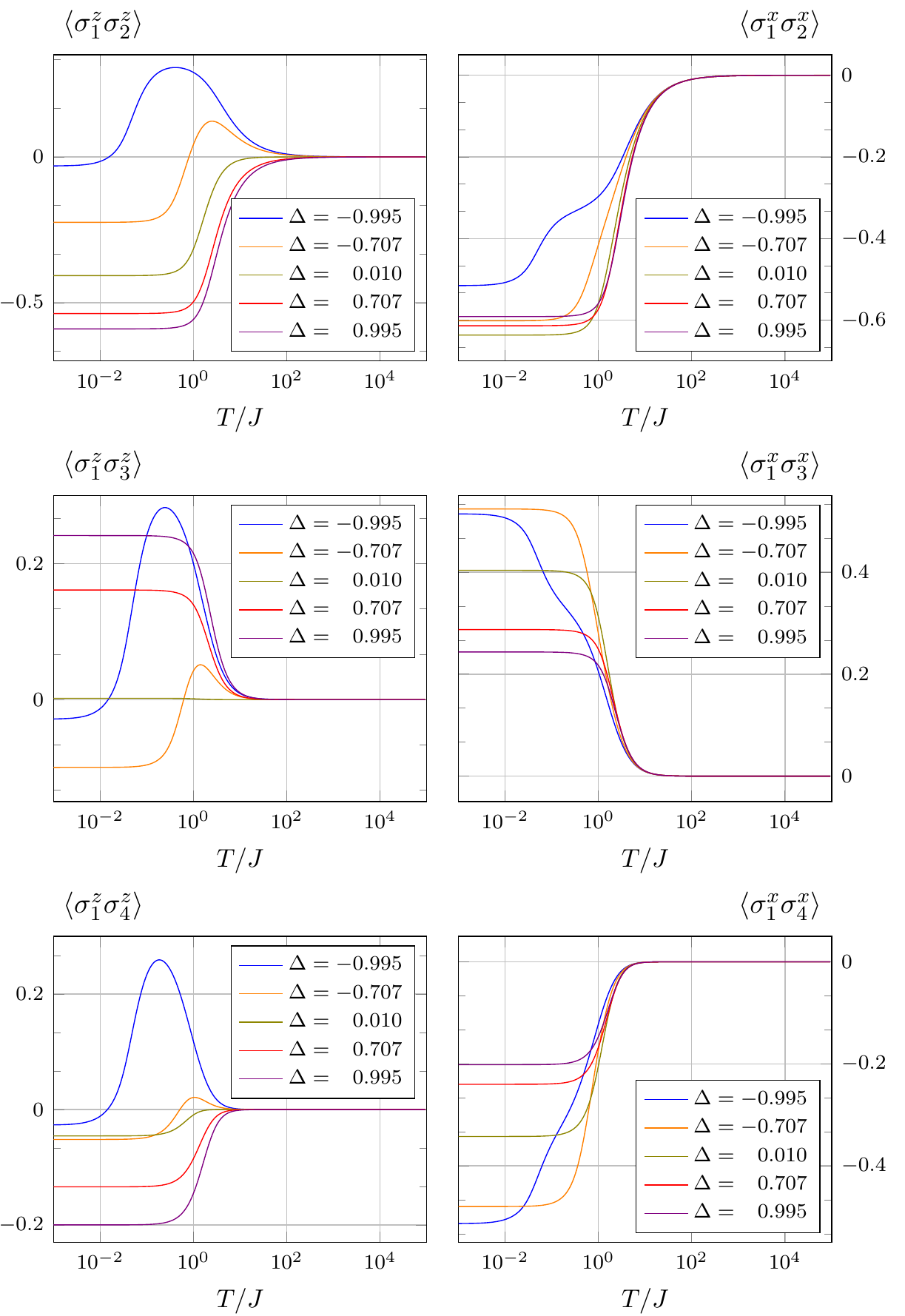}
\caption{\label{fig:n24_Tvar_Deltapar_hfix}
Short-range correlation functions ranging over up to four lattice
sites as a function of temperature for various values of $\D$ in the
massless regime and $h = 0$.}
\end{figure}
\begin{figure}
\centering
\includegraphics[width=.98\textwidth,clip=true]{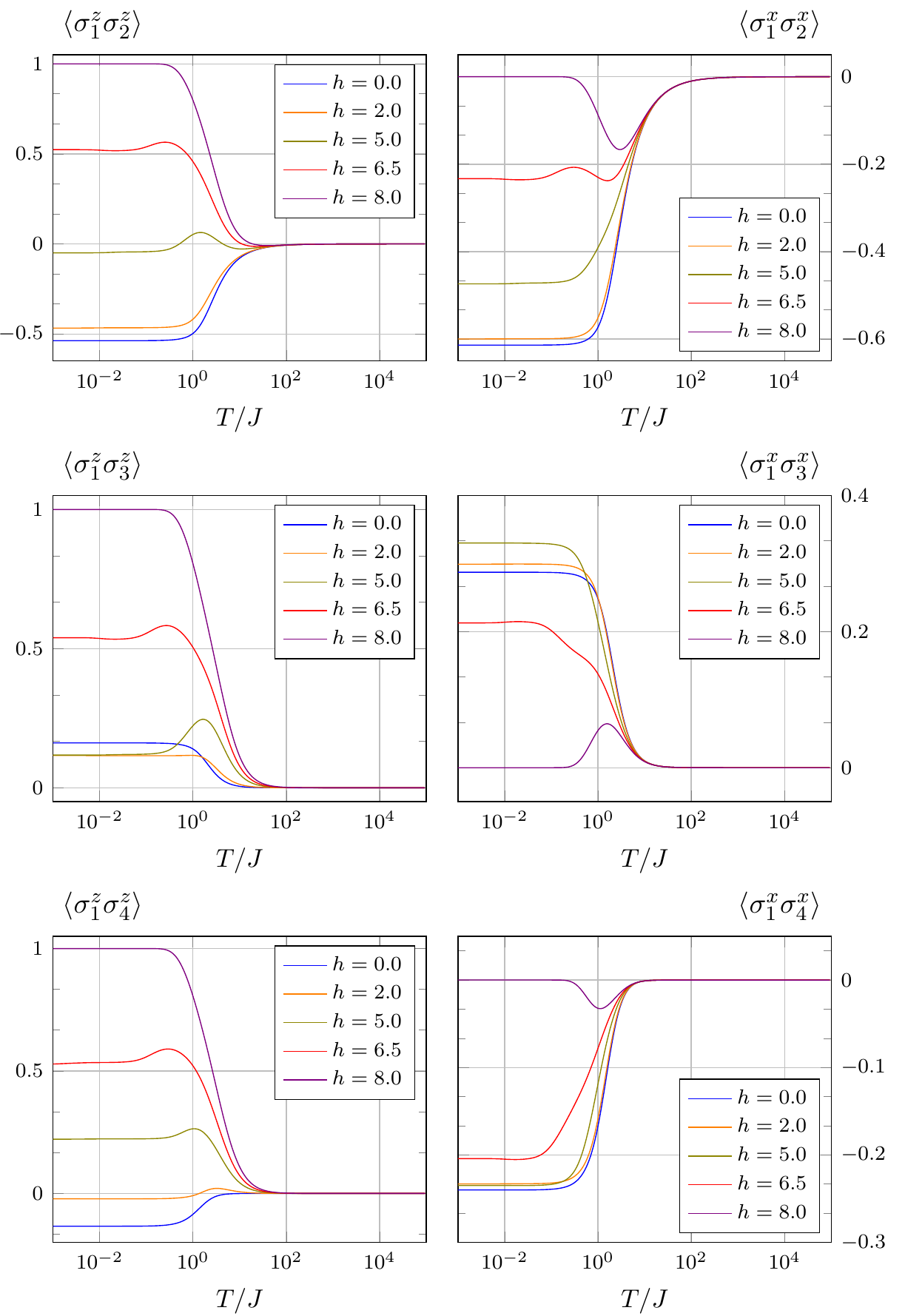}
\caption{\label{fig:n24_Tvar_hpar_Deltafix}
Short-range correlation functions ranging over up to four lattice
sites as a function of temperature for various values of the
magnetic field and $\D = 0.707$.}
\end{figure}
\begin{figure}
\centering
\includegraphics[width=.98\textwidth,clip=true]{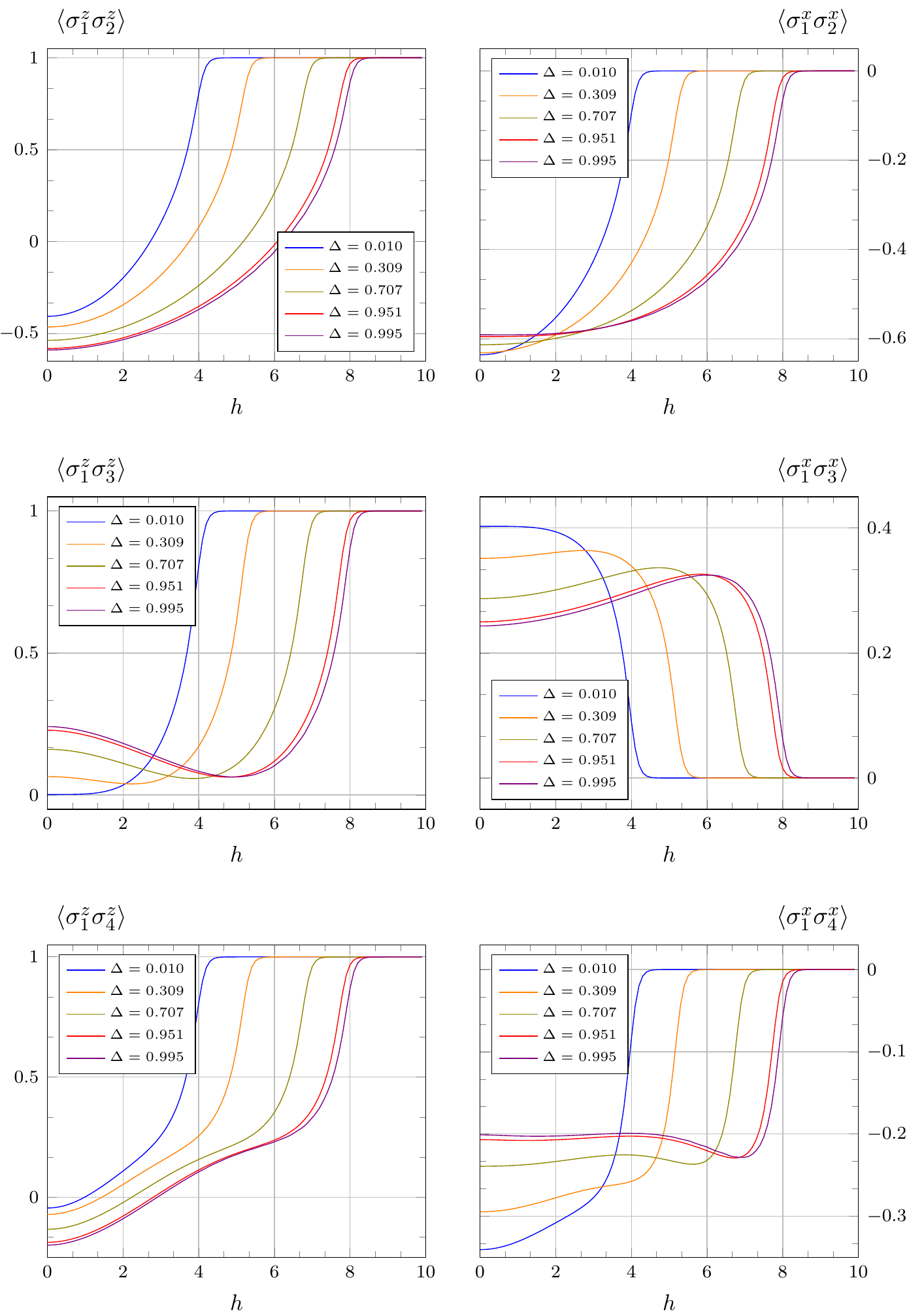}
\caption{\label{fig:n24_hvar_Deltapar_Tfix}
Short-range correlation functions ranging over up to four lattice
sites as a function of the magnetic field for various value of $\D$
and $T/J = 0.1$.}
\end{figure}

\subsection{Comparison with asymptotic results}
An interesting application of exact results is the test of
asymptotic formulae which often do not come with error estimates.
In the literature there are at least two different results
for the large-distance asymptotics of the static correlation
functions of the XXZ chain in the critical regime at zero and
small finite temperature, the work of Lukyanov and Terras
\cite{Lukyanov99,LuTe03}, in which fully explicit formulae
for $T = 0$ and $h = 0$ were derived, and the work \cite{DGK13a,%
DGK14a} treating the case of finite $h$ at small~$T$. The
question we would like to answer is, how large is large,
or, starting from which distance do the asymptotic formulae
provide reliable approximations to the correlation functions.

Lukyanov and Terras consider the long-distance asymptotic
behaviour of the two-point correlation functions
$\langle\sigma_1^x\sigma_n^x\rangle$ and
$\langle\sigma_1^z\sigma_n^z\rangle$ combining a Gaussian
conformal field theory with input from the $q$-vertex operator
approach applied to the XYZ chain \cite{Lashkevich02}.
They show that
\begin{multline} \label{lutexx}
  \langle\sigma_1^x\sigma_n^x\rangle
  \sim
  \frac{(-1)^nA}{n^\nu}\left\{ 1 - \frac{B}{n^{4/\nu-4}}
  + \mathcal{O}\left(n^{-2}\log n, n^{8-8/\nu}\right) \right\} \\[1ex]
  -\frac{\tilde A}{n^{\nu+1/\nu}}\left\{ 1 + \frac{\tilde B}{n^{2/\nu-2}}
  + \mathcal{O}\left(n^{-2}\log n, n^{4-4/\nu}\right) \right\} + \dots
\end{multline}
and
\begin{multline} \label{lutezz}
  \langle\sigma_1^z\sigma_n^z\rangle
  \sim
  -\frac{1}{\pi^2\nu n^2} \left\{ 1 + \frac{\tilde B_z}{n^{4/\nu-4}}
   \left(1 + \frac{2-\nu}{2(1-\nu)}\right)
   + \mathcal{O}\left(n^{-2}\log n, n^{8-8/\nu}\right) \right\} \\[1ex]
  +\frac{(-1)^n A_z}{n^{1/\nu}}\left\{ 1 - \frac{B_z}{n^{2/\nu-2}}
  + \mathcal{O}\left(n^{-2}\log n, n^{4-4/\nu}\right) \right\} + \dots
\end{multline}
where the functions $A,B,\tilde A, \tilde B, A_z,B_z,\tilde B_z$ depend
only on $\nu$ and are given explicitly in their work. Note that we
have adapted their formulae to our conventions by supplying a factor
of $(-1)^n$ to the transverse correlation functions. Lukyanov and Terras
are using the Hamiltonian $H_{\rm LT} = - H_L/2$ which is unitarily
equivalent to $H_L$ \cite{YaYa66a}. The unitary transformation is
induced by the adjoint action of $U = \prod_{j=-L}^L \s_{2j}^z$
accompanied by a reparametrization of $\D$ which we can take into
account with the identification
\begin{equation}
     \nu = 1 - \frac{\g}{\p} \epp
\end{equation}

For a comparison with our results we take the definition of the
function $\hat \om$ from \cite{BDGKSW08}. For $T=h=0$ the auxiliary
functions $\mathfrak{b}$ and $\bar{\mathfrak{b}}$ given in that
paper vanish. This reduces the calculation of the functions $\hat \om$
and $\hat \om'$ to the calculation of certain definite integrals
which is easily done numerically on a computer.

In (\ref{lutexx}), (\ref{lutezz}) it depends on $\nu$ which of the
terms is asymptotically dominant. For the sake of simplicity, we
separate the asymptotically dominant terms only if they can be identified
uniformly for all $0<\nu<1$. For this reason we consider two levels of
approximations for $\langle\sigma_1^x\sigma_n^x\rangle$: one consists
of only the first term $\frac{A}{n^\nu}$, which is the leading term
for general $\nu$, the other one consists of the whole expression. In
the case of $\langle\sigma_1^z\sigma_n^z\rangle$ we consider only the
whole expression. In this case the term containing $\tilde B_z$ can be
seen to be of higher order than the rest. Still, it makes no visible
difference whether we include it in our plots or not.

Looking at the figures \ref{fig:lukyanov_zz} and \ref{fig:lukyanov_xx}
showing $\langle\sigma_1^z\sigma_n^z\rangle$ and
$\langle\sigma_1^x\sigma_n^x\rangle$, respectively, we observe
that there are poles in the asymptotic expansion. For $\Delta\to-1$,
corresponding to $\nu\to0$, rapid oscillations are visible in all plots,
becoming less pronounced with increasing~$n$. These oscillations are not
a numerical error but rather a feature of the functions $B,\tilde{B},
B_z,\tilde{B}_z$. The poles visible in $\langle\sigma_1^x\sigma_n^x\rangle$
stem from the function $B$, which has poles of order 2 at
$\nu = \frac{2}{3+2l}$ for $l\in\mathbb{N}$. At these positions
$q$ is a root of unity. As can be seen in the plots, the poles
become narrower with increasing $n$.

\begin{figure}[htb]
\centering
\includegraphics{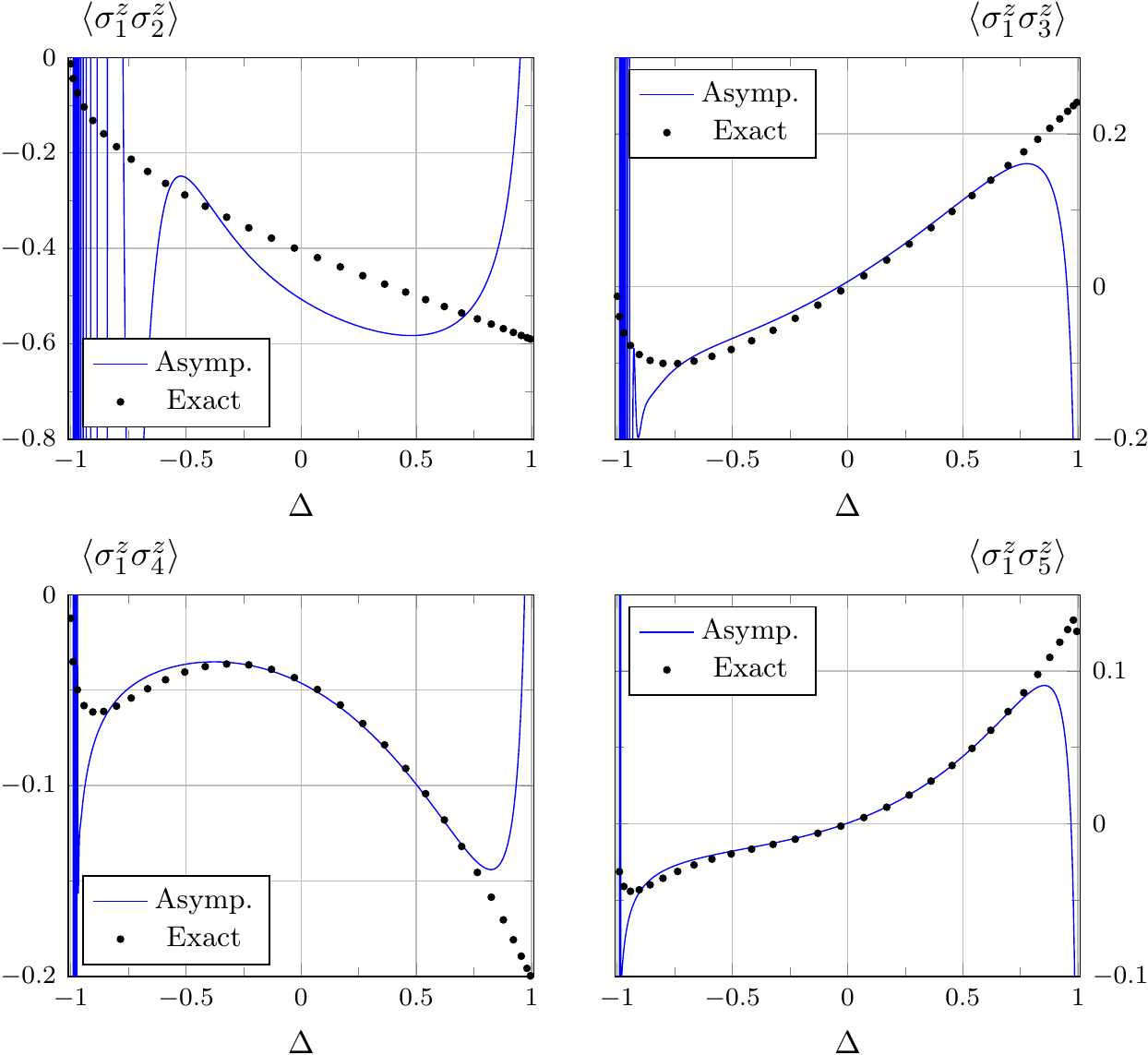}
\caption{Two-point correlators $\langle\sigma_1^z\sigma_n^z\rangle$
for $n=2,3,4,5$ in the ground state and at zero magnetic field in
the massless regime. Comparison of the asymptotic results of Lukyanov
and Terras with our exact results.}
\label{fig:lukyanov_zz}
\end{figure}
\begin{figure}[htb]
\centering
\includegraphics{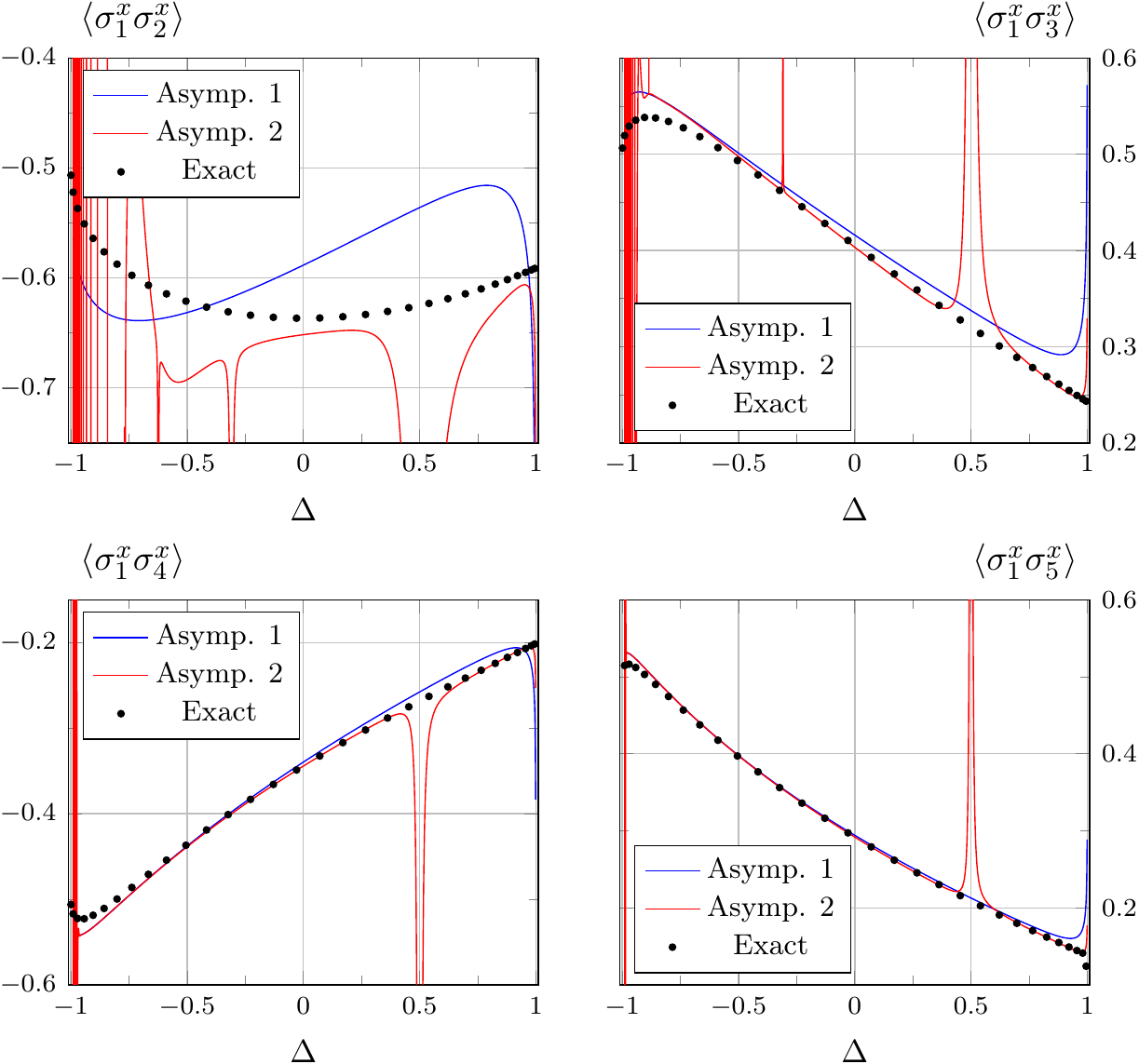}
\caption{Two-point correlators $\langle\sigma_1^x\sigma_n^x\rangle$
for $n=2,3,4,5$ in the ground state and at zero magnetic field in the
massless regime. Comparison of the asymptotic results of Lukyanov
and Terras with our exact results.} \label{fig:lukyanov_xx}
\end{figure}

Away from roots of unity, we observe the expected behaviour.
For the nearest-neighbour functions the asymptotic expansion
deviates considerably from the exact results. With increasing
$n$ the agreement between the results becomes better. It is
noteworthy, that the asymptotics agrees very well with the exact
values even for short distances, especially in the transverse
case.

In \cite{DGK13a,DGK14a} the two-point functions of the XXZ chain
in the low-temperature limit were studied within a thermal form
factor approach \cite{DGK13a}. In the low-$T$ limit the long-distance
asymptotics is determined by those terms in the form factor series
pertaining to the quantum transfer matrix of the model for which
the correlation lengths diverge for $T \rightarrow 0$. Using
a technique developed in \cite{KKMST11b} the authors of
\cite{DGK13a,DGK14a} were able to sum these contributions and
obtained, from a microscopic calculation, the expected asymptotic
behaviour of a conformal field theory on a cylinder. The formulae
for the amplitudes as functions of the magnetic field were new
and numerically efficient and complemented those for $h = 0$
of Lukyanov and Terras.

Let us briefly recall the main results of \cite{DGK13a,DGK14a}.
The required low-$T$ data are  the dressed charge function $Z$,
the density of Bethe roots $\rho$, and the dressed energy $\e$.
They are defined as the unique solutions of the Fredholm-type
integral equations
\begin{subequations}
\begin{align}
     Z(\la) & = 1 + \int_{-Q}^Q \frac{\rd \mu}{2\pi i} K_0 (\la-\mu) Z(\mu) \epc \\
     \rho(\la) & = - \frac{\re(\la+i\g/2)}{2\pi i}
        + \int_{-Q}^Q \frac{\rd \mu}{2\pi i} K_0 (\la-\mu) \rho(\mu) \epc \\
     \e(\la) & = \e_0(\la)
        + \int_{-Q}^Q \frac{\rd \mu}{2\pi i} K_0 (\la-\mu) \e(\mu) \epc \qd
          \e_0(\la) = h - \frac{4J(1-\D^2)}{\ch(2\la) - \D} \epp
\end{align}
\end{subequations}
The two points $\pm Q$ are called the Fermi points and $Q>0$ is
determined by $\e(Q) = 0$. In \cite{DGK14b} it was proven that
such a $Q$ exists and is unique. With these quantities we then
define the Fermi momentum $k_F$, the Fermi sound velocity $v_0$
and the dressed charge $\mathcal{Z}$ at the Fermi point,
\begin{equation}
     k_F = 2\pi \int_0^Q \rd\la\ \rho(\la) \epc \qd
     v_0 = \frac{\e'(Q)}{2\pi\rho(Q)} \epc \qd
     \mathcal{Z} = Z(Q) \epp
\end{equation}

The asymptotic expressions consist of products of amplitudes times
terms that oscillate and decay with distance. The amplitudes
$A_{0,n}^{zz}$ and $A_{0,0}^{-+}$ are slightly complicated expressions
given in equations (90) and (97b) of \cite{DGK14a}. We refrain from
reproducing them here. The leading oscillating and decaying contribution
can be expressed in terms of the above defined functions. For the
longitudinal case the asymptotic behaviour takes the form
\begin{multline}
     \langle\s_1^z\s_{m+1}^z\rangle
        - \langle\s_1^z\rangle\langle\s_{m+1}^z\rangle \\
        \sim A_{0,0}^{zz}\lb(\frac{\pi T/v_0}{\sh(m\pi T/v_0)}\rb)^2
	+ A_{0,1}^{zz} \cos(2m k_F)
	  \lb(\frac{\pi T/v_0}{\sh(m\pi T/v_0)}\rb)^{2\mathcal{Z}^2} \epp
\end{multline}
Here the term containing $A_{0,0}^{zz}$ is the leading term for $\D < 0$,
whereas the term with $A_{0,1}^{zz}$ is dominant for $\D > 0$. In the
transverse case the asymptotic behaviour is described by
\begin{equation}
  \langle\s_1^-\s_{m+1}^+\rangle \sim
     A_{0,0}^{-+} (-1)^m
     \lb(\frac{\pi T/v_0}{\sh(m\pi T/v_0)}\rb)^{\frac{1}{2\mathcal{Z}^2}} \epp
\end{equation}
It should be noted that these expressions are numerically efficient
and can be evaluated on a laptop computer in rather short time. Most
of the numerical cost goes into the calculation of the amplitudes which
are independent of the distance and of the temperature and have
to be computed only once for given values of the anisotropy parameter
and of the magnetic field.

\begin{figure}
\centering
\includegraphics{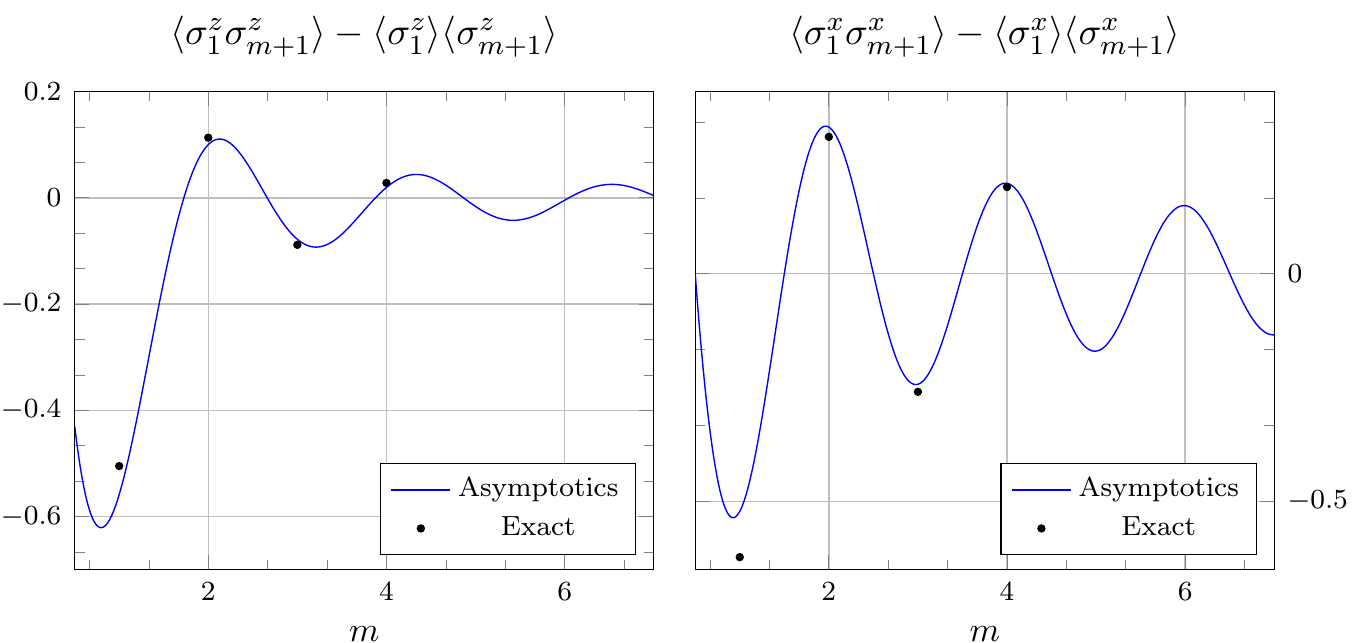}
\caption{Comparison between the asymptotic expansion obtained in
\cite{DGK13a,DGK14a} and our exact results for $\D=0.6$, $h=1$
and $T/J=0.04$.}
\label{fig:formfactor_vs_distance}
\end{figure}

Fig.~\ref{fig:formfactor_vs_distance} shows the comparison between
the asymptotic and exact results as functions of the distance $m$.
It can be seen that the asymptotics come very close to the exact
results for surprisingly small distances, starting with $m=3,4$.
This is of course dependent on the chosen parameters. For example,
close to the isotropic point the agreement becomes worse.
Fig.~\ref{fig:formfactor_vs_h_T01} shows both results as a
function of the external field $h$. Again, for distance $m=4$
the agreement is remarkable even for a non-trivial structure as
shown for the longitudinal case.

\begin{figure}[htb]
\centering
\includegraphics{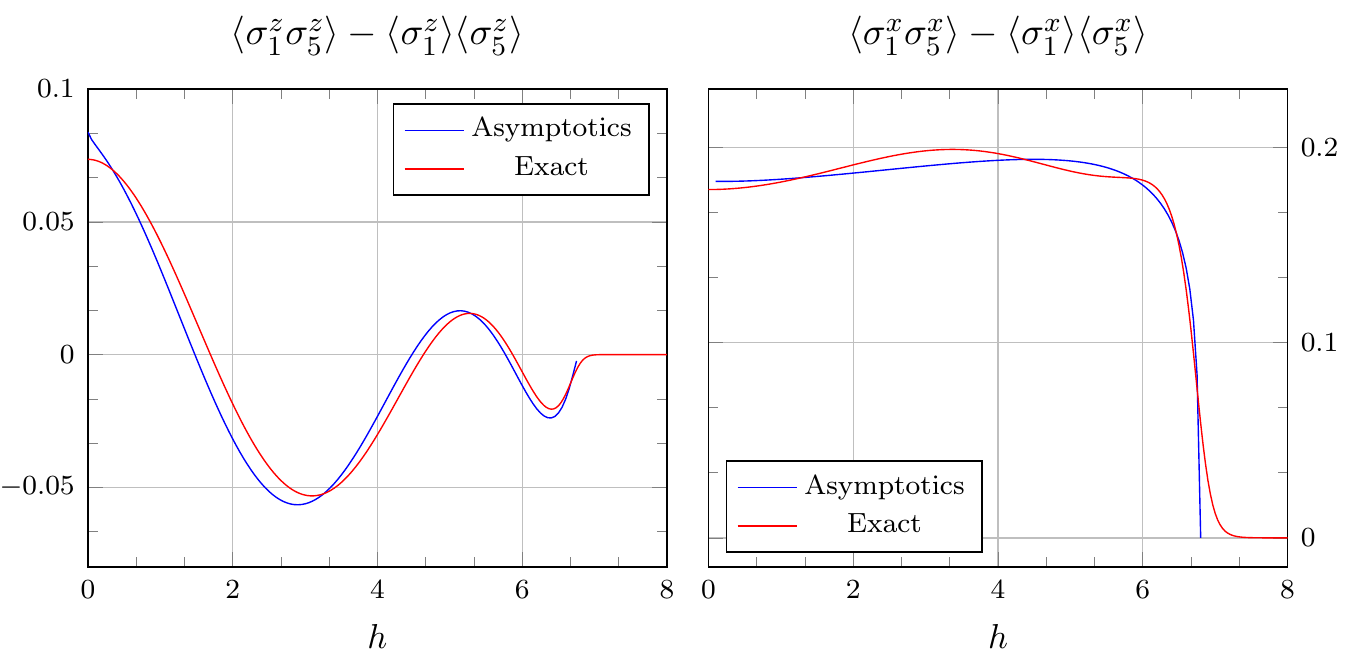}
\caption{Comparison of exact and asymptotic behaviour as a function
of the magnetic field for $n = 5$, $\D=0.7$ and $T/J=0.1$.}
\label{fig:formfactor_vs_h_T01}
\end{figure}

We can conclude that the asymptotic formulae derived in
\cite{DGK13a,DGK14a} are very close to the exact results for
surprisingly small values of the distance $m$. A rough
estimation for the temperatures for which the asymptotics
are valid would be $T/J < 0.1$. In addition, this comparison
provides another test to see that our results are consistent
with previous works.

\section{Conclusions}
\label{sec:concl}
We have given a descriptive review of the Fermionic basis
approach to the theory of correlation functions of the XXZ chain.
In the course of the review we worked out a few details that
were omitted in the original literature. Our main interest in
this work was to explore the efficiency of the exponential form
in the homogenous case for the actual computation of short-range
correlation functions. For this purpose we wrote out the
mode expansions in explicit form and worked out the explicit
formulae for the action of the modes on operators of finite
length.

We further worked out explicitly the expressions for $\<\s_1^x\s_n^x\>$
and $\<\s_1^z\s_n^z\>$ for $n = 2, 3, 4, 5$ in terms of
$\hat \om$, $\hat \om'$ and the derivatives of these functions.
The lengthy result for $n = 5$ was heretofore unknown and
is listed in Appendix~\ref{app:n5_explicit}. We believe that
the corresponding formulae for $n$ up to 9 or 10 could be worked
out in a similar way. However, the length of the final answer
will be rapidly growing. It would fill many pages, and printing
it out would make little sense. What would rather be needed
would be a better understanding of the structure of such
formulae. It was part of the motivation of the work \cite{FrSm18}
to guess this structure, but so far this attempt was only
partially successful.

Using the explicit Fermionic basis expansions we have discussed
in some details the two-point functions for the system
coupled to a heat bath of temperature $T$. For this case
we have also compared the exact correlation functions in the
critical regime with asymptotic formulae for their large-distance
behaviour. These asymptotic formulae, valid in the low-$T$
limit, turned out to be very good approximations even for distances
as short as three or four lattice sites.

\noindent
{\bf Acknowledgments.} We would like to thank Herman Boos for
numerous helpful discussions and Constantin Babenko and Herman
Boos for a careful reading of the manuscript. FG and RK acknowledge
financial support by the DFG in the framework of the research
unit FOR 2316.

\renewcommand{\thesection}{\Alph{section}}
\renewcommand{\theequation}{\thesection.\arabic{equation}}
\begin{appendices}

\section{From the JMS theorem to the exponential form}
\setcounter{equation}{0}
\label{app:jms_to_expform}
The JMS theorem \cite{JMS08} is central to the Fermionic basis
approach. Here we show how the exponential form (\ref{expkappa})
of the reduced density matrix that had appeared earlier in the
literature \cite{BGKS07,BJMST05b,BJMST08a} is a natural consequence
of the JMS theorem.

We define
\begin{equation} \label{defomhut}
     \hat \Om = \int_\G \frac{\rd \z_1^2}{2 \p \i \z_1^2}
           \int_\G \frac{\rd \z_2^2}{2 \p \i \z_2^2}
	   \om(\z_1, \z_2; \a) \bv (\z_1) \cv (\z_2) \epc
\end{equation}
and
\begin{equation}
     B (\z) = \int_\G \frac{\rd \x^2}{2 \p \i \x^2}
                \om(\x, \z; \a) \bv (\x) \epc \qd
     C (\z) = \int_\G \frac{\rd \x^2}{2 \p \i \x^2}
                \om(\z, \x; \a) \cv (\x) \epc
\end{equation}
where $\G$ is small circle around $1$.
\begin{lemma}
\label{lem:expformbcstar}
\begin{equation}
     [\re^{\hat \Om}, \bv^*(\z)] = - C(\z) \re^{\hat \Om} \epc \qd
     [\re^{\hat \Om}, \cv^*(\z)] = B(\z) \re^{\hat \Om} \epp
\end{equation}
\end{lemma}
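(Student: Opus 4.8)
The plan is to treat $\hat\Om$ as a Grassmann-even operator (it is bilinear in the annihilation modes through $\bv(\z_1)\cv(\z_2)$), so that $[\hat\Om,\,\cdot\,]$ is an ordinary commutator acting as a derivation, and to reduce the statement to a single commutator together with an elementary truncation of the Baker--Campbell--Hausdorff series.

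First I would compute $[\hat\Om,\bv^*(\z)]$ and $[\hat\Om,\cv^*(\z)]$ directly from the (anti)commutation relations (\ref{anticommutism}), using that every other anticommutator among $\bv,\cv,\bv^*,\cv^*$ vanishes. A short reordering gives, under the double integral defining $\hat\Om$,
\[
[\bv(\z_1)\cv(\z_2),\bv^*(\z)] = \ps(\z/\z_1,\a)\,\cv(\z_2), \qquad
[\bv(\z_1)\cv(\z_2),\cv^*(\z)] = \ps(\z_2/\z,\a)\,\bv(\z_1),
\]
so that $[\hat\Om,\bv^*(\z)]$ is linear in $\cv$ and $[\hat\Om,\cv^*(\z)]$ is linear in $\bv$. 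The crucial structural observation is that $[\hat\Om,\bv(\z)]=[\hat\Om,\cv(\z)]=0$, because the annihilation operators mutually anticommute; hence $(\ad_{\hat\Om})^2\bv^*(\z)=(\ad_{\hat\Om})^2\cv^*(\z)=0$.

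With this nilpotency the BCH series truncates: $\re^{-\hat\Om}\xv^*(\z)\re^{\hat\Om} = \xv^*(\z) - [\hat\Om,\xv^*(\z)]$ for $\xv^*=\bv^*,\cv^*$. Rewriting $\xv^*(\z)\re^{\hat\Om}=\re^{\hat\Om}\bigl(\xv^*(\z)-[\hat\Om,\xv^*(\z)]\bigr)$ and using that $[\hat\Om,\xv^*(\z)]$, being built from $\bv$ or $\cv$ alone, commutes with $\hat\Om$, I obtain $[\re^{\hat\Om},\xv^*(\z)] = [\hat\Om,\xv^*(\z)]\,\re^{\hat\Om}$. It then remains to identify $[\hat\Om,\bv^*(\z)]=-C(\z)$ and $[\hat\Om,\cv^*(\z)]=B(\z)$.

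This last identification is where the real work sits, and I expect it to be the main obstacle. It amounts to collapsing one of the two contour integrals in $\hat\Om$ against the kernel $\ps$. Since $\ps(\z,\a)=\z^\a(\z^2+1)/(2(\z^2-1))$, the function $\z_1^{-2}\ps(\z/\z_1,\a)$ has a simple pole at $\z_1^2=\z^2$ with residue $-1$, while $\z_2^{-2}\ps(\z_2/\z,\a)$ has a simple pole at $\z_2^2=\z^2$ with residue $+1$. Assuming $\om(\cdot,\z_2;\a)$ and $\om(\z_1,\cdot;\a)$ are regular at $\z^2=1$ (as is required for the Taylor expansions (\ref{omderivs}) to make sense) and taking $\z$ inside the contour $\G$, Cauchy's theorem gives
\[
\int_\G \frac{\rd\z_1^2}{2\p\i\,\z_1^2}\,\ps(\z/\z_1,\a)\,\om(\z_1,\z_2;\a) = -\,\om(\z,\z_2;\a), \qquad
\int_\G \frac{\rd\z_2^2}{2\p\i\,\z_2^2}\,\ps(\z_2/\z,\a)\,\om(\z_1,\z_2;\a) = \om(\z_1,\z;\a).
\]
Substituting these into the double integrals reduces them to the single integrals defining $C(\z)$ and $B(\z)$, yielding $[\hat\Om,\bv^*(\z)]=-C(\z)$ and $[\hat\Om,\cv^*(\z)]=B(\z)$, and hence the lemma. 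The point demanding care is the bookkeeping of the contours: one must ensure that $\z^2$ is the pole actually enclosed, i.e.\ that $\z$ sits inside $\G$ with the correct nesting relative to the $\z_1,\z_2$ integrations, and that $\om$ contributes no competing poles inside $\G$. Tracking the two residue signs $\mp1$ is precisely what produces the asymmetry between the $-C$ and the $+B$ on the two right-hand sides.
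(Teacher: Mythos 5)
Your proposal is correct and coincides in essence with the paper's proof: there, too, the lemma is reduced to the single commutators $[\hat\Om, \bv^*(\z)] = -C(\z)$ and $[\hat\Om, \cv^*(\z)] = B(\z)$, read off from (\ref{anticommutism}) by exactly the residue computation you perform, and the exponential is then handled via $[\hat\Om^k, \bv^*(\z)] = -C(\z)\, k\, \hat\Om^{k-1}$, which is the same truncation mechanism as your BCH argument, both resting on $[\hat\Om, [\hat\Om, \xv^*(\z)]] = 0$. The only difference is presentational: you make explicit the contour and residue bookkeeping (residues $\mp 1$ at $\z_1^2 = \z^2$ and $\z_2^2 = \z^2$, with $\z$ inside $\G$) that the paper compresses into ``using (\ref{anticommutism}) we see that $[\hat\Om, \bv^*(\z)] = -C(\z)$''.
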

\begin{proof}
Using (\ref{anticommutism}) we see that $[\hat \Om, \bv^*(\z)]
= - C(\z)$, implying $[\hat \Om^k, \bv^*(\z)] = - C(\z) k \hat \Om^{k-1}$.
Hence,
\begin{equation}
     [\re^{\hat \Om}, \bv^*(\z)] = [\re^{\hat \Om} - 1, \bv^*(\z)]
        = - C(\z) \sum_{k=1}^\infty \frac{\hat \Om^{k-1}}{(k-1)!}
	= - C(\z) \re^{\hat \Om}
\end{equation}
which is the first identity. The second one follows in a similar
way.
\end{proof}
\begin{lemma}
\label{lem:zkappavac}
\begin{subequations}
\label{zkappavac}
\begin{align}
     Z^\k \bigl\{\re^{\hat \Om} \tv^* (\z) X^{(\a)}\bigr\} & =
        2 \r(\z) Z^\k \bigl\{\re^{\hat \Om} X^{(\a)}\bigr\} \epc \\[1ex]
     Z^\k \bigl\{\re^{\hat \Om} \bv^* (\z) X^{(\a+1)}\bigr\} & = 0 \epc
        \label{kappaleftvacbs} \\[1ex]
     Z^\k \bigl\{\re^{\hat \Om} \cv^* (\z) X^{(\a-1)}\bigr\} & = 0 \epp
        \label{kappaleftvaccs}
\end{align}
\end{subequations}
\end{lemma}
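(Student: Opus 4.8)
The plan is to derive all three identities from Lemma~\ref{lem:expformbcstar} together with the JMS theorem (Theorem~\ref{theo:jms}), so that essentially no new computation is required. The key observation is that, upon pulling the linear functional $Z^\k$ through the contour integrals that define $B(\z)$ and $C(\z)$, the second and third relations of the JMS theorem can be rewritten as
\begin{equation}
     Z^\k\bigl\{\bv^*(\z) Y\bigr\} = Z^\k\bigl\{C(\z) Y\bigr\} \epc \qd
     Z^\k\bigl\{\cv^*(\z) Y\bigr\} = - Z^\k\bigl\{B(\z) Y\bigr\}
\end{equation}
valid whenever $Y$ carries the tail for which the respective JMS relation is stated. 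Combined with the commutators $[\re^{\hat\Om},\bv^*(\z)] = -C(\z)\re^{\hat\Om}$ and $[\re^{\hat\Om},\cv^*(\z)] = B(\z)\re^{\hat\Om}$ from Lemma~\ref{lem:expformbcstar}, each identity collapses to a single cancellation.

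For the first relation I would begin from the commutativity (\ref{commutism}), which says that $\tv^*(\z)$ commutes with every annihilation operator $\bv(\z_1)$ and $\cv(\z_2)$, hence with $\hat\Om$ and with $\re^{\hat\Om}$. Moving $\tv^*(\z)$ to the left of $\re^{\hat\Om}$ and then invoking the first JMS relation on $\re^{\hat\Om}X^{(\a)}$ immediately produces $2\r(\z)\,Z^\k\{\re^{\hat\Om}X^{(\a)}\}$; here one uses that $\re^{\hat\Om}$ maps the spin-$0$, tail-$\a$ space ${\cal W}_{\a,0}$ to itself, so the JMS theorem genuinely applies to $\re^{\hat\Om}X^{(\a)}$.

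For the second and third relations the same mechanism applies. Using the first commutator of Lemma~\ref{lem:expformbcstar} I would write $\re^{\hat\Om}\bv^*(\z) = \bv^*(\z)\re^{\hat\Om} - C(\z)\re^{\hat\Om}$, so that $Z^\k\{\re^{\hat\Om}\bv^*(\z)X^{(\a+1)}\}$ splits into $Z^\k\{\bv^*(\z)(\re^{\hat\Om}X^{(\a+1)})\}$ minus $Z^\k\{C(\z)\re^{\hat\Om}X^{(\a+1)}\}$; applying the reformulated JMS relation to the first summand turns it into $Z^\k\{C(\z)\re^{\hat\Om}X^{(\a+1)}\}$, and the two terms cancel, giving (\ref{kappaleftvacbs}). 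Relation (\ref{kappaleftvaccs}) follows in exactly the same manner from $\re^{\hat\Om}\cv^*(\z) = \cv^*(\z)\re^{\hat\Om} + B(\z)\re^{\hat\Om}$ and $Z^\k\{\cv^*(\z)Y\} = -Z^\k\{B(\z)Y\}$.

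The part requiring genuine care is not the algebra but the consistency of the tail parameter $\a$. The single $\a$ entering $\hat\Om$, $B(\z)$ and $C(\z)$ --- through $\om(\cdot,\cdot;\a)$ and through the anticommutators (\ref{anticommutism}) used to prove Lemma~\ref{lem:expformbcstar} --- must be exactly the one for which the relevant JMS relation is formulated, and one must check that $\bv^*(\z)X^{(\a+1)}$ and $\cv^*(\z)X^{(\a-1)}$ land in the tail-$\a$ space on which $\hat\Om$ acts, while the commutators of Lemma~\ref{lem:expformbcstar} still hold when both sides are evaluated on $X^{(\a\pm1)}$. The interchange of $Z^\k$ with the contour integrals defining $B(\z)$ and $C(\z)$ is harmless, following from linearity of $Z^\k$ and the regularity of the integrands on the small circle $\G$ around $1$.
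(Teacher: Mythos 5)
Your proof is correct and takes essentially the same route as the paper's: both reduce the second and third JMS relations to $Z^\k\bigl\{(\bv^*(\z)-C(\z))\,Y\bigr\} = 0$ and $Z^\k\bigl\{(\cv^*(\z)+B(\z))\,Y\bigr\} = 0$, apply them to $Y = \re^{\hat\Om} X^{(\a\pm1)}$, and cancel using the commutators of Lemma~\ref{lem:expformbcstar}, while the first relation follows from $[\re^{\hat\Om}, \tv^*(\z)] = 0$ together with the first JMS relation. The tail-consistency and contour-interchange checks you spell out are sound and are left implicit in the paper.
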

\begin{proof}
The first equation follows from the JMS theorem, since
$[\re^{\hat \Om}, \tv^* (\z)] = 0$. The JMS also implies
that $Z^\k \bigl\{(\bv^*(\z) - C(\z)) \re^{\hat \Om} 
X^{(\a + 1)}\bigr\} = Z^\k \bigl\{(\cv^*(\z) +
B(\z)) \re^{\hat \Om} X^{(\a - 1)}\bigr\} = 0$. Combining
these two identities with Lemma~\ref{lem:expformbcstar}
we obtain (\ref{kappaleftvacbs}), (\ref{kappaleftvaccs}).
\end{proof}
Lemma~\ref{lem:zkappavac} means that the functional
$Z^\k\bigl\{\re^{\hat \Om} \cdot\bigr\}$ acts as a left vacuum
for the creation operators $\bv^* (\z)$, $\cv^* (\z)$. In general,
this functional is hard to evaluate. A very special realization
is obtained if we use the finite Trotter number approximant
to the reduced density matrix $D^N_{[k,l]} (T,\k,\a)$ of the
canonical ensemble for the definition of $Z^\k$. This means
to put spin-$\2$ representations on the horizontal lines in
(\ref{gendensmat}) and, in an alternating manner, spectral
parameters $\pm \frac{c}{NT}$, where $c$ is an appropriate
constant (for more details see e.g.\ \cite{Goehmann20}).
Sending then $T \rightarrow \infty$ at fixed $\k$ all
$R$-matrices in (\ref{gendensmat}) degenerate into permutation
matrices and the left and right eigenvectors become independent
of $\a$ and $\k$ (for a graphical representation of the limit
see Fig.~\ref{fig:hightgraph}). All in all we see that
$Z^\k$ has the limit
\begin{equation}
     \lim_{T \rightarrow \infty} Z^\k \bigl\{X^{(\a)}\bigr\}
        = \frac{\tr_{[k,l]} \bigl\{q^{2 \k S_{[k,l]}} X_{[k,l]}\bigr\}}
	       {\tr_{[k,l]} \bigl\{q^{2 \k S_{[k,l]}}\bigr\}}
        = \tr^{-2\k} \bigl\{X^{(\a)}\bigr\} \epc
\end{equation}
where we used the definition (\ref{defkappatrace}) of the $\k$-trace.
\begin{figure}
\centering
\includegraphics[width=.70\textwidth,angle=0,clip=true]{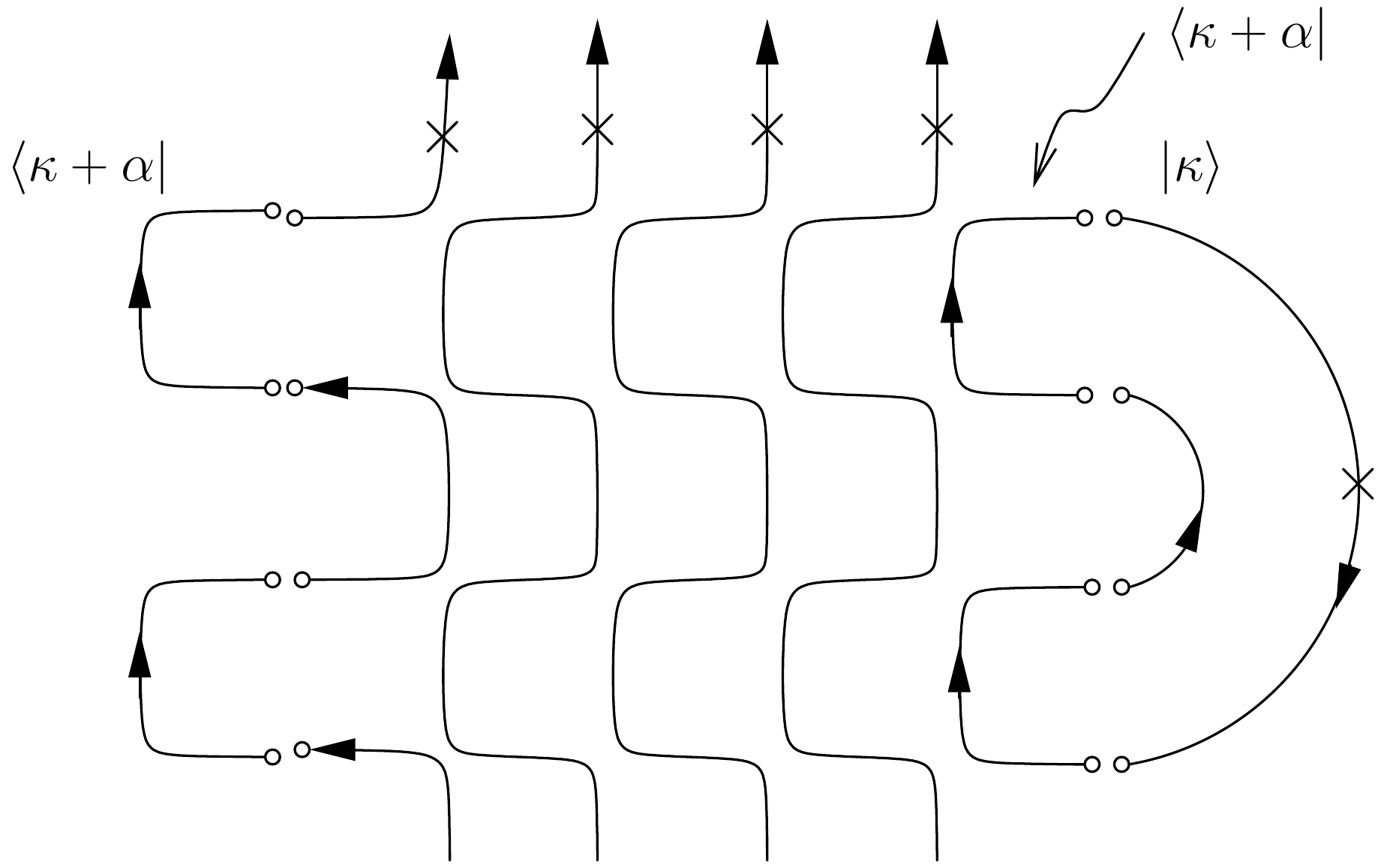}
\caption{\label{fig:hightgraph} 
Graphical representation of the (unnormalized) reduced density
matrix $D^N_{[k,l]} (T,\k,\a)$ in the limit $T \rightarrow \infty$.
Crosses denote operators $q^{\k \s^z}$.} 
\end{figure}

The limits $T \rightarrow \infty$ and $N \rightarrow \infty$
commute for $D^N_{[k,l]} (T,\k,\a)$ \cite{GGKS20}. The unique
solution of (\ref{nlietemhom}) for $T \rightarrow \infty$ is
$\fa_0 (\la, \k) = q^{-2\k}$. Inserting it into (\ref{rhoint})
we obtain
\begin{equation}
     \lim_{T \rightarrow \infty} \r (\z)
        = \frac{q^{\k + \a} + q^{- \k - \a}}{q^\k + q^{- \k}} \epp
\end{equation}
Then, using the latter in (\ref{newg})-(\ref{ompsi}), we see
that the high-$T$ limit of $\om$ is
\begin{equation}
     \lim_{T \rightarrow \infty} \om (\z_1, \z_2; \a)
        = - \biggl(\frac{q^\k - q^{- \k}}{q^\k + q^{- \k}}\biggr)^2
	    \D \ps (\z_1/\z_2, \a) \epp
\end{equation}
For the special value $\k = - \a/2$ we have
\begin{equation}
     \lim_{T \rightarrow \infty} \r (\z)\bigr|_{\k = - \a/2} = 1 \epc \qd
     \lim_{T \rightarrow \infty} \om (\z_1, \z_2; \a)\bigr|_{\k = - \a/2}
        = \om_0 (\z_1/\z_2; \a) \epc
\end{equation}
where $\om_0$ was defined in (\ref{defomzero}). Setting
\begin{equation} \label{defomnullhut}
     \hat \Om_0 = \int_\G \frac{\rd \z_1^2}{2 \p \i \z_1^2}
           \int_\G \frac{\rd \z_2^2}{2 \p \i \z_2^2}
	   \om_0 (\z_1/\z_2; \a) \bv (\z_1) \cv (\z_2)
\end{equation}
and taking the limit $T \rightarrow \infty$ at $\k = - \a/2$ in
(\ref{zkappavac}) we conclude that the following lemma holds true.
\begin{lemma}
\label{lem:tralphavac}
\begin{subequations}
\label{tralphavac}
\begin{align}
     \tr^\a \bigl\{\re^{\hat \Om_0} \tv^* (\z) X^{(\a)}\bigr\} & =
        2 \tr^\a \bigl\{\re^{\hat \Om_0} X^{(\a)}\bigr\} \epc \\[1ex]
     \tr^\a \bigl\{\re^{\hat \Om_0} \bv^* (\z) X^{(\a+1)}\bigr\} & = 0 \epc
        \label{traleftvacbs} \\[1ex]
     \tr^\a \bigl\{\re^{\hat \Om_0} \cv^* (\z) X^{(\a-1)}\bigr\} & = 0 \epp
        \label{traleftvaccs}
\end{align}
\end{subequations}
\end{lemma}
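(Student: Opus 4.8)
The plan is to derive Lemma~\ref{lem:tralphavac} as the high-temperature specialization, at $\k = -\a/2$, of the three identities (\ref{zkappavac}) established in Lemma~\ref{lem:zkappavac}. Those identities hold for \emph{any} realization of the functional $Z^\k$, so I am free to evaluate them on the particular realization furnished by the finite Trotter number approximant $D^N_{[k,l]}(T,\k,\a)$ to the reduced density matrix of the canonical ensemble, and then to send $T \to \infty$.

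First I would collect the three limit statements already assembled above. For this realization the functional degenerates as $\lim_{T\to\infty} Z^\k = \tr^{-2\k}$, so that $Z^{-\a/2} \to \tr^\a$. At the special value $\k = -\a/2$ the physical part degenerates as well, $\r(\z)\to 1$ and $\om(\z_1,\z_2;\a) \to \om_0(\z_1/\z_2;\a)$, as follows from inserting the high-$T$ solution $\fa_0(\la,\k) = q^{-2\k}$ of (\ref{nlietemhom}) into (\ref{rhoint}) and (\ref{newg})--(\ref{ompsi}). Since $\hat \Om$ in (\ref{defomhut}) depends on the realization only through the coefficient function $\om$, while $\bv(\z_1)$ and $\cv(\z_2)$ are fixed algebraic operators, this last convergence immediately gives $\hat \Om \to \hat \Om_0$ with $\hat \Om_0$ defined in (\ref{defomnullhut}).

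Next I would pass the limit through the functional. It suffices to argue on each subspace ${\cal W}_{\a,0,[1,n]}$ of fixed finite length, since these span the space on which the identities are tested. There $\re^{\hat\Om}$ collapses to the finite polynomial in $\hat\Om$ of (\ref{expnil}), and every operator in sight acts as a finite matrix whose entries are built linearly from $\r(\z)$ and from the Taylor/Laurent coefficients $\om_{k,p}$ of $\om$ (cf.\ (\ref{ommodes})). Because $Z^\k$ restricted to a finite-length operator is a finite trace, it is continuous in these matrix entries, so the limit $T\to\infty$ commutes with $Z^\k$, with $\re^{\hat\Om}$, and with left multiplication by $\tv^*(\z)$, $\bv^*(\z)$, $\cv^*(\z)$. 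Applying this to the first line of (\ref{zkappavac}) and using $\r\to 1$ gives $\tr^\a\{\re^{\hat\Om_0}\tv^*(\z)X^{(\a)}\} = 2\,\tr^\a\{\re^{\hat\Om_0}X^{(\a)}\}$, while (\ref{kappaleftvacbs}) and (\ref{kappaleftvaccs}) pass to the limit verbatim and yield (\ref{traleftvacbs}) and (\ref{traleftvaccs}). This is precisely (\ref{tralphavac}).

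The only genuine obstacle I anticipate lies in this interchange of limits: one must ensure that the degeneration of $\r$ and $\om$ and the truncation of $\re^{\hat\Om}$ are compatible with the functional. The restriction to fixed length $n$ renders this transparent, reducing it to entrywise convergence of finite matrices and continuity of a finite-dimensional trace. What makes the argument legitimate in the first place is that for $D^N_{[k,l]}(T,\k,\a)$ the limits $T\to\infty$ and $N\to\infty$ commute, so one may use the Trotter-limit values of $\r$ and $\om$ obtained from the nonlinear integral equation while independently sending $T\to\infty$.
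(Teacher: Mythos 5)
Your proposal is correct and follows essentially the same route as the paper: specialize Lemma~\ref{lem:zkappavac} to the finite Trotter approximant of the canonical ensemble, use the high-$T$ degenerations $Z^{-\a/2}\to\tr^\a$, $\r\to 1$, $\om\to\om_0$ (hence $\hat\Om\to\hat\Om_0$), and take $T\to\infty$ at $\k=-\a/2$. Your extra care in justifying the interchange of limits via restriction to the finite-dimensional subspaces ${\cal W}_{\a,0,[1,n]}$ (read coefficientwise in the mode expansions) and the commutativity of the $T\to\infty$ and $N\to\infty$ limits is exactly what the paper leaves implicit, so it is a welcome sharpening rather than a deviation.
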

This lemma means that the functional $\tr^\a \bigl\{\re^{\hat \Om_0}
\cdot \bigr\}$ can be interpreted as a left vacuum for the 
creation operators $\bv^* (\z)$ and $\cv^* (\z)$. The lemma
can be used to prove the validity of the exponential form,
equation (\ref{expkappanull}) of the main text, for $\k = 0$.
For this purpose we introduce some additional notation. Let
\begin{equation}
     {\cal B}^{(l)} = \bv^* (\z_1^+) \dots \bv^* (\z_l^+) \epc \qd
     {\cal C}^{(l)} = \cv^* (\z_l^-) \dots \cv^* (\z_1^-) \epp
\end{equation}
By ${\cal B}^{(l)}_j$ we denote ${\cal B}^{(l)}$ with
the $j$th factor from the left omitted, whereas ${\cal C}^{(l)}_k$
will stand for ${\cal C}^{(l)}$ with the $k$th factor from the
right omitted. We would further like to recall the definition
(\ref{defomop}) of the operator $\Om$. With (\ref{defomhut}) and
(\ref{defomnullhut}) it can be expressed as $\Om = \hat \Om_0
- \hat \Om$.
\begin{lemma}
\label{lem:laplacerecursion}
\begin{equation}
     \tr^\a \bigl\{\re^{\Om} {\cal B}^{(l)} {\cal C}^{(l)} q^{2 \a S(0)}\bigr\}
        = \sum_{k=1}^l (-1)^{j+k} \om(\z_j^+, \z_k^-; \a)
	     \tr^\a \bigl\{\re^{\Om} {\cal B}^{(l)}_j
	                   {\cal C}^{(l)}_k q^{2 \a S(0)}\bigr\} \epp
\end{equation}
\end{lemma}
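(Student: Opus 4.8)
The plan is to obtain the cofactor expansion by isolating the factor $\bv^*(\z_j^+)$, trading it for an integral of annihilation operators against the dressed vacuum $\tr^\a\{\re^\Om\,\cdot\,\}$, and then contracting that annihilation operator with the string ${\cal C}^{(l)}$. Two structural facts make this possible. Since $\hat\Om_0$ and $\hat\Om$ are both even (quadratic) in the mutually anticommuting annihilation modes $\bv,\cv$, they commute, so that with $\Om = \hat\Om_0 - \hat\Om$ one has $\re^\Om = \re^{\hat\Om_0}\re^{-\hat\Om}$; by the same parity count any operator linear in the annihilation modes, such as $C(\z)$, commutes with $\re^\Om$. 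Beyond this I will use the two preceding lemmas in tandem.

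My first step is to record a dressed-vacuum identity: for any $X^{(\a+1)}$,
\[
   \tr^\a\bigl\{\re^\Om\,\bv^*(\z)\,X^{(\a+1)}\bigr\}
      = \tr^\a\bigl\{C(\z)\,\re^\Om\,X^{(\a+1)}\bigr\}\epp
\]
To prove it I rewrite $\re^\Om\bv^*(\z) = \re^{\hat\Om_0}\bigl(\re^{-\hat\Om}\bv^*(\z)\bigr)$ and use Lemma~\ref{lem:expformbcstar}, which after conjugation gives $\re^{-\hat\Om}\bv^*(\z) = (\bv^*(\z)+C(\z))\re^{-\hat\Om}$ (here $C(\z)$ commutes with $\hat\Om$). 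The resulting term carrying a bare $\bv^*(\z)$ directly to the right of $\re^{\hat\Om_0}$ vanishes by Lemma~\ref{lem:tralphavac}, equation (\ref{traleftvacbs}), while in the remaining term $C(\z)$ can be pulled back through $\re^{\hat\Om_0}\re^{-\hat\Om}=\re^\Om$, which yields the identity.

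The main computation then runs as follows. Using the vanishing anticommutators among the $\bv^*$ I write ${\cal B}^{(l)} = (-1)^{j-1}\bv^*(\z_j^+)\,{\cal B}^{(l)}_j$, apply the dressed-vacuum identity to replace $\bv^*(\z_j^+)$ by $C(\z_j^+) = \int_\G\frac{\rd\x^2}{2\p\i\x^2}\,\om(\z_j^+,\x;\a)\,\cv(\x)$, and move $\cv(\x)$ to the right. It commutes through $\re^\Om$ and crosses the $l-1$ factors of ${\cal B}^{(l)}_j$ with a sign $(-1)^{l-1}$; on meeting ${\cal C}^{(l)}=\cv^*(\z_l^-)\cdots\cv^*(\z_1^-)$ it is absorbed through $[\cv(\x),\cv^*(\z_k^-)]_+ = \ps(\x/\z_k^-,\a)$, while the term in which $\cv(\x)$ survives down to $q^{2\a S(0)}$ drops out because $\cv$ annihilates the vacuum. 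What is left is a sum over $k$ of contour integrals, each evaluated by the reproducing-kernel identity $\int_\G\frac{\rd\x^2}{2\p\i\x^2}\,\om(\z_j^+,\x;\a)\,\ps(\x/\z_k^-,\a) = \om(\z_j^+,\z_k^-;\a)$, which follows by residues: the only pole inside $\G$ is the simple pole of $\ps(\x/\z_k^-,\a)\sim(\z_k^-)^2/(\x^2-(\z_k^-)^2)$ at $\x^2=(\z_k^-)^2$, whereas $\om(\z_j^+,\,\cdot\,;\a)$ is regular there. Collecting the accumulated signs $(-1)^{j-1}(-1)^{l-1}(-1)^{l-k} = (-1)^{j+k}$ reproduces the claimed expansion.

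I expect the dressed-vacuum identity to be the one substantive step, since it is there that Lemma~\ref{lem:expformbcstar}, Lemma~\ref{lem:tralphavac} and the commutativity $[\hat\Om_0,\hat\Om]=0$ all enter together; the rest is a fermionic reordering whose only real hazard is the sign bookkeeping. One should also verify that the spin and tail grading is respected at each use of Lemma~\ref{lem:tralphavac} and of the anticommutators, but this is automatic given the spins $s(\bv^*)=1$ and $s(\cv^*)=-1$.
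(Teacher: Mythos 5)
Your proof is correct and follows essentially the same route as the paper's: anticommute $\bv^*(\z_j^+)$ to the front with sign $(-1)^{j-1}$, factor $\re^\Om = \re^{\hat\Om_0}\re^{-\hat\Om}$, trade $\bv^*(\z_j^+)$ for $C(\z_j^+)$ via Lemma~\ref{lem:expformbcstar}, kill the surviving $\bv^*$ term with (\ref{traleftvacbs}) of Lemma~\ref{lem:tralphavac}, and contract $C(\z_j^+)$ through ${\cal C}^{(l)}$ using the anticommutator (\ref{anticommutism}). The only difference is cosmetic: you make explicit several points the paper leaves implicit, namely the residue evaluation $\int_\G\frac{\rd\x^2}{2\p\i\x^2}\,\om(\z_j^+,\x;\a)\,\ps(\x/\z_k^-,\a)=\om(\z_j^+,\z_k^-;\a)$, the vanishing of $\cv(\x)$ on the Fock vacuum, and the parity argument giving $[\hat\Om_0,\hat\Om]=0$.
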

\begin{proof}
\begin{align}
     \tr^\a \bigl\{\re^{\Om} {\cal B}^{(l)} {\cal C}^{(l)}& q^{2 \a S(0)}\bigr\}
        \notag \\[1ex] & =
     (-1)^{j-1} \tr^\a \bigl\{\re^{\hat \Om_0} \re^{- \hat \Om} \bv^* (\z_j^+)
                              {\cal B}^{(l)}_j {\cal C}^{(l)} q^{2 \a S(0)}\bigr\}
        \notag \\[1ex] & =
     (-1)^{j-1} \tr^\a \bigl\{\re^{\hat \Om_0}
                       \bigl(\bv^* (\z_j^+) + C(\z_j^+)\bigr)\re^{- \hat \Om}
                              {\cal B}^{(l)}_j {\cal C}^{(l)} q^{2 \a S(0)}\bigr\}
        \notag \\[1ex] & =
     (-1)^{j+l} \tr^\a \bigl\{\re^{\Om} {\cal B}^{(l)}_j C(\z_j^+)
		              {\cal C}^{(l)} q^{2 \a S(0)}\bigr\}
        \notag \\ & =
     \sum_{k=1}^l (-1)^{j+k}
        \tr^\a \bigl\{\re^{\Om} {\cal B}^{(l)}_j [C(\z_j^+), \cv^* (\z_k^-)]_+
	              {\cal C}^{(l)}_k q^{2 \a S(0)}\bigr\}
        \notag \\ & =
     \sum_{k=1}^l (-1)^{j+k} \om(\z_j^+, \z_k^-; \a)
	     \tr^\a \bigl\{\re^{\Om} {\cal B}^{(l)}_j
	                   {\cal C}^{(l)}_k q^{2 \a S(0)}\bigr\} \epp
\end{align}
Here we used Lemma~\ref{lem:expformbcstar} in the second equation and
Lemma~\ref{lem:tralphavac} in the third equation.
\end{proof}
Let
\begin{equation}
     {\cal T}^{(k)} = \tv^* (\z_1^0) \dots \tv^* (\z_k^0) \epp
\end{equation}
\begin{lemma}
\label{lem:expformonbasis}
\begin{equation}
     \tr^\a \bigl\{\re^{\Om} {\cal T}^{(k)} {\cal B}^{(l)}
                             {\cal C}^{(l)} q^{2 \a S(0)}\bigr\}
        = 2^k \det_{m, n = 1, \dots, l} \bigl\{\om(\z_m^+, \z_n^-; \a)\bigr\} \epp
\end{equation}
\end{lemma}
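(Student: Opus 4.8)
The plan is to reduce the statement to two independent tasks: first peel off the $k$ factors in ${\cal T}^{(k)} = \tv^*(\z_1^0)\cdots\tv^*(\z_k^0)$, each contributing a factor of $2$, and then evaluate the remaining purely Fermionic trace $\tr^\a\bigl\{\re^\Om {\cal B}^{(l)} {\cal C}^{(l)} q^{2 \a S(0)}\bigr\}$ by induction on $l$ using the Laplace-type recursion of Lemma~\ref{lem:laplacerecursion}. In other words, I would prove
\begin{equation}
     \tr^\a \bigl\{\re^{\Om} {\cal T}^{(k)} {\cal B}^{(l)} {\cal C}^{(l)} q^{2 \a S(0)}\bigr\}
        = 2^k \, \tr^\a \bigl\{\re^{\Om} {\cal B}^{(l)} {\cal C}^{(l)} q^{2 \a S(0)}\bigr\}
	= 2^k \det_{m, n = 1, \dots, l} \bigl\{\om(\z_m^+, \z_n^-; \a)\bigr\} \epc
\end{equation}
where the first equality handles the transfer-matrix creation operators and the second is the determinant evaluation.

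For the first equality I would exploit the factorization $\re^\Om = \re^{\hat\Om_0}\re^{-\hat\Om}$, which is legitimate because $\hat\Om_0$ and $\hat\Om$ (see (\ref{defomhut}), (\ref{defomnullhut})) are both bilinear in the mutually anticommuting annihilation operators $\bv$, $\cv$, so that $[\hat\Om_0,\hat\Om]=0$ and $\Om = \hat\Om_0 - \hat\Om$. Since $\tv^*(\z)$ commutes with every $\bv(\z')$, $\cv(\z')$ by (\ref{commutism}), it commutes with $\hat\Om$, so ${\cal T}^{(k)}$ may be moved to the left of $\re^{-\hat\Om}$, giving
\begin{equation}
     \tr^\a \bigl\{\re^{\Om} {\cal T}^{(k)} {\cal B}^{(l)} {\cal C}^{(l)} q^{2 \a S(0)}\bigr\}
        = \tr^\a \bigl\{\re^{\hat\Om_0}\, {\cal T}^{(k)}\, \re^{-\hat\Om} {\cal B}^{(l)} {\cal C}^{(l)} q^{2 \a S(0)}\bigr\} \epp
\end{equation}
Now I would apply the first identity of Lemma~\ref{lem:tralphavac} $k$ times, peeling off $\tv^*(\z_1^0),\dots,\tv^*(\z_k^0)$ one by one and collecting a factor $2$ at each step; at every stage the operator to the right of the current $\tv^*$ lies in ${\cal W}_\a$ (it has spin $0$ and tail $\a$, since $\tv^*$ and $\re^{-\hat\Om}$ preserve the spin-$0$ tail-$\a$ grading), so the lemma applies. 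This yields the factor $2^k$ and leaves $\tr^\a\bigl\{\re^{\hat\Om_0}\re^{-\hat\Om}{\cal B}^{(l)}{\cal C}^{(l)}q^{2\a S(0)}\bigr\} = \tr^\a\bigl\{\re^\Om {\cal B}^{(l)}{\cal C}^{(l)}q^{2\a S(0)}\bigr\}$.

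For the second equality I would induct on $l$. The base case $l=0$ is $\tr^\a\bigl\{\re^\Om q^{2\a S(0)}\bigr\}$; since the vacuum $q^{2\a S(0)}$ has length zero it is annihilated by every mode $\bv_p,\cv_p$ with $p\ge 1$ by (\ref{annihipropmodes}), hence $\Om\, q^{2\a S(0)} = 0$, so $\re^\Om q^{2\a S(0)} = q^{2\a S(0)}$ and $\tr^\a\bigl\{q^{2\a S(0)}\bigr\}=1$ by the normalization built into (\ref{defkappatrace}); this matches the empty determinant. For the inductive step I would fix $j=1$ in Lemma~\ref{lem:laplacerecursion}, obtaining exactly the cofactor expansion along the first row of the matrix $M_{mn}=\om(\z_m^+,\z_n^-;\a)$: each reduced trace $\tr^\a\bigl\{\re^\Om {\cal B}^{(l)}_1 {\cal C}^{(l)}_k q^{2\a S(0)}\bigr\}$ is, by the induction hypothesis applied to the remaining parameters, the $(l-1)\times(l-1)$ minor $\det M^{(1,k)}$, and the alternating signs $(-1)^{1+k}$ assemble these into $\det M$. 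I expect the genuine mathematical content to reside entirely in Lemma~\ref{lem:laplacerecursion}, which has already been established; the only points requiring real care here are the legitimacy of splitting $\re^\Om$ and moving ${\cal T}^{(k)}$ through it (the commutativity $[\hat\Om_0,\hat\Om]=0$ and $[\tv^*,\hat\Om]=0$) and the verification that all intermediate operators stay in the spaces on which Lemma~\ref{lem:tralphavac} is valid, so that the telescoping of the factors $2$ is justified.
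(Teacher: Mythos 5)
Your proposal is correct and takes essentially the same route as the paper's proof: the factor $2^k$ is obtained by commuting ${\cal T}^{(k)}$ through $\re^{-\hat \Om}$ (using $\re^\Om = \re^{\hat \Om_0} \re^{-\hat \Om}$) and applying the first identity of Lemma~\ref{lem:tralphavac} $k$ times, and the determinant follows from $\re^\Om q^{2 \a S(0)} = q^{2 \a S(0)}$ together with Lemma~\ref{lem:laplacerecursion} read, by induction on $l$, as a Laplace cofactor expansion. Your extra verifications (the commutativity $[\hat \Om_0, \hat \Om] = 0$ of the Fermionic bilinears, $[\tv^*, \hat \Om] = 0$, and the base case via the annihilation property) simply make explicit what the paper leaves implicit.
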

\begin{proof}
The operators ${\cal T}^{(k)}$ and $\re^{\hat \Om}$ commute.
It follows from Lemma~\ref{lem:tralphavac} that
\begin{equation}
     \tr^\a \bigl\{\re^{\Om} {\cal T}^{(k)} {\cal B}^{(l)}
                             {\cal C}^{(l)} q^{2 \a S(0)}\bigr\}
        = 2^k \tr^\a \bigl\{\re^{\Om} {\cal B}^{(l)} {\cal C}^{(l)}
	                    q^{2 \a S(0)}\bigr\} \epp
\end{equation}
Furthermore, since $\re^\Om q^{2 \a S(0)} = \id$,
Lemma~\ref{lem:laplacerecursion} in conjunction with the Laplace
expansion formula for determinants implies that
\begin{equation}
     \tr^\a \bigl\{\re^{\Om} {\cal B}^{(l)} {\cal C}^{(l)}
                   q^{2 \a S(0)}\bigr\} =
        \det_{m, n = 1, \dots, l} \bigl\{\om(\z_m^+, \z_n^-; \a)\bigr\} \epp
\end{equation}
\end{proof}

Now recall that the function $\r$ was originally defined as
an eigenvalue ratio in (\ref{defrho}). If the eigenvalue
in the definition of $\r$ is non-degenerate, $\r$ is an
even function of its second argument $\k$. This is, for instance,
the case with the dominant eigenvalue of the quantum transfer
matrix that occurs in the description of the finite temperature
reduced density matrix. Then (\ref{defrho}) implies that
$\r \bigr|_{\k = - \a/2} = 1$. Combining Corollary~\ref{cor:zkappagenfns}
and Lemma~\ref{lem:expformonbasis} we conclude that 
\begin{equation}
     Z^{-\a/2} \bigl\{{\cal T}^{(k)} {\cal B}^{(l)}
                      {\cal C}^{(l)} q^{2 \a S(0)}\bigr\} =
     \tr^\a \bigl\{\re^{\Om} {\cal T}^{(k)} {\cal B}^{(l)}
                   {\cal C}^{(l)} q^{2 \a S(0)}\bigr\} \epp
\end{equation}
Inserting the mode expansions and using Theorem~\ref{thm:basis}
we arrive at equation (\ref{relztralpha}) and (\ref{expkappanull})
of the main text.

We claim that (\ref{expkappanull}) remains valid for non-zero
values of $\k$ if we restrict the action of $Z^\k$ to spin-reversal
invariant operators such as $\s_1^z\s_n^z$ or $\s_1^x\s_n^x$.
We have checked this by direct use of the Fermionic basis for
$n = 1, 2, 3$ \cite{Kleinemuehl20}. It also follows if we assume
the existence of an operator $\tv (\z)$ conjugate to $\tv^* (\z)$
\cite{Kleinemuehl20}. Such an operator has been defined by its
properties in \cite{BoGo09} in the inhomogeneous case and in
\cite{Kleinemuehl20} in the homogeneous case. In the inhomogeneous
case we have checked that the postulated properties are
sufficient to fix the operator $\tv (\z)$ for $n = 1, 2$ in
\cite{BoGo09} and for $n = 3$ in \cite{Weisse09}.

\section{Multiplicative and additive spectral parameters}
\label{app:multtoadd}
\setcounter{equation}{0}
The following formula can be used to switch from the definition
of the function $\om$ with multiplicative spectral parameters
to the corresponding function $\hat \om$ with additive spectral
parameters favoured in \cite{BDGKSW08}. Let $u = \ln (\z)$. Then
\begin{multline}
     \6_{\z^2}^k \z^{2 \ell} f \bigl(\ln (\z)\bigr) =
	\biggl( \2 \re^{-2u} \6_u \biggr)^k \re^{2 u \ell} f(u) = \\
	\re^{2u (\ell -k)} (-1)^k \bigl( - \ell - \tst{\2} \6_u \bigr) \dots
	   \bigl( - \ell - \tst{\2} \6_u + k - 1 \bigr) f(u)
\end{multline}
for all $\ell \in {\mathbb Z}$. Introducing
the Pochhammer symbol
\begin{equation}
     (x)_k = x (x + 1) \dots (x + k - 1)
\end{equation}
we obtain
\begin{equation}
     \6_{\z^2}^k \, \z^{2 \ell} f \bigl(\ln (\z)\bigr)\bigr|_{\z^2 = 1} =
        (-1)^k \bigl( - \ell - \tst{\2} \6_u \bigr)_k f(u) \bigr|_{u = 0} \epc
\end{equation}
which can be nicely implemented on a computer.

\section{\boldmath Two-point functions for $n = 5$}
\setcounter{equation}{0}
\label{app:n5_explicit}
This appendix contains the explicit expressions for the two
independent two-point functions $\<\s_1^x \s_5^x\>$ and
$\<\s_1^z\s_5^z\>$.
\newcommand{\omegahat}{\hat \omega}%
In the following we shall employ the shorthand notation
$\omegahat_{ij} = \partial_\la^i\partial_\mu^j\omegahat(\la,\mu)_{\la=\mu=0}$ and
$\omegahat_{ij}' = \partial_\la^i\partial_\mu^j\omegahat'(\la,\mu)_{\la=\mu=0}$.

\end{appendices}


\providecommand{\bysame}{\leavevmode\hbox to3em{\hrulefill}\thinspace}
\providecommand{\MR}{\relax\ifhmode\unskip\space\fi MR }
\providecommand{\MRhref}[2]{%
  \href{http://www.ams.org/mathscinet-getitem?mr=#1}{#2}
}
\providecommand{\href}[2]{#2}

\end{document}